\DeclarePairedDelimiter{\ceil}{\lceil}{\rceil}
\let\mathscr\mathbscr
\DeclareMathOperator*{\argmin}{arg\,min}
\newcolumntype{x}[1]{>{\centering\arraybackslash}p{#1}}
\newcommand{\indep}{\rotatebox[origin=c]{90}{$\models$}}
\declaretheoremstyle[headfont=\bfseries, 
    bodyfont=\normalfont]{normalhead}
\declaretheorem[style=normalhead]{Example}
\newtheorem{Theorem}{Theorem}
\newtheorem{Proposition}{Proposition}
\newtheorem{Lemma}{Lemma}
\newtheorem{Remark}{Remark}
\newtheorem{Definition}{Definition}
\newtheorem{Claim}{Claim}
\begin{document}

\title{A New Achievable Rate-Distortion Region for Distributed Source Coding}
\author{Farhad Shirani  and S. Sandeep Pradhan\\
Dept. of Electrical Engineering and Computer Science \\
 Univ. of  Michigan, Ann Arbor, MI. \\\date{} }

\maketitle \thispagestyle{empty} \pagestyle{plain}

\begin{abstract}
In this work, lossy distributed compression of a pair of correlated sources is considered. Conventionally, Shannon's random coding arguments --- using randomly generated unstructured codebooks whose  blocklength is taken to be asymptotically large --- are used to derive achievability results. However,  in some multi-terminal communications scenarios, using random codes with constant finite blocklength in certain coding architectures may lead to improved achievable regions compared to the conventional approach. In other words, in some network communication scenarios, there is a finite optimal value in the blocklength of the randomly generated code used for distributed processing of information sources. Motivated by this, a coding scheme is proposed which consists of two codebook layers: i) the primary codebook which has constant finite blocklength, and ii) the secondary codebook whose blocklength is taken to be asymptotically large. 
The achievable performance  is analyzed in two steps. In the first step, a characterization of an inner bound to the achievable region is derived in terms information measures which are functions of multi-letter probability distributions. In the next step, a computable single-letter inner-bound to the achievable region is extracted. It is shown through an example that the resulting rate-distortion region is strictly larger than the Berger-Tung achievable region. 
\end{abstract}
\section{introduction}
Distributed source coding (DSC) involves the compression and storage of several physically separated, statistically correlated source sequences into a shared storage unit. \textcolor{black}{In the classical lossy distributed source coding problem, shown in Figure \ref{Fig: DSC}, two distributed encoders observe a pair of correlated sources and communicate a compressed version of their respective source sequences to a joint decoder.} The decoder then wishes to produce a faithful reconstruction of the two sources relative to a fidelity criterion. Distributed source coding techniques are widely used in various practical settings such as video coding \cite{puri2007prism, girod2005distributed}, sensor networks \cite{xiong2004distributed,akyildiz2007wireless}, and relay-based data transmission systems \cite{aljohani2016distributed}.

 The correlation between the distributed information sources is a key resource in the DSC problem. \textcolor{black}{Absent any correlation (i.e. $X_1~ \indep ~ X_2$), the distributed encoders are unable to collaborate with each other, and the optimal rate-distortion performance is that of two separate point-to-point lossy source compression schemes \cite{ElGamalLec}.} In the other extreme, when the two sources are fully correlated (i.e. $X_1=X_2$), the encoders can operate with complete collaboration, and achieve the same rate-distortion performance as a centralized encoder.
 A well-designed DSC scheme leverages the correlation between the pair of source sequences to facilitate collaboration between the distributed encoders,  and achieve a reliable reconstruction of the sources at the joint decoder while minimizing the transmission rates of the encoders. Our objective is to design such a DSC scheme and to characterize the optimal rate-distortion trade-off. 

\begin{figure}
\includegraphics*[draft=false,scale=.8]{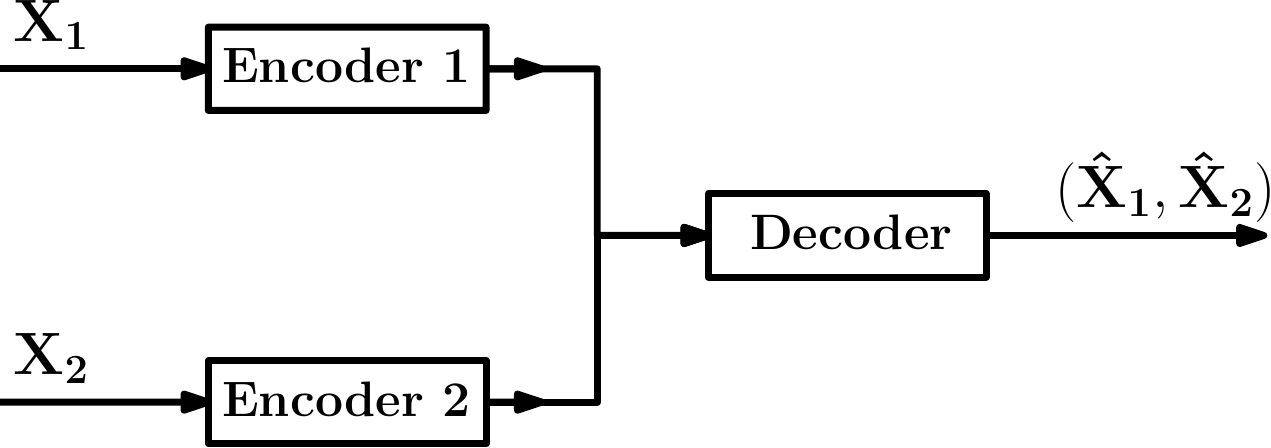}
\centering 
\caption{General Lossy Distributed Source Coding}
\label{Fig: DSC}
\end{figure}

Slepian and Wolf \cite{SW} laid the foundations of DSC by characterizing the optimal rate region in the lossless reconstruction regime which is achieved using a random binning scheme. \textcolor{black}{Building upon this, Berger and Tung \cite{Markov} considered the general lossy DSC setup and introduced a coding scheme, which is now referred to as the `\textit{Berger-Tung}' (BT) coding scheme. A characterization of the asymptotic  performance of the BT scheme results in the best-known inner bound to the optimal rate-distortion (RD) region. In this strategy, the two encoders use a pair of randomly and independently generated codebooks to quantize the source sequences using the conventional typicality-based  encoding method.} The outputs of these quantizers are binned randomly and independently, and the bin indices are transmitted to the decoder. The decoder reconstructs the quantized sequences by finding the unique pair of jointly typical sequences in the corresponding bins, where typicality is measured based on the joint distribution imposed on the quantized sequences through the distributed quantization process.\textcolor{black}{ The binning step reduces the transmission rates of the encoders  by leveraging the \textit{correlation} among the quantized sequences.}
Loosely speaking, higher correlation among the quantized sequences allows for more efficient binning, and leads to lower transmission rates. Consequently, the encoders in the BT scheme must preserve the correlation among the input sequences  during the quantization step. \textcolor{black}{However, independent codebook generation in the BT scheme leads to the so-called \textit{long Markov chain} which limits the correlation among the quantized sequences.} To elaborate, let $U_1$ and $U_2$ represent the single-letter random variables corresponding to the quantizations of $X_1$ and $X_2$, respectively. The long Markov chain $U_1 - X_1 - X_2 - U_2$ implies that conditioned on the sources, the single-letter distribution of the quantized versions of the sources decomposes into a product of conditional marginal distributions. \textcolor{black}{On the other hand, an outer bound to the achievable RD region for distributed source coding can be provided by modifying the single-letter characterization of the BT region \cite{berger1978multiterminal}, where 
primarily  the long Markov chain is replaced by a pair of short Markov chains $X_1 - X_2 - U_2$ and $U_1 - X_1 - X_2$.
}
This observation suggests that one approach to possibly improve upon the BT achievable region is to `\textit{break}' the long Markov chain.

\textcolor{black}{A scenario where the long Markov chain may be relaxed is when a common component is present in the distributed sources \cite{wagner}. A common component of the sources $X_1$ and $X_2$ is a random variable $V=f(X_1)=g(X_2)$ which can be computed independently at each of the distributed terminals. In \cite{wagner}, the `\textit{Common Component}' (CC) coding scheme was introduced, where at each terminal the common component sequence is quantized using  randomly generated identical codebooks. Consequently, both encoders compute the same quantized sequence. The encoders fully collaborate to send this sequence to the central decoder. The quantized sequence is treated as side-information available at all transmitter and receiver terminals, and the rest of the coding scheme is implemented in a similar fashion as the BT scheme.}  \textcolor{black} {This leads to an achievable region called the CC region, where the long Markov chain in the BT region is relaxed. To elaborate, let $W$ be the random variable corresponding the quantization of the common component sequence. The long Markov chain $U_1 - X_1 - X_2 - U_2$ in the BT achievable region is relaxed to $U_1 - (W,X_1) - (W,X_2) - U_2$ in the CC region. In the absence of common components, the CC scheme reduces to the BT scheme. It was shown that the CC achievable region is discontinuous with respect to the joint distribution $P_{X_1,X_2}$ of the sources \cite{wagner}.} The reason is that the common component is fragile, and slight perturbations in the source probability distribution $P_{X_1,X_2}$ can create or eliminate common components with large entropy. As a result, the CC achievable RD region shrinks discontinuously in source probability distribution as common components are replaced with highly correlated components. On the other hand, it is known that the optimal achievable region is continuous with respect to the joint source distribution. Therefore, it was pointed out in \cite{wagner} that the CC rate-distortion region cannot be optimal. However, it was not clear how to achieve points outside of the CC rate-distortion region.

As noted earlier, CC scheme uses identical codebooks for quantizing the common component in the distributed terminals. As a result, the two encoders compute the same quantization sequence, and there is no loss of correlation. \textcolor{black}{However, if common components are replaced by highly correlated components, then it is known that blockwise processing of the correlated sequences leads to a significant loss of correlation when the blocklength is asymptotically large.} It was first observed by G\'acs, K\"orner and Witsenhausen \cite{ComInf1,ComInf2} that coding over blocks 
decreases distributed correlation. To elaborate, consider the simplified version of the DSC problem where the objective of the distributed encoders is to encode the observations into a single bit. Let $e$ and $f$ denote the encoding functions associated with the two encoders. We wish to maximize correlation between the outputs such that $H(e(X^n))>0$ and  $H(f(X^n))>0$, where $H(\cdot)$ is the entropy function. It was shown that maximum correlation is achieved when the output depends only on one of the input samples at both encoders. In fact, any block mapping strictly reduces the correlation between the output bits. In summary, uncoded mappings (mappings with blocklength equals $1$) are optimal in terms of correlation preservation. This suggests that the use of random codes of constant finite blocklength may improve upon the BT coding scheme. In our preliminary work \cite{FinLen}, a new inner bound to the achievable RD region was derived which strictly improved upon the BT region. Building on this work, new coding strategies and achievable rate regions have been derived for transmission of sources over the multiple access channel \cite{padakandla2016communicating} and the interference channel \cite{padakandla2017communicating}. In \cite{chaharsooghi2017sub,chaharsooghi2017correlation,shirani2019sub}, we used these observations to prove the suboptimality of the conventional single-letter coding schemes in various multiterminal communication scenarios. This generalizes a result shown by Dueck \cite{dueck1981note} in the case of transmission of sources over MAC, where the suboptimality of the Cover-El Gamal-Salehi scheme \cite{cover1980multiple} was proved by providing a counter-example.

\textcolor{black}{
In this paper,  we provide a new coding scheme for the two user DSC problem. The scheme uses a novel two-layered coding approach, where the first code layer consists of randomly generated codes with constant finite blocklengths, and the second code layer consists of  randomly generated codes with asymptotically large blocklengths.} The two coding layers are interleaved using a new interleaving method which we call \textit{Finite Length Matrix Coding} (FLMC). The interleaving step is necessary in order to facilitate the performance analysis of the proposed scheme, and to derive a computable characterization of the resulting achievable RD region. Roughly speaking, the first coding layer, which utilizes codes with constant finite blocklength, is used to quantize highly correlated components of the sources. 
As the blocklength of this code layer is increased, the covering efficiency increases
(due to law of large numbers), while the correlation preserving efficiency decreases  (following 
G\'acs, K\"orner and Witsenhausen \cite{ComInf1,ComInf2}), and hence a trade-off 
between the two is manifested. \textcolor{black}{
There is a sweet spot for the blocklength where the overall system efficiency achieves its maximum. 
}

In the extreme case when the highly correlated components are common components, the blocklength in the first layer is taken to be asymptotically large, and the scheme transposes into the CC scheme. The second coding layer is similar to the 
codes used in the BT scheme. The rate-distortion region of our proposed scheme is first characterized using information measures expressed as functions of multi-letter distributions. In the next step, a computable single-letter inner bound to the achievable region is derived. This is the main result of the paper, and is given in Theorem \ref{thm:MC}. It is shown that this inner bound strictly contains the CC and BT achievable regions. We provide an example where the proposed rate-distortion region is evaluated. 

The rest of the paper is organized as follows: Section \ref{Sec:Not} introduces the notations and describes the problem formulation and some of the prior works.  Section \ref{Sec:Main} presents the main results of the paper. Section \ref{sub:ach} and \ref{Sec:Ach} contain the proof of achievability of the new rate-distortion region.
\textcolor{black}{Section \ref{Sec:BOHO} provides an example where the new achievable region is shown to strictly contain the BT and CC regions.}
Section \ref{Sec:Conc} concludes the paper.

\section{Preliminaries}
\label{Sec:Not}
\subsection{Notation}
\textcolor{black}{Random variables are denoted by capital letters such as $X, U$. Sets are denoted by calligraphic letters such as $\mathcal{X}, \mathcal{U}$.} \textcolor{black}{ The expectation operator  is denoted by $\mathbb{E}(\cdot)$. The set of natural numbers and real numbers are shown by $\mathbb{N}$, and $\mathbb{R}$, respectively. 
The finite set $\{1,2,\ldots,n\}$ is denoted by $[1,n]$.}
The binary entropy function is denoted by $h_b(\cdot)$. \textcolor{black}{Convex closure is denoted by
$\mbox{cl}(\cdot)$.}
A sequence of length $n$ is denoted by $x^n$. The  $i${th} element of the vector $x^n$ is denoted by $x(i)$, \textcolor{black}{and the subsequence consisting of $x(i),x(i+1),\cdots,x(j)$ is shown by $x{[i,j]}$ for $i<j$. The term $N(a,b|x^n,y^n)$ denotes the number of occurrences of the symbol pair $(a,b)$ in the sequence pair $(x^n,y^n)$. 
} A two dimensional matrix of size $n\times m$ is denoted by $x^{n,m}$. For a binary string $x^n$, the quantity, $w_H(x^n)$, denotes its Hamming weight.  For an alphabet $\mathcal{X}$, the generalized Hamming distortion function is defined as $d_{\mathcal{X}}:\mathcal{X}\times\mathcal{X}\to \mathbb{R}^+$, where:
\[d_{\mathcal{X}}(x,\widehat{x})=
\begin{cases}
0\qquad& \text{ if } x=\widehat{x},\\
1 & \text{otherwise.}
\end{cases}
\]
 
\subsection{Problem Formulation}
\label{Sec:PS}
The two-user distributed source coding problem is depicted in Figure \ref{Fig: DSC}. 
\textcolor{black}{Consider a pair of correlated discrete memoryless sources $X_1$ and $X_2$ with alphabets $\mathcal{X}_1$ and $\mathcal{X}_2$, respectively, and joint probability distribution $P_{X_1,X_2}$. Let $\widehat{\mathcal{X}}_i$, $i= 1,2$, denotes the $i$th reconstruction alphabet.  There are two encoders and a decoder. The pair of encoders observe the sources in a distributed manner. To elaborate, the $i$th encoder observes a sequence of independent and identically distributed source variables $X_i^n$ for $i\in \{1,2\}$, generated from $P_{X_1,X_2}$, where $n$ denotes blocklength. The $i$th encoder maps the corresponding set of $n$-length sequences into a finite set of indices, and transmits the resulting index to the centralized decoder.
 The decoder produces a pair of reconstruction sequences $(\widehat{X}^n_1,\widehat{X}^n_2)$ taking values in 
 $\widehat{\mathcal{X}}_1^n \times \widehat{\mathcal{X}}_2^n$. The reconstruction is evaluated based on a pair of single-letter additive bounded distortion functions $d_i\!:\!\mathcal{X}_i\times \widehat{\mathcal{X}}_i\rightarrow \mathbb{R}^+,i=1,2$. In the following, we formally define a pair of distributed sources.}

\begin{Definition}[\textcolor{black}{\textbf{Distributed Source}}]
 \textcolor{black}{A distributed source is a tuple $(\mathcal{X}_1,\mathcal{X}_2,\widehat{\mathcal{X}}_1,\widehat{\mathcal{X}}_2,P_{X_1,X_2},d_1,d_2)$, where the finite sets $\mathcal{X}_i$ and $\widehat{\mathcal{X}}_i, i\in \{1,2\}$, are the source and reconstruction alphabets, respectively, $P_{X_1,X_2}$ is the joint  probability distribution of the source, and $d_i:\mathcal{X}_i\times\widehat{\mathcal{X}}_i\to\mathbb{R}^+, i\in \{1,2\}$ are the bounded distortion functions.} 
\end{Definition}


\begin{Remark}
For ease of notation, from now on, we restrict our attention to the case where the reconstruction alphabets are the same as the input alphabets. Hence, we often denote a distributed source by $(\mathcal{X}_1,\mathcal{X}_2,P_{X_1,X_2},d_1,d_2)$.  The results presented here can be extended to the general case in a straightforward way. 
 \end{Remark}

\begin{Definition}[\textcolor{black}{\textbf{DSC Code}}]
\label{Def:DSC}
\textcolor{black}{An $(n,\Theta_1,\Theta_2)$ code for  a given distributed source $(\mathcal{X}_1,\mathcal{X}_2,P_{X_1,X_2},d_1,d_2)$ consists of a pair of encoding functions
 \begin{align*}
 e_{i}^{(n)}:\mathcal{X}_i^n\to \{1,2,\cdots ,\Theta_i\},\quad i\in \{1,2\},
\end{align*}
and a pair of decoding functions}
\begin{align*}
\textcolor{black}{ f_i^{(n)}: \{1,2,\cdots,\Theta_1\}\times \{1,2,\cdots,\Theta_2\}\to \mathcal{X}_i^n,\quad i\in\{1,2\}.} 
\end{align*}
\end{Definition}
\textcolor{black}{In the following, we give a definition of the achievable rate-distortion (RD) region for the DSC setup.}
\begin{Definition}[\textcolor{black}{\textbf{Achievable RD Region}}]
 \textcolor{black}{For a given distributed source $(\mathcal{X}_1,\mathcal{X}_2,P_{X_1,X_2},d_1,d_2)$, 
 an RD tuple $(R_1,R_2,\Delta_1,\Delta_2)$ is said to be achievable if for all $\epsilon>0$, and all sufficiently large $n$, there exists 
 an $(n,\Theta_1,\Theta_2)$ code such that:
 \begin{align*}
\frac{1}{n} \log \Theta_i \leq R_i+\epsilon, \quad 
 \frac{1}{n}\sum_{j=1}^n \mathbb{E}\left(d_i\left(X_i(j),\widehat{X}_i(j)\right)\right)\leq \Delta_i+\epsilon, \quad i\in \{1,2\},
 \end{align*}
 where the reconstruction vectors 
 $\widehat{X}_1^n$ and $\widehat{X}_2^n$ are given by
\begin{align*}
 \widehat{X}_i^n=f_i^{(n)}\left(e_1^{(n)}(X_1^n),e_2^{(n)}(X_2^n)\right),\quad  i\in \{1,2\}.
\end{align*}
\textcolor{black}{The set of all achievable RD tuples is called the RD region and is denoted by $\mathcal{RD}^*(\mathcal{X}_1,\mathcal{X}_2,P_{X_1,X_2},d_1,d_2)$.}
\label{Def:ach}}\footnote{\textcolor{black}{We often write  $\mathcal{RD}^*(P_{X_1,X_2},d_1,d_2)$ to denote the RD region when the alphabets $\mathcal{X}_1$ and $\mathcal{X}_2$ are clear from the context.}}
\label{def:RD_Region}
\end{Definition}

 \textcolor{black}{It can be noted from the above definition that the RD region is closed and convex. We propose a new coding scheme, and derive an inner bound to the achievable RD region for the DSC setup described in Definition \ref{Def:DSC}.  The derivation includes studying the DSC in the presence of side-information (DSC-SI) which is formalized below. }

\begin{Definition}[\textcolor{black}{\textbf{Distributed Source with Side-information}}]
\textcolor{black}{A distributed source with side-information is a tuple $(\mathcal{X}_1,\mathcal{X}_2,{\mathcal{Y}_1},{\mathcal{Y}_2}, P_{X_1,X_2,{Y}_1,{Y}_2},d_1,d_2)$, where $(\mathcal{X}_1,\mathcal{X}_2, P_{X_1,X_2},d_1,d_2)$ is a distributed source, and ${Y}_1$ and ${Y}_2$ are the side-information with alphabets $\mathcal{Y}_1$ and $\mathcal{Y}_2$, respectively, distributed according to $P_{Y_1,Y_2|X_1,X_2}$. The side-information $Y_1$ and $Y_2$ are 
available at Encoder 1 and Encoder 2, respectively. The side-information pair $(Y_1,Y_2)$ is available at the decoder. }
\end{Definition}
 \begin{Definition}[\textcolor{black}{\textbf{DSC-SI Code and Achievability}}]
 \label{Def:DSCSI}
 \textcolor{black}{An $(n,\Theta_1,\Theta_2)$ code for a distributed source with side-information $(\mathcal{X}_1,\mathcal{X}_2,{\mathcal{Y}_1},{\mathcal{Y}_2},P_{X_1,X_2,Y_1,Y_2},d_1,d_2)$ consists of a pair of encoders
 \begin{align*}
 e_{i}^{(n)}:\mathcal{X}_i^n\times {\mathcal{Y}_i}^n\to \{1,2,\cdots ,\Theta_i\}, \quad i\in \{1,2\},
\end{align*}
and a pair of decoders 
\begin{align*}
 f_i^{(n)}: \{1,2,\cdots,\Theta_1\}\times \{1,2,\cdots,\Theta_2\}\times {\mathcal{Y}_1}^n\times {\mathcal{Y}_2}^n\to \mathcal{X}_i^n,\quad i\in\{1,2\}.
\end{align*}
An  RD tuple $(R_1,R_2,\Delta_1,\Delta_2)$ is said to be achievable if for all $\epsilon>0$, and all sufficiently large n, there exists an $(n,\Theta_1,\Theta_2)$ code such that:
\begin{align*}
\frac{1}{n} \log \Theta_i \leq R_i+\epsilon, \quad 
 \frac{1}{n}\sum_{j=1}^n \mathbb{E}\left(d_i\left(X_i(j),\widehat{X}_i(j)\right)\right)\leq \Delta_i+\epsilon, \quad i\in \{1,2\},
 \end{align*}
where the reconstruction vectors $\widehat{X}_1^n$ and $\widehat{X}_2^n$ are given by
\begin{align*}
 \widehat{X}_i^n=f_i^{(n)}(e_1^{(n)}(X_1^n,{Y}_1^n),e_2^{(n)}(X_2^n,{Y}_2^n),Y_1^n,Y_2^n),\quad  i\in \{1,2\}.
\end{align*}
The achievable RD region 
$\mathcal{RD}^*(P_{X_1,X_2,Y_1,Y_2},d_1,d_2)$ for the DSC-SI setup is defined as the set of all achievable RD tuples.}
\end{Definition}
\begin{Remark}
 \textcolor{black}{\textcolor{black}{An alternative formulation of the DSC-SI setup considered in prior works characterizes the problem by a tuple $(\mathcal{X}_1,\mathcal{X}_2,{\mathcal{Y}},P_{X_1,X_2,Y},d_1,d_2)$, where $Y$ is the side-information available at the decoder. It should be noted that the two formulations lead to equivalent source coding setups.} }
\end{Remark}
\subsection{Prior  Works}
\label{Sec:PR}
\textcolor{black}{
In this section, we describe relevant prior works on the DSC problem.  
We first describe the  CC achievable region which is the best known inner bound to the achievable RD region for the DSC setup. 
 \begin{Definition}[Common Component]
For two sources $X_1$ and $X_2$, the common information between them is defined as:
\begin{align}
    \label{eq:CC}
K(X_1;X_2) \triangleq  \max_{\{(f_1,f_2): V=f_1(X_1)=f_2(X_2)\}} H(V).
\end{align}
The sources are said to have a (non-trivial) common component if $K(X_1;X_2)>0$. Any non-trivial random variable $V$ which can be written as $V=f_1(X_1)=f_2(X_2)$ with probability one is called a common component of the sources.
\end{Definition}}

\begin{Theorem}[\textcolor{black}{\textbf{DSC-CC Achievable Region \cite{wagner}}}]
\textcolor{black}{For a distributed source $(\mathcal{X}_1,\mathcal{X}_2,P_{X_1,X_2},d_1,d_2)$, let $\mathcal{P}_{CC}$ denote the collection of conditional distributions $P_{W,U_1,U_2,\widehat{X}_1,\widehat{X}_2|X_1X_2}$ defined on $\mathcal{W}\times \mathcal{U}_1 \times \mathcal{U}_2 \times {\mathcal{X}_1} \times {\mathcal{X}_2}$, for some finite sets $\mathcal{W}, \mathcal{U}_1, \mathcal{U}_2 $, 
such that  (i) $W-S-(X_1,X_2)\ $ form a Markov chain, where $S$ is a common component of the sources $X_1$ and $X_2$, (ii) $U_1 - (X_1,W) - (X_2,W) - U_2$ form a Markov chain, 
 and (iii) $\widehat{X}_i=g_i(W,U_1,U_2)$, for single-letter functions $g_i, i\in\{1,2\}$. For a distribution,  $P_{W,U_1,U_2,\widehat{X}_1,\widehat{X}_2|X_1,X_2} \in \mathcal{P}_{CC}$, let $\alpha(P_{W,U_1,U_2,\widehat{X}_1,\widehat{X}_2|X_1,X_2})$ denote the set of RD tuples  $(R_1,R_2,\Delta_1,\Delta_2) \in [0,\infty)^4$ that  satisfy 
\begin{align}
R_1 &\geq I(X_1;U_1|W,U_2), \label{eq:CC1} \\
R_2 &\geq I(X_2;U_2|W,U_1), \label{eq:CC2} \\
R_1+R_2 &\geq I(X_1,X_2;U_1,U_2,W), \label{eq:CC3}\\
\Delta_i&\geq \mathbb{E}( d_i(X_i,\widehat{X}_i)),\quad  i\in\{1,2\},   \label{eq:CC4}
\end{align}
where the mutual information terms and the expectations are evaluated  with
$P_{X_1,X_2}P_{W,U_1,U_2,\widehat{X}_1,\widehat{X}_2|X_1,X_2}$. The CC rate-distortion region is defined as
\[
\mathcal{RD}_{CC}(P_{X_1,X_2},d_1,d_2) = \mbox{cl} \left(  \bigcup_{P_{W,U_1,U_2,\widehat{X}_1,\widehat{X}_2|X_1,X_2} \in
    \mathcal{P}_{CC}} \alpha(P_{W,U_1,U_2,\widehat{X}_1,\widehat{X}_2|X_1,X_2})\right).
\]
\label{Def:CC}
The CC region is achievable: }
\[\textcolor{black}{\mathcal{RD}_{CC}(P_{X_1,X_2},d_1,d_2)\subset \mathcal{RD}^*(P_{X_1,X_2},d_1,d_2).}\]
\label{thm:CC}
\end{Theorem}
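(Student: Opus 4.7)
The plan is to prove achievability via a two-stage coding architecture: first, both encoders agree on a shared quantization $W^n$ of the common component $S^n$ by using \emph{identical} codebooks (which is possible since $S^n$ is, by definition, computable at both terminals), and second, they apply a Berger--Tung scheme conditioned on $W^n$ as shared side-information to convey the quantizations $U_1^n$ and $U_2^n$. The rate expenditure for $W$ will be rate-split between the two encoders so that the triple of constraints \eqref{eq:CC1}--\eqref{eq:CC3} can be simultaneously satisfied.

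More concretely, I would first generate a random codebook $\mathcal{C}_W$ of $\lceil 2^{nR_W}\rceil$ sequences drawn i.i.d.\ from $P_W$, with $R_W > I(S;W)$, so that by the covering lemma each encoder can (with high probability) locate an index $m$ such that $(S^n, W^n(m))$ is jointly $\epsilon$-typical. Because the Markov condition $W-S-(X_1,X_2)$ forces both encoders to observe identical $S^n$ and both access the same $\mathcal{C}_W$, they agree on the same $m$ up to a vanishing probability of ambiguity (resolved by a fixed tie-breaking rule). Next, using superposition on $W^n$, generate conditional codebooks $\mathcal{C}_{U_i}$ of size $\lceil 2^{n\tilde R_i}\rceil$ with $\tilde R_i > I(X_i;U_i|W)$, partitioned into $\lceil 2^{nR_i'}\rceil$ bins. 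Encoder $i$, upon observing $X_i^n$, recovers $W^n$ as above, finds a $U_i^n \in \mathcal{C}_{U_i}$ jointly typical with $(X_i^n,W^n)$, and transmits the bin index together with an allocated share of $m$. At the decoder, the index $m$ (reconstructed from the two shares) identifies $W^n$, and then standard Berger--Tung joint typicality decoding with side-information $W^n$ recovers $(U_1^n,U_2^n)$; the reconstructions are obtained by applying $g_i$ symbolwise.

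For the analysis, the key step is to split $R_W$ as $R_W = R_{W,1}+R_{W,2}$ with $R_{W,i}\geq 0$, and set the per-encoder rates to $R_i = R_{W,i} + R_i'$. The Berger--Tung-with-side-information bounds give $R_1' \geq I(X_1;U_1|W,U_2)$, $R_2' \geq I(X_2;U_2|W,U_1)$, and $R_1'+R_2' \geq I(X_1,X_2;U_1,U_2|W)$, and the chain rule together with the Markov condition $W-S-(X_1,X_2)$ yields $I(S;W)+I(X_1,X_2;U_1,U_2|W) = I(X_1,X_2;U_1,U_2,W)$, which matches the sum-rate bound \eqref{eq:CC3}. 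Standard covering and packing lemmas applied to the joint distribution induced by $P_{X_1,X_2}P_{W|S}P_{U_1|X_1,W}P_{U_2|X_2,W}$ handle the error analysis; the distortion bound \eqref{eq:CC4} then follows from the typical-average lemma since $\widehat{X}_i = g_i(W,U_1,U_2)$ is a symbolwise function.

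The main technical obstacle is showing that every $(R_1,R_2)$ satisfying \eqref{eq:CC1}--\eqref{eq:CC3} admits a feasible splitting $(R_{W,1},R_{W,2})$; this reduces to verifying $I(X_1;U_1|W,U_2)+I(X_2;U_2|W,U_1) \leq I(X_1,X_2;U_1,U_2|W)$, which is where the short Markov chain $U_1-(X_1,W)-(X_2,W)-U_2$ is essential: it makes $U_1$ and $U_2$ conditionally independent given $(W,X_1,X_2)$, allowing the sum of the conditional mutual informations to be bounded via a standard chain-rule manipulation. A secondary subtlety is ensuring that the two encoders select the \emph{same} $W^n$-codeword despite independent randomness in encoding; this is handled by requiring the covering rule to pick, say, the lexicographically smallest jointly-typical index and bounding the atypicality event with the covering lemma.
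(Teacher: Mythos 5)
The paper does not actually prove this theorem---it is stated with a citation to Wagner, Kelly, and Altu\u{g} \cite{wagner}, and the text following the theorem only sketches the scheme informally (identical codebooks for $W$, with $W^n$ then treated as side-information). Your blind reconstruction matches that sketch and the standard argument: quantize $S^n$ with a common codebook, rate-split the $W$ index across the two encoders, and run Berger--Tung with side-information $W^n$; the chain rule $I(S;W)+I(X_1,X_2;U_1,U_2\mid W)=I(X_1,X_2;U_1,U_2,W)$ together with the fact that $I(X_1;U_1\mid W,U_2)+I(X_2;U_2\mid W,U_1)\le I(X_1,X_2;U_1,U_2\mid W)$ (which equals $I(U_1;U_2\mid W)\ge 0$ under the short Markov chain) shows the Minkowski sum of the two pentagons is exactly the CC region, so your Fourier--Motzkin / rate-split argument goes through.

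One minor imprecision worth fixing: you write that ``the Markov condition $W-S-(X_1,X_2)$ forces both encoders to observe identical $S^n$.'' That is not what the Markov condition does. It is the \emph{common-component} property $S=f_1(X_1)=f_2(X_2)$ that makes $S^n$ identical at the two terminals; the Markov chain $W-S-(X_1,X_2)$ is what ensures the test channel for $W$ depends only on $S$, so that the two encoders, running the same deterministic covering rule on the same codebook and the same input $S^n$, produce the same $W^n$. Apart from that attribution slip, the argument is sound and is essentially the proof in \cite{wagner}.
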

\textcolor{black}{
The CC region uses the common-component of the two sources to `break' the long Markov chain $U_1 - X_1 - X_2 - U_2$ which appears in the BT achievable region. This improves upon the BT achievable region in the presence of common components among the sources.  
This is done by first quantizing the common-component at both encoders using an identical quantizer (i.e. same codebook corresponding to random variable $W$). Since both encoders have access to the common-component and they use the same quantizer,  they produce the same quantization vector $W^n$. This allows the encoders to use $W^n$ as side-information.
 }
\textcolor{black}{
The following theorem provides an inner bound to the achievable RD region for the distributed sources coding problem with side-information, which will be used in the subsequent sections.}
\begin{Theorem}[\textcolor{black}{\textbf{DSC-SI Achievable Region  \cite{varsheneya2006lossy}}}]
\textcolor{black}{
For a distributed source with side-information  $(P_{X_1,X_2,{Y}_1,{Y}_2},d_1,d_2)$, let $\mathcal{P}_{BTSI}$ denote the collection of conditional probability distributions $P_{Q,U_1,U_2,\widehat{X}_1,\widehat{X}_2|X_1,X_2,{Y}_1,{Y}_2}$ defined on $\mathcal{Q} \times \mathcal{U}_1 \times \mathcal{U}_2 \times {\mathcal{X}_1} \times {\mathcal{X}_2}$, for some finite sets $\mathcal{Q}, \mathcal{U}_1, \mathcal{U}_2$,
such that  (i) $U_1 - (X_1,{Y}_1,Q) - (X_2,{Y}_2,U_2)$ and $(U_1,X_1,{Y}_1) - (X_2,{Y}_2,Q)-U_2$ form  Markov chains, (ii) $Q$ is independent of $(X_1,X_2,{Y}_1,{Y}_2)$, and (iii) $\widehat{X}_i=g_i(U_1,U_2,{Y}_1,{Y}_2,Q)$ for single-letter functions $g_i, i\in \{1,2\}$. For a $P_{Q,U_1,U_2,\widehat{X}_1,\widehat{X}_2|X_1,X_2,{Y}_1,{Y}_2} \in \mathcal{P}_{BTSI}$, let $\alpha(P_{Q,U_1,U_2,\widehat{X}_1\widehat{X}_2|X_1,X_2,Y_1,Y_2})$ denote the set of RD tuples  $(R_1,R_2,\Delta_1,\Delta_2) \in [0,\infty)^4$ that  satisfy 
\begin{align*}
 R_1&\geq I(X_1;U_1|U_2,{Y}_1,{Y}_2,Q),\\
 R_2&\geq I(X_2;U_2|U_1,{Y}_1,{Y}_2,Q),\\
 R_1+R_2&\geq I(X_1,X_2;U_1,U_2|{Y}_1,{Y}_2,Q), \\
 \Delta_i &\geq  \mathbb{E}(d_i(X_i,\widehat{X}_i)), \ i \in \{1,2\}  ,
\end{align*}
 where the mutual information terms and expectations are evaluated  with
$P_{X_1,X_2,Y_1,Y_2}P_{Q,U_1,U_2,\widehat{X}_1,\widehat{X}_2|Y_1,Y_2,X_1,X_2}$. The Berger-Tung with Side-information (BTSI) rate-distortion region is defined as
\[
\mathcal{RD}_{BTSI}(P_{X_1,X_2,Y_1,Y_2},d_1,d_2) = \mbox{cl} \left( \bigcup_{P_{Q,U_1,U_2,\widehat{X}_1,\widehat{X}_2|Y_1Y_2X_1X_2} \in
    \mathcal{P}_{BTSI}} \alpha(P_{Q,U_1,U_2,\widehat{X}_1\widehat{X}_2|Y_1,Y_2,X_1,X_2}) \right).
\]
The BTSI rate-distortion region is achievable: 
\[\mathcal{RD}_{BTSI}(P_{X_1,X_2,Y_1,Y_2},d_1,d_2)\subseteq \mathcal{RD}^*(P_{X_1,X_2,Y_1,Y_2},d_1,d_2).\]
\label{thm:BTSI}}
\end{Theorem}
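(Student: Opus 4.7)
The plan is to extend the Berger-Tung coding scheme to accommodate the distributed side-information by treating $(Y_1,Y_2)$ as additional observations used during encoding (with $Y_i$ at encoder $i$) and as decoder side-information for Wyner-Ziv style binning. Throughout, the time-sharing random variable $Q^n$ is drawn i.i.d.\ from $P_Q$ and revealed to both encoders and the decoder. The Markov conditions in the definition of $\mathcal{P}_{BTSI}$ ensure that conditioned on $(X_1^n,X_2^n,Y_1^n,Y_2^n,Q^n)$, the auxiliaries $(U_1^n,U_2^n)$ factor in the natural distributed way, so that independent codebook generation at the two encoders is consistent with the target joint distribution.

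Fix a $P_{Q,U_1,U_2,\widehat{X}_1,\widehat{X}_2|X_1,X_2,Y_1,Y_2}\in\mathcal{P}_{BTSI}$ and positive rates $R_i'>R_i$. First, I would generate, for each $i\in\{1,2\}$, a codebook of $2^{nR_i'}$ sequences $U_i^n$ drawn conditionally i.i.d.\ from $P_{U_i|Q}$ given $Q^n$. These codewords are independently partitioned uniformly at random into $2^{nR_i}$ bins. Encoder $i$, on observing $(X_i^n,Y_i^n)$ and the shared $Q^n$, looks for a codeword $U_i^n$ jointly typical with $(X_i^n,Y_i^n,Q^n)$ under $P_{X_i,Y_i,U_i|Q}$ and transmits its bin index. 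By a standard covering lemma, the probability that no such codeword exists vanishes provided
\begin{equation*}
R_i' > I(U_i;X_i,Y_i\mid Q),\quad i\in\{1,2\}.
\end{equation*}
The decoder, having $(Y_1^n,Y_2^n,Q^n)$ and the two bin indices, searches for the unique pair $(U_1^n,U_2^n)$ in the announced bins that is jointly typical with $(Y_1^n,Y_2^n,Q^n)$ under $P_{Y_1,Y_2,U_1,U_2|Q}$, and then forms $\widehat{X}_i(j)=g_i(U_1(j),U_2(j),Y_1(j),Y_2(j),Q(j))$ letter by letter.

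The error analysis combines the covering step above with a mutual-packing / Slepian-Wolf style argument for the binning. Conditioned on successful covering, the standard three error events (wrong $U_1^n$, wrong $U_2^n$, both wrong, jointly typical with the side-information in the corresponding bins) are bounded using the joint-typicality packing lemma, yielding the constraints
\begin{align*}
R_1'-R_1 &< I(U_1;U_2,Y_1,Y_2\mid Q),\\
R_2'-R_2 &< I(U_2;U_1,Y_1,Y_2\mid Q),\\
R_1'+R_2'-R_1-R_2 &< I(U_1;U_2\mid Q)+I(U_1,U_2;Y_1,Y_2\mid Q).
\end{align*}
Eliminating $R_1',R_2'$ via Fourier-Motzkin from the covering bounds together with the three binning bounds gives precisely the three rate inequalities in the theorem statement. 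The distortion constraints then follow from the typical average lemma applied to the bounded single-letter functions $d_i$, since the decoded $(U_1^n,U_2^n,Y_1^n,Y_2^n,Q^n)$ is jointly typical under $P_{U_1,U_2,Y_1,Y_2,Q}$ with probability approaching one.

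The main obstacle is verifying that the packing analysis carries through with the distributed side-information at the encoders, since the two codebooks are generated without reference to $(Y_1,Y_2)$ while the encoders exploit $Y_i$ in the covering step. The Markov conditions (i) of $\mathcal{P}_{BTSI}$ are precisely what is needed to show that the induced empirical joint distribution of $(X_1^n,X_2^n,Y_1^n,Y_2^n,U_1^n,U_2^n,Q^n)$ matches $P_{X_1,X_2,Y_1,Y_2}P_{Q,U_1,U_2|X_1,X_2,Y_1,Y_2}$ with high probability, so that the information quantities appearing in the Fourier-Motzkin elimination evaluate to the claimed single-letter expressions. Independence of $Q$ from $(X_1,X_2,Y_1,Y_2)$ in condition (ii) is needed to justify conditioning on $Q^n$ as a standard time-sharing variable.
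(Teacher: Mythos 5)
The paper does not prove Theorem~\ref{thm:BTSI}; it is stated as a known result, attributed to \cite{varsheneya2006lossy}, and then invoked as a black box in the achievability proof of the MCML region (Theorem~\ref{thm: MCDSC22}). Your sketch is the standard Berger--Tung argument with a time-sharing variable and encoder/decoder side-information, which is indeed the approach taken in the cited reference, and your Fourier--Motzkin elimination does produce the stated single-letter bounds: using the Markov chain $U_1-(X_1,Y_1,Q)-(X_2,Y_2,U_2)$ one verifies $I(U_1;X_1,Y_1|Q)-I(U_1;U_2,Y_1,Y_2|Q)=I(X_1;U_1|U_2,Y_1,Y_2,Q)$, and the sum-rate identity $I(U_1;X_1,Y_1|Q)+I(U_2;X_2,Y_2|Q)-I(U_1;U_2|Q)-I(U_1,U_2;Y_1,Y_2|Q)=I(X_1,X_2;U_1,U_2|Y_1,Y_2,Q)$ holds analogously via the two short Markov chains. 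The one step you gloss over is the Markov lemma: after both encoders succeed in the covering step, you must establish that the induced tuple $(U_1^n,U_2^n,X_1^n,X_2^n,Y_1^n,Y_2^n,Q^n)$ is jointly typical with respect to the \emph{target} joint distribution even though the two encoders choose their codewords separately; the factorization $P_{U_1,U_2|X_1,X_2,Y_1,Y_2,Q}=P_{U_1|X_1,Y_1,Q}\,P_{U_2|X_2,Y_2,Q}$ implied by condition~(i) is exactly what makes the Markov lemma applicable, so your proof is correct but should cite it explicitly rather than asserting the empirical distribution ``matches with high probability.''
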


\section{The New Achievable Rate-distortion Region}
\label{Sec:Main}
\textcolor{black}{
In this section, we provide a new coding scheme called the \textit{Finite Length Matrix Coding} (FLMC) scheme, and derive 
a new inner bound, given in 
Theorem \ref{thm:MC}, 
to the achievable RD region for the general DSC problem, which is the main result of the paper. Toward describing the scheme, we define \textit{$\epsilon$-correlated components} of the sources as follows:}
\begin{Definition}[\textcolor{black}{\textbf{$\epsilon$-Correlated Components}}]
\textcolor{black}{A pair of random variables $\left(S_{1},S_{2}\right)$,  each taking values from a finite set $\mathcal{S}$, is said to be a pair of $\epsilon-$correlated components of the source pair $(X_1,X_2)$, if there exists an $\epsilon\in [0,1]$ and functions $f_i: \mathcal{X}_i\to \mathcal{S}, i\in \{1,2\}$, such that:}
\[
\textcolor{black}{S_{1}=f_1(X_1), \ \ S_{2}=f_2(X_2), \ \ P(S_{1}=S_{2})\geq 1-\epsilon.}
\] 
\label{Def:eps}
\end{Definition}

\begin{Remark}
A CC is a special instance of $\epsilon$-correlated components of the two sources. To elaborate, let $S=f_i(X_i), i\in \{1,2\}$ be a common component of the source pair $(X_1,X_2)$. The pair $(f_1(X_1),f_2(X_2))$ is  a pair of  $\epsilon$-correlated components of $(X_1,X_2)$ for every $ \epsilon\in [0,1]$. Conversely,
if $(S_1,S_2)$ is a pair of $0$-correlated components of $(X_1,X_2)$, then $S=S_1=S_2$ is a common component of the sources. 
\end{Remark}


\textcolor{black}{
Given an $\epsilon\in [0,1]$, and a pair of $\epsilon$-correlated components of the sources, in the FLMC scheme, 
we construct a  
pair of encoder mappings $e_i^{(nm)}, i \in \{1,2\}$, and decoder mappings $f_i^{(nm)}, i \in \{1,2\}$, which operate on sequences of length $nm$, where $n$ is a fixed number and $m$ tends to infinity. The number $n$ is chosen to satisfy certain conditions with respect to $\epsilon$, as described in Theorem \ref{thm:MC}. The encoding and decoding mappings are based on  two layers of codes. To construct the first layer, we generate random quantizers which operate on sequences of length $n$. We quantize the $\epsilon$-correlated components of the sources by applying these randomly generated quantizers repeatedly on $m$ sub-blocks, each of length $n$. The second layer uses the conventional large blocklength random codes of length $m$ to quantize the sources by applying these quantizers repeatedly on $n$ sub-blocks, each of length $m$. 
 These second layer codes are similar to the codes which are used in the BT and CC schemes. The two coding layers are interleaved with each other using a new interleaving method which is described in the proof of Theorem \ref{thm:MC}. We  characterize the asymptotic performance of this scheme as $m$ tends to infinity using single-letter information quantities.  
 We call the resulting achievable RD region the Finite Length Matrix Coding rate-distortion region (FLMC-RD).}


\begin{Theorem} [\textcolor{black}{\textbf{FLMC Achievable RD}}]
\textcolor{black}{Consider a distributed source $(\mathcal{X}_1,\mathcal{X}_2,P_{X_1,X_2},d_1,d_2)$. Let $(S_1, S_2)$ be a pair of $\epsilon$-correlated components of $X_1$ and $X_2$ with alphabet $\mathcal{S}$, where $\epsilon \in [0,1]$. Let $\mathcal{P}_{FLMC}$ denote the collection of conditional probability distributions $P_{W,U_1,U_2,\widehat{X}_1,\widehat{X}_2|X_1,X_2}$ defined on $\mathcal{W}\times \mathcal{U}_1 \times \mathcal{U}_2 \times {\mathcal{X}_1} \times {\mathcal{X}_2}$, for some finite sets $\mathcal{W},\mathcal{U}_1$ and $\mathcal{U}_2$,
such that  (i) $|\mathcal{W}|\leq |\mathcal{S}|+1$, and $|\mathcal{U}_i|\leq |\mathcal{X}_i|\cdot|\mathcal{W}|+1$,  for $i\in \{1,2\}$,
(ii) $W-S_1-(X_1,X_2)\ $ form a Markov chain, (iii) $U_1 - (X_1,W) - (X_2,W) - U_2$ form a Markov chain, and (iv) $\widehat{X}_i=g_i(W,U_1,U_2), i\in\{1,2\}$, for single-letter functions
$g_i$.} 
\textcolor{black}{Define a function $\delta_k\triangleq1-(1-\epsilon)^k, \mbox{ for all } k\in \mathbb{N}$, and a set $\mathcal{B}(\epsilon)\triangleq\{n\in \mathbb{N}: \left(\frac{\sigma}{\sigma'}\right)^2  \leq n \leq  \frac{\log(2\epsilon)}{\log(1-\epsilon)}\}$. For a 
pair of $\epsilon$-correlated components $(S_1,S_2)$ of the sources $X_1,X_2$, a 
distribution $P_{W,U_1,U_2,\widehat{X}_1,\widehat{X}_2|X_1,X_2} \in \mathcal{P}_{FLMC}$, a parameter $n\in \mathcal{B}(\epsilon)$, and
a parameter $\tau \in (\frac{1}{\sqrt{n}}\sigma,\sigma')$,
let $\alpha_{FLMC}((S_1,S_2),P_{W,U_1,U_2,\widehat{X}_1,\widehat{X}_2|X_1,X_2},\tau,n)$ denote the set of rate-distortion tuples $(R_1,R_2,\Delta_1,\Delta_2) \in [0,\infty)^4$ that  satisfy
\begin{align}
\label{eq:01}R_1&\geq I(X_1;U_1|U_2,W) + E_{n,\delta_n} +\Gamma_{n,\delta_n}(\tau)+\Lambda_n(\tau),\\
\label{eq:02}R_2&\geq I(X_2;U_2|U_1,W) +E_{n,\delta_n}+\Gamma_{n,\delta_n}(\tau),\\
\label{eq:03}R_1+R_2&\geq I(X_1,X_2;U_1,U_2,W)+E_{n,\delta_n}+\Gamma_{n,\delta_n}(\tau)+\Lambda_n(\tau)+\theta_n(\tau),\\
\label{eq:04}\Delta_i&\geq \mathbb{E}\{d_i\big(X_i,\widehat{X}_i \big)\}+ 2\left(1-p(\tau)+p(\tau)\delta_n+\epsilon\right) d_{i,max}
, \ i\in \{1,2\},
\end{align}}
\noindent\textcolor{black}{ where:
\\i) $\sigma\triangleq \max(|\mathcal{S}|,|\mathcal{W}|)\sqrt{\ln{\frac{8\max(|\mathcal{S}|,|\mathcal{W}|)}{\min_{a,b\in\mathcal{S}\times\mathcal{W}}P_{S_1,W}(a,b)}}}
$ and $\sigma'\triangleq \frac{|\mathcal{S}||\mathcal{W}|}{2(|\mathcal{S}|+|\mathcal{W}|)}\min_{a,b\in\mathcal{S}\times\mathcal{W}}P_{S_1,W}(a,b)$,
\\ii) $p(\tau)
\triangleq
\frac{\min_{a,b\in\mathcal{S}\times\mathcal{W}}P_{S_1,W}(a,b)}{\min_{a,b\in\mathcal{S}\times\mathcal{W}}P_{S_1,W}(a,b)+4|\mathcal{S}|e^{-2n\frac{{\tau}^2}{|\mathcal{S}|^2}}+{{\tau}}\left(\frac{1}{|\mathcal{S}|}+\frac{1}{\mathcal{|W|}}\right)}$,
\\iii) { $\theta_n(\tau)$ is the finite blocklength random coding rate-loss due to quantizing the source $S_1$ using the test channel $P_{W|S_1}$, given by}
         \begin{align}
     \nonumber
     \theta_n(\tau)&\triangleq  \frac{1}{n}\log\left(2n\frac{\tau^2}{|\mathcal{S}|^2}- \ln{2|\mathcal{S}|}\right)
     \\
     &\qquad+\tau\left(\frac{1}{|\mathcal{W}|}\sum_{s\in \mathcal{S}}\sum_{w\in \mathcal{W}}\log{\frac{1}{P_{W|S_1}(w|s)}}
     +\frac{1}{|\mathcal{S}|}\sum_{s\in\mathcal{S}}H(W|S_1=s)+\frac{(|\mathcal{S}|+|\mathcal{W}|)}{|\mathcal{S}|}\sum_{s\in\mathcal{S}}\log{\frac{1}{P_{S_1}(s)}}\right)
     +\frac{1}{n}(|\mathcal{S}|+1),\label{eq:theta}
 \end{align}
\\iv) $\Gamma_{n,\delta_n}(\tau)\triangleq 8h_b({x})+8x\log{(|\mathcal{X}_1\times \mathcal{X}_2\times \mathcal{U}_1\times \mathcal{U}_2\times \mathcal{W}|)}$, where $x=\left(1-p(\tau)+p(\tau)\delta_n+\epsilon\right)$,
\\v) $\Lambda_n(\tau)\triangleq
{h_b\left(p(\tau)\right)+(1-p(\tau))\log{|\mathcal{W}|}}$, \\vi) $E_{n,\delta_n}\triangleq \frac{1}{n}{h_b(\delta_n)}+\delta_n \log|\mathcal{W}|$, \\
vii) $d_{i,max}=max_{x_i,\hat{x}_i}d_i(x_i,\hat{x}_i), i\in \{1,2\}$, 
\\and the mutual information terms and expectations are evaluated with the distribution $P_{X_1X_2}P_{WU_1U_2\widehat{X}_1\widehat{X}_2|X_1X_2}$. Define the FLMC RD region as:
\[
\mathcal{RD}_{FLMC}(P_{X_1,X_2},d_1,d_2)\triangleq \mbox{cl}\left(\bigcup_{(S_1,S_2)}
\bigcup_{P_{W,U_1,U_2,\widehat{X}_1,\widehat{X}_2|X_1,X_2}\in \mathcal{P}_{FLMC}}
\bigcup_{n\in \mathcal{B}(\epsilon)}
\bigcup_{\tau\in (\frac{1}{\sqrt{n}}\sigma,\sigma')}
\alpha_{FLMC}((S_1,S_2),P_{W,U_1,U_2,\widehat{X}_1,\widehat{X}_2|X_1,X_2},\tau,n)\right).
\]}
 The FLMC region is achievable, i.e. $\mathcal{RD}_{FLMC}(P_{X_1,X_2},d_1,d_2)\subset \mathcal{RD}^*(P_{X_1,X_2},d_1,d_2)$.
\label{thm:MC}
\end{Theorem}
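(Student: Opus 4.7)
\textbf{Proof plan for Theorem~\ref{thm:MC}.}

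The plan is to construct a two-layer coding scheme whose total blocklength is $nm$, arranged as an $n \times m$ matrix of source symbols, with $n$ held fixed (to preserve correlation of the $\epsilon$-correlated components) while $m \to \infty$ (to invoke the usual law-of-large-numbers guarantees). Concretely, I would reshape the sequences $X_i^{nm}$ into matrices $X_i^{n,m}$ so that each column is an $n$-block. The first code layer acts \emph{column-wise}: each encoder applies the same finite-length random quantizer $P_{W|S_1}^{\otimes n}$ to the $n$-block of its $\epsilon$-correlated component $f_i(X_i)$, producing $m$ columns of $W$-symbols. The second code layer acts \emph{row-wise}: for each of the $n$ rows, a standard Berger–Tung-with-side-information code of length $m$ (governed by $P_{U_1 U_2|X_1 X_2 W}$) is drawn independently, using the quantized row of first-layer outputs as side information. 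I would then invoke Theorem~\ref{thm:BTSI} for the second layer, conditioned on the output of the first layer, obtaining a multi-letter inner bound in which the first-layer contributions appear as quantities depending on the joint empirical behavior of $(S_1^n, S_2^n, W_1^n, W_2^n)$ within a column.

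The heart of the proof is the single-letterization of the first layer. I would analyze a single column as follows. Draw a random codebook $\mathcal{C}_n \subseteq \mathcal{W}^n$ of rate $R_W$ i.i.d. according to the induced marginal $P_W^{\otimes n}$, and let each encoder jointly-typically encode its $S_i^n$ block into $\mathcal{C}_n$. A finite-blocklength covering bound (Chernoff-style, not asymptotic) yields that the required rate is at most $I(S_1;W)+\theta_n(\tau)$ where $\theta_n(\tau)$ collects the non-asymptotic overhead in \eqref{eq:theta}; here $\tau$ controls the slack in the typicality test and must lie in $(\sigma/\sqrt{n}, \sigma')$ so that both encoders succeed in covering with high probability. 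The key probabilistic event is that, conditioned on $S_1^n = S_2^n$ (which has probability $\geq (1-\epsilon)^n = 1-\delta_n$ within a column), and on successful covering, the two encoders pick the \emph{same} codeword; this event has probability at least $p(\tau)$ by a maximum-likelihood / most-likely-codeword argument that tracks the gap in joint type to the true distribution $P_{S_1,W}$. Unioning over the three bad events (codebook failure, $S_1^n \neq S_2^n$, codeword mismatch) gives the aggregated mismatch probability $1 - p(\tau) + p(\tau)\delta_n + \epsilon$, which is exactly the factor appearing in the distortion bound \eqref{eq:04} and (via Fano-type inequalities) in the rate penalties $E_{n,\delta_n}$, $\Gamma_{n,\delta_n}(\tau)$, $\Lambda_n(\tau)$.

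With the first layer thus analyzed column-wise, the outer row-wise second layer is treated as a standard lossy distributed source coding with side-information problem of blocklength $m$: the empirical joint distribution of the row under analysis converges, as $m \to \infty$, to $P_{X_1 X_2 W}$ (outside an atypical set whose probability vanishes), and Theorem~\ref{thm:BTSI} with $Y_i = W$ yields rates $I(X_i;U_i|U_{3-i},W) + o(1)$ and sum-rate $I(X_1,X_2;U_1,U_2|W)+o(1)$. Adding the first-layer contribution $I(S_1;W)+\theta_n(\tau) \leq I(X_1,X_2;W)+\theta_n(\tau)$ to the conditional BT rates reproduces \eqref{eq:01}--\eqref{eq:03}; the extra $E_{n,\delta_n}$, $\Gamma_{n,\delta_n}(\tau)$, $\Lambda_n(\tau)$ account for (i) sending a short header that corrects any column in which the two encoders produced different $W$-codewords, and (ii) the perturbation of the row-wise empirical distribution caused by the mismatched columns; a continuity-of-mutual-information estimate supplies the $\Gamma_{n,\delta_n}(\tau)$ bound. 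The distortion bound follows by decomposing $\mathbb{E}[d_i(X_i,\widehat{X}_i)]$ over matched and mismatched columns and absorbing the maximum distortion $d_{i,\max}$ into the mismatched portion.

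The main obstacle, and the step I expect to be delicate, is the joint finite-blocklength analysis of the first layer: proving the lower bound $p(\tau)$ on the probability that both encoders select the same $W^n$-codeword requires careful control of the random codebook geometry (how many $W^n$-codewords are jointly typical with a given $S_1^n$, and how a shift to $S_2^n$ perturbs that set), and the choice of $\tau$ must simultaneously keep the covering failure probability small and keep the mismatch event small. Once this finite-$n$ probabilistic lemma is in place, the remaining arguments (interleaving, conditional Berger--Tung, union bounding, cardinality bounds on $|\mathcal{W}|$ and $|\mathcal{U}_i|$ via Carath\'eodory) are standard and follow the template already used for Theorems~\ref{thm:CC} and~\ref{thm:BTSI}.
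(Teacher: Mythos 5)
Your high-level architecture matches the paper's: a two-layer scheme on an $n\times m$ array, with a fixed-$n$ column-wise first quantization layer for the $\epsilon$-correlated components and an $m\to\infty$ row-wise second layer invoking Theorem~\ref{thm:BTSI}, finally joined through an interleaving step and a continuity-of-information estimate. However, there is a genuine conceptual gap in what you identify as ``the delicate step,'' and it is not a minor detail.

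You attribute $p(\tau)$ to a codeword-mismatch event: ``conditioned on $S_1^n = S_2^n$ $\ldots$ the two encoders pick the same codeword; this event has probability at least $p(\tau)$ by a maximum-likelihood argument that tracks the gap in joint type.'' This is incorrect. Both encoders apply the \emph{same deterministic} quantizer $Q^*_{n,\phi,\phi'}$ to their respective $S_i^n$ columns, so if $S_1^n=S_2^n$ they trivially output the same $W^n$ with probability one---there is no random-codebook-geometry mismatch to analyze. The real obstacle the paper handles, which your proposal does not mention, is that the per-position law $P'_{S_1(j),W_1(j)}$ induced by a finite-$n$ quantizer is only \emph{close} to the target $P_{S_1,W}$ and, worse, depends on the position index $j$. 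Your claim that ``the empirical joint distribution of the row under analysis converges, as $m\to\infty$, to $P_{X_1 X_2 W}$'' therefore does not hold as stated: rows would converge to $j$-dependent laws none of which is $P_{X_1X_2W}$. The paper repairs this with two additional mechanisms absent from your plan: (i) an auxiliary correction variable $T^n$ drawn from a carefully designed $P'_{T|S_1}$ and combined with $W_1^n$ to produce $W_1^{\prime n}$ whose \emph{average per-position} marginal is exactly $P_{S_1,W}$; this must be transmitted, and its rate $H(T)$ is precisely the source of $\Lambda_n(\tau)$, not a ``header correcting codeword mismatch'' as you write; and (ii) a random per-column permutation $\pi_j$, shared by both encoders and the decoder, which makes each row an i.i.d. sequence with law independent of the row index (Claim~\ref{Claim:1}). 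In the paper, $p(\tau)$ is a lower bound on $P(T=0)$, i.e., the probability that the correction is inactive, which explains both $\Lambda_n(\tau)$ (the cost of sending $T$) and the factor $p(\tau)(1-\delta_n)$ lower-bounding $P'(W'_1=W_2,S_1=S_2)$.

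Without the $T$-correction and the row-permutation, the reduction to Theorem~\ref{thm:BTSI} and the subsequent variational-distance bound underlying $\Gamma_{n,\delta_n}(\tau)$ (Claim~\ref{claim:RY} and Lemma~\ref{prop:mutI}) would not go through, because the target single-letter distribution $Q_{W,X_1,X_2,U_1,U_2}$ would never be induced, even approximately, by the construction. The paper also structures the proof through an intermediate multi-letter MCML region (Theorem~\ref{thm: MCDSC22}) and then contracts it to FLMC via continuity; your more direct plan is workable in principle but only after the two missing mechanisms are inserted. I would also point out that the finite-$n$ covering statement you sketch is essentially the paper's Proposition~\ref{prop:fin}, and your interpretation of $E_{n,\delta_n}$ as the Slepian--Wolf cost of the super-symbol mismatch $H(\overline{W}_2|\overline{W}_1)$ is correct.
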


\begin{Remark}
\textcolor{black}{The cardinality bounds $|\mathcal{W}|\leq |\mathcal{S}|+1$, and $|\mathcal{U}_i|\leq |\mathcal{X}_i|\cdot|\mathcal{W}|+1, i\in \{1,2\}$ in the theorem are derived using standard arguments based on the Caratheodory Theorem (e.g. \cite{rockafellar1970convex} Theorem 17.1), and by noting that the additional $\log{|\mathcal{W}|}$ term in equations \eqref{eq:01}-\eqref{eq:03} decreases as $|\mathcal{W}|$ is decreased.}
\end{Remark}

\begin{Remark}
\textcolor{black}{
Consider first the case $\epsilon>0$. If $\epsilon>\frac{1}{3}$, we have $\mathcal{B}(\epsilon)=\phi$. Otherwise, if $0<\epsilon<\frac{1}{3}$ the set $\mathcal{B}(\epsilon)$ has a finite number of elements. Hence, the region is computable in this case. For the case when $\epsilon=0$, the $\epsilon$-correlated components $S_1,S_2$ are in fact common components, i.e. $S_1=S_2$ with probability one. Furthermore, the set $\mathcal{B}(\epsilon)$ is infinite, and $\delta_n=0, E_{n,\delta_n}=0, \forall n \in \mathbb{N}$. Next, we note that as $n\to \infty$ we have $\frac{1}{\sqrt{n}}\sigma\to 0$. So, $\theta_n(\tau)$ can be made arbitrarily small as $n\to \infty$ using a suitable choice of $\tau\in (\frac{1}{\sqrt{n}}\sigma,\sigma')$, for example $\tau= n^{-\frac{1}{2}+\alpha}, 0<\alpha<\frac{1}{2}$. Similarly, $\Gamma_{n,\delta_n}(\tau)$ and $\Lambda_n(\tau)$ can be made arbitrarily small. Hence, the terms on the right hand side of the inequalities \eqref{eq:01}-\eqref{eq:04} which depend on $n$ and $\tau$ can be made  arbitrarily small by choosing  $n$ sufficiently large and $\tau$ appropriately. As a result,   
the largest achievable RD region is obtained  when $n\to \infty$ which coincides with the CC region. Hence, the region is computable in this case as well.}
\end{Remark}

\begin{Remark}
 \textcolor{black}{Note that since the alphabets $\mathcal{X}_1$ and $\mathcal{X}_2$ are finite, there are a finite number of choices for the components $S_1$ and $S_2$ which are functions of $X_1$ and $X_2$, respectively. Hence, to compute the achievable region, it is possible to compute the RD region for each choice of $(S_1,S_2)$, and then take the convex hull of the union of these RD regions using the Caratheodory theorem.}
\end{Remark}


The achievability of the FLMC region is proved in Section \ref{Sec:Ach}. The following proposition proves that the FLMC region contains the CC region. 
\begin{Proposition} 
\label{Prop:Subset}
\textcolor{black}{For any distributed source $(P_{X_1,X_2},d_1,d_2)$, the FLMC region contains the CC region: }
\[
\textcolor{black}{\mathcal{RD}_{CC}(P_{X_1,X_2},d_1,d_2)\subseteq \mathcal{RD}_{FLMC}(P_{X_1,X_2},d_1,d_2).}\]
\end{Proposition}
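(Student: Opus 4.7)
The plan is to show that every achievable RD tuple in $\mathcal{RD}_{CC}$ can be realized as a limit of tuples in $\mathcal{RD}_{FLMC}$, by specializing the FLMC construction to the case $\epsilon=0$ and letting the finite blocklength $n$ tend to infinity. Concretely, I would fix an arbitrary $P_{W,U_1,U_2,\widehat{X}_1,\widehat{X}_2|X_1,X_2}\in\mathcal{P}_{CC}$ together with its common component $S=f_1(X_1)=f_2(X_2)$, and then set $S_1=f_1(X_1)$, $S_2=f_2(X_2)$ in the FLMC setup. By the Remark following Definition~\ref{Def:eps}, the pair $(S_1,S_2)$ is a pair of $\epsilon$-correlated components for \emph{every} $\epsilon\in[0,1]$, and in particular for $\epsilon=0$ it satisfies $P(S_1=S_2)=1$.

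Next I would verify the membership $P_{W,U_1,U_2,\widehat{X}_1,\widehat{X}_2|X_1,X_2}\in\mathcal{P}_{FLMC}$: the Markov chain $W-S-(X_1,X_2)$ from $\mathcal{P}_{CC}$ coincides with $W-S_1-(X_1,X_2)$ since $S_1=S$; the Markov chain $U_1-(X_1,W)-(X_2,W)-U_2$ and the reconstruction functions $\widehat{X}_i=g_i(W,U_1,U_2)$ are identical in both families; and the cardinality bounds can be imposed without loss of generality by the standard Caratheodory argument alluded to in the cardinality Remark. The parameter set $\mathcal{B}(0)$ contains all integers $n\geq (\sigma/\sigma')^2$, since the upper bound $\log(2\epsilon)/\log(1-\epsilon)$ diverges as $\epsilon\downarrow 0$, so we are free to pick $n$ arbitrarily large.

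The core step is then to check that, along the scaling $\tau_n = n^{-1/2+\alpha}$ with $0<\alpha<1/2$, every correction term on the right-hand sides of \eqref{eq:01}--\eqref{eq:04} vanishes. Since $\epsilon=0$ forces $\delta_n=0$ and $E_{n,\delta_n}=0$ for all $n$, this reduces to showing (i) $\theta_n(\tau_n)\to 0$, using that $\tau_n\to 0$ and $\frac{1}{n}\log(2n\tau_n^2/|\mathcal{S}|^2-\ln 2|\mathcal{S}|)\to 0$; (ii) $p(\tau_n)\to 1$, because both $\tau_n\to 0$ and $4|\mathcal{S}|e^{-2n\tau_n^2/|\mathcal{S}|^2}\to 0$, which in turn forces $\Lambda_n(\tau_n)\to 0$ and the distortion-correction $2(1-p(\tau_n)+p(\tau_n)\delta_n+\epsilon)d_{i,\max}\to 0$; and (iii) $\Gamma_{n,\delta_n}(\tau_n)\to 0$, since its argument $x=1-p(\tau_n)+p(\tau_n)\delta_n+\epsilon$ tends to zero and $xh_b$-type expressions are continuous at $0$. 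This is exactly the computation previewed in the $\epsilon=0$ Remark after Theorem~\ref{thm:MC}.

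Combining these limits, the FLMC rate constraints collapse to $R_1\geq I(X_1;U_1|U_2,W)$, $R_2\geq I(X_2;U_2|U_1,W)$, $R_1+R_2\geq I(X_1,X_2;U_1,U_2,W)$, and $\Delta_i\geq\mathbb{E}\,d_i(X_i,\widehat{X}_i)$, which are precisely \eqref{eq:CC1}--\eqref{eq:CC4}. Hence every $(R_1,R_2,\Delta_1,\Delta_2)\in\alpha(P_{W,U_1,U_2,\widehat{X}_1,\widehat{X}_2|X_1,X_2})$ is a limit of points in $\alpha_{FLMC}((S_1,S_2),P_{W,U_1,U_2,\widehat{X}_1,\widehat{X}_2|X_1,X_2},\tau_n,n)$, and therefore lies in $\mathcal{RD}_{FLMC}$ by closure under $\mbox{cl}(\cdot)$. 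Taking the union over all $P_{W,U_1,U_2,\widehat{X}_1,\widehat{X}_2|X_1,X_2}\in\mathcal{P}_{CC}$ and again using closedness gives the desired containment. The only real technicality is keeping track of the rate at which $\tau_n\to 0$ so that $p(\tau_n)\to 1$ while $\theta_n(\tau_n)\to 0$ simultaneously, but the choice $\tau_n=n^{-1/2+\alpha}$ makes all the competing bounds tight enough, so no new ideas beyond direct verification are needed.
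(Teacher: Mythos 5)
Your proposal is correct and follows essentially the same approach as the paper's own proof: take a CC test distribution, set $S_1=S_2=S$ so that $(S_1,S_2)$ is a pair of $0$-correlated components, observe that $\delta_n=0$ and $E_{n,\delta_n}=0$ for all $n$, and let $n\to\infty$ with $\tau$ scaled so that all remaining correction terms in \eqref{eq:01}--\eqref{eq:04} vanish, concluding by taking closures. The only cosmetic difference is that you make the choice $\tau_n=n^{-1/2+\alpha}$ and the Caratheodory reduction to the cardinality bounds explicit, whereas the paper states the $\gamma$-approximation argument directly without spelling out the rate of decay of $\tau$.
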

\textit{Proof.}
\textcolor{black}{Please see Appendix \ref{App:Subset}.}

\textcolor{black}{ Although, for $\epsilon>0$, the parameter $n$ is finite and is restricted to the set $\mathcal{B}(\epsilon)$,  to gain  better understanding,  we make the following observations regarding the asymptotic behavior of the $n$-dependent terms in the FLMC RD region as follows. 
First, the term $\theta_n(\tau)$ in the RD region provided in Theorem \ref{thm:MC} signifies the rate-loss due to the application of quantizers of constant finite blocklength in the first code layer in the proposed scheme. This rate-loss diminishes as the codebook blocklength $n$ in the first layer is increased asymptotically, and $\tau$ is chosen suitably, due to concentration of measure (i.e. $\theta_n(\tau)\to 0$ as $n\to \infty$ and $\tau \to 0$ such that $n\tau^2 \to \infty$).} \textcolor{black}{Second, the term $\Lambda_n(\tau)$ signifies the rate-loss due to the error in estimating the empirical distribution of the output of the first layer of quantizers. As the blocklength of the first layer of quantizers is increased, this estimation error approaches zero asymptotically due to the law of large numbers (i.e. $\Lambda_n(\tau)\to 0$ as $n\to \infty$ and $\tau \to 0$ such that $n\tau^2 \to \infty$.).}   
\textcolor{black}{ Third, for $\epsilon>0$, the term $E_{n,\delta_n}$ in the RD region provided in Theorem \ref{thm:MC} signifies the loss due to the decrease in  correlation among the quantizations of the highly correlated components of the sources \cite{ComInf2, shirani2019sub}. This rate-loss increases as the codebook blocklength $n$ in the first layer is increased asymptotically (i.e. $E_{n,\delta_n}\uparrow \log{|\mathcal{W}|}$ as $n\to \infty$). }
\textcolor{black}{ Fourth, $\Gamma_{n,\delta_n}(\tau)$ signifies the rate-loss due to approximation of quantizations of highly correlated components by that of a common component at the distributed encoders. This expression is not defined for asymptotically large $n$ when $\epsilon>0$. The range of $n$ given as $\mathcal{B}(\epsilon)$ and the range of $\tau$ given as $(\frac{1}{\sqrt{n}}\sigma,\sigma')$ 
is chosen such that $p(\tau)>\frac{1}{2}$, consequently $x= (1-p(\tau)+p(\tau)\delta_n+\epsilon)$ is less than one, and hence $\Gamma_{n,\delta_n}(\tau)$ is well-defined.}

\textcolor{black}{  It can be observed that there is a trade-off between the rate-loss given by $\theta_n(\tau)$ and  $\Lambda_n(\tau)$ due to the application of the first layer of codes with fixed blocklength, and the cost given by $E_{n,\delta_n}$ incurred due to the weakening of the correlation between the quantizations of the sources. 
}

    The FLMC-RD region is not symmetric with respect to the two encoders. A larger achievable region can be characterized by swapping the indices for encoders $1$ and $2$ in the theorem and taking the convex hull of the union of the two resulting regions. 

The proof of Theorem \ref{thm:MC} involves two steps. First, we prove that another larger region called the matrix coding multi-letter region (MCML) is achievable. The MCML region is not computable, and is characterized using multi-letter distributions on the input variables. In the second step, we extract a single-letter FLMC region as a 
subset of the MCML region. Since the latter is achievable, so is the former. 

\section{The MCML Achievable Region}
\label{sub:ach}
\textcolor{black}{We first provide an outline of the  MCML strategy.} In the first step, we quantize distributively the $\epsilon$-correlated components $(S_1,S_2)$ of the sources $(X_1,X_2)$ using identical quantizers. The quantizer is designed based on a specific multi-letter distortion constraint.
To elaborate, let $W_1^n$ be the sequence resulting from the application of the quantizer on  the vector $S_1^n$. Roughly speaking, the multi-letter distortion constraint requires the pair $(S_1^n,W_1^n)$  to have its empirical distribution close to a predetermined joint distribution $P_{S_1,W}$. \textcolor{black}{Note that the distortion criteria only imposes restrictions on the quantization $W^n_1$ of the sequence $S^n_1$,  and does not depend on the sequence $S_2^n$.
The second encoder estimates the first encoder's output by applying the same quantizer to $S_2^n$ to obtain $W_2^n$.} The encoders use a multi-letter Slepian-Wolf coding strategy to transmit the pair $(W_1^n,W_2^n)$. \textcolor{black}{The rest of the scheme is similar to the scheme used for the DSC-SI scenario, where $(W_1^n,W_2^n)$ is treated as distributed side-information.} Let $\mathcal{RD}_{MCML}(P_{X_1,X_2},d_1,d_2)$ denote the achievable  RD region for the MCML scheme. The following describes this achievable RD region.

\begin{Definition}
\textcolor{black}{Consider a distributed source $(\mathcal{X}_1,\mathcal{X}_2,P_{X_1,X_2},d_1,d_2)$. Let $(S_1, S_2)$ be a pair of $\epsilon$-correlated components of $X_1$ and $X_2$ with alphabet $\mathcal{S}$. 
Define $\mathcal{P}_{MCML}$ as the set of all tuples
of conditional distributions and functions $(P_{W|S_1},P_{U_1|X_1,W},$ $P_{U_2|X_2,W}, g_1, g_2)$, where the distributions are defined on  $\mathcal{W}\times \mathcal{U}_1\times \mathcal{U}_2$, $g_i:\mathcal{W}\times{\mathcal{W}}\times\mathcal{U}_1\times \mathcal{U}_2\to \mathcal{X}_i, i\in \{1,2\}$, and $\mathcal{W},\mathcal{U}_1$ and 
$\mathcal{U}_2$ are finite sets.}

\end{Definition}

\begin{Theorem} 
\textcolor{black}{Let $n\in \mathcal{B}(\epsilon)$, and $\tau\in (\frac{1}{\sqrt{n}}\sigma,\sigma')$. Consider a tuple $(P_{W|S_1},P_{U_1|X_1,W},P_{U_2|X_2,W},$ $g_1, g_2)\in \mathcal{P}_{MCML}$. Then, there exists a conditional probability distribution
$P'_{W'_1,W_2,U_1,U_2|X_1,X_2}$ such that:
\\1) $P'_{W'_1,U_1|X_1,X_2}(w_1,u_1|x_1,x_2)=P_{W|S_1}(w_1|s_1)P_{U_1|X_1,W}(u_1|x_1,w_1), \forall x_1,x_2,w_1,u_1\in \mathcal{X}_1\times \mathcal{X}_2\times \mathcal{W}\times \mathcal{U}_1,$ where $s_1=f_1(x_1)$,
\\2) $P'_{U_2|X_2,W_2}(u_2|x_2,w_2)=P_{U_2|X_2,W}(u_2|x_2,w_2), \forall u_2,x_2,w_2\in \mathcal{U}_2\times \mathcal{X}_2\times \mathcal{W}$,
\\3) $P'(W'_1=W_2,S_1=S_2)\geq p(\tau)(1-\delta_n)$,
\\4) $U_1 - (X_1,W'_1) - (X_2,W_2) - U_2$,
\\5) $(W'_1,W_2) - (S_1,S_2) - (X_1,X_2)$.}
\\\textcolor{black}{
Let $\alpha_{MCML}(P_{W|S_1},P_{U_1|X_1,W},P_{U_2|X_2,W}, g_1, g_2,\tau,n)$ denote\footnote{The dependence of $\alpha_{MCML}$ on $(S_1,S_2)$ is not made explicit.} the set of rate-distortion tuples 
$(R_1,R_2,\Delta_1,\Delta_2) \in [0,\infty)^4$  that satisfy the following bounds:
\begin{align}
\label{eq:prembounds1}R_1 &\geq I(X_1;U_1|U_2,W'_1,W_2) + E_{n,\delta_n}+\Lambda_n(\tau),\\
\label{eq:prembounds2}R_2&\geq I(X_2;U_2|U_1,W'_1,W_2) +E_{n,\delta_n},\\
\label{eq:prembounds3}R_1+R_2&\geq I(W'_1;S_1)+I(X_1;U_1|W'_1,W_2)+ I(X_2;U_2|W'_1,W_2)-I(U_1;U_2|W'_1,W_2)+E_{n,\delta_n} +\Lambda_n(\tau) +\theta_n(\tau),\\
\Delta_i&\geq \mathbb{E}\{d_i\big(X_i,g_i(U_1,U_2,W'_1,W_2)\big)\}, 
\ i \in \{1,2\},
\label{eq:prembounds4}
\end{align}
 where the mutual information terms and the expectations are evaluated
 according to the distribution $P_{X_1,X_2}$ $P'_{W'_1,W_2,U_1,U_2|X_1,X_2}.$
 Define the MCML RD region as 
 \begin{align*}
&\mathcal{RD}_{MCML}(P_{X_1,X_2},d_1,d_2) \\
&\triangleq \mbox{cl}\left(\bigcup_{(S_1,S_2)}
\bigcup_{(P_{W|S_1},P_{U_1|X_1W},P_{U_2|X_2W},g_1,g_2) \in \mathcal{P}_{MCML}}
\bigcup_{n \in \mathcal{B}(\epsilon)}
\bigcup_{\tau\in (\frac{1}{\sqrt{n}}\sigma,\sigma')}
\alpha_{MCML}(P_{W|S_1},P_{U_1|X_1,W},P_{U_2|X_2,W}, g_1, g_2,\tau,n)\right).
\end{align*}
The MCML region is achievable, i.e. $\mathcal{RD}_{MCML}(P_{X_1,X_2},d_1,d_2)\subset \mathcal{RD}^*(P_{X_1,X_2},d_1,d_2)$.
}
\label{thm: MCDSC22}
\end{Theorem}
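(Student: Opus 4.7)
The plan is to establish achievability via a two-layer random coding scheme arranged on an $n \times m$ matrix of source symbols, where $n$ is fixed as specified in $\mathcal{B}(\epsilon)$ and $m \to \infty$. In the first layer, a single random codebook $\mathcal{C}_W$ of $n$-length codewords, drawn i.i.d.\ from $P_W = \sum_{s} P_{S_1}(s) P_{W|S_1}(\cdot|s)$ at rate slightly above $I(S_1; W) + \theta_n(\tau)$, is generated and revealed to both encoders and the decoder. Each encoder applies the same typicality-based quantizer $q$ to every column of its $n \times m$ source matrix $S_i^{n,m} = f_i(X_i^{n,m})$: the quantizer outputs some $w^n \in \mathcal{C}_W$ whose joint empirical distribution with the input column lies within total-variation $\tau$ of $P_{S_1,W}$, and declares failure otherwise. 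This produces matrices $W_1^{n,m}$ and $W_2^{n,m}$. The second layer applies a BT-type scheme along the $m$ rows, treating $(W_1^{n,m}, W_2^{n,m})$ as distributed side-information and invoking the DSC-SI inner bound of Theorem \ref{thm:BTSI}; the first-layer indices are transmitted via Slepian--Wolf binning.

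Next I would construct the target distribution $P'_{W'_1, W_2, U_1, U_2|X_1, X_2}$ as the single-letter marginal induced by the scheme at a uniformly chosen matrix cell, averaged over codebook realizations. Property (v) follows because $q$ processes only the $S_i$-column, which is a deterministic function of the $X_i$-column. Properties (i), (ii), and Markov chain (iv) arise from the conditionally independent codebook generation in the second (BT) layer, with marginals matched by choosing the auxiliary conditionals $P'_{U_i|X_i, W'_i} = P_{U_i|X_i, W}$. Property (iii) is the central item: since $(S_1, S_2)$ is $\epsilon$-correlated, $\Pr[S_1^n = S_2^n] \geq (1-\epsilon)^n = 1 - \delta_n$, and on this event both encoders feed identical inputs to the same deterministic quantizer $q$, forcing $W'_1 = W_2$ whenever $q$ succeeds. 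A finite-$n$ concentration argument (a Hoeffding-type bound on the deviation of the empirical distribution of a random $n$-length codeword from $P_{S_1,W}$, combined with a union bound over the codebook and over type classes of $S_1^n$) yields $\Pr[q(S_1^n) \ne \text{fail}] \geq p(\tau)$, and hence $P'(W'_1 = W_2, S_1 = S_2) \geq p(\tau)(1-\delta_n)$.

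For the rate--distortion analysis, I would combine three contributions. The first-layer rate $I(W'_1; S_1) + \theta_n(\tau)$ accounts for covering via finite-blocklength codebooks, with $\theta_n(\tau)$ absorbing the non-asymptotic covering-lemma overhead. Transmission of the first-layer indices via Slepian--Wolf binning, exploiting $W'_1 \approx W_2$, introduces an $E_{n,\delta_n}$ rate penalty (the per-symbol entropy cost of the residual disagreement, bounded via $\Pr[W'_1 \ne W_2] \leq \delta_n$) and a $\Lambda_n(\tau)$ penalty (covering the empirical-type information needed when the two quantizer outputs disagree). The remaining rate bounds \eqref{eq:prembounds1}--\eqref{eq:prembounds3} then follow from Theorem \ref{thm:BTSI} applied at $P'$ with side-information $(W'_1, W_2)$ and auxiliaries $(U_1, U_2)$, producing the conditional mutual information terms $I(X_i; U_i | U_{3-i}, W'_1, W_2)$. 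The distortion bound \eqref{eq:prembounds4} is the direct single-letter expectation under $P'$ with reconstruction $\widehat{X}_i = g_i(U_1, U_2, W'_1, W_2)$.

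The main obstacle is establishing the concentration bound $\Pr[q(S_1^n) \ne \text{fail}] \geq p(\tau)$ with the precise explicit form given in the theorem statement, and ensuring that the matrix interleaving preserves the Markov chains (iv) and (v) after single-letterization. The formula for $p(\tau)$ combines a combinatorial floor $\min_{a,b} P_{S_1, W}(a,b)$ with a Hoeffding exponent $e^{-2n\tau^2/|\mathcal{S}|^2}$ and a linear $\tau$-slack, requiring a careful balance between the covering-codebook rate, the typicality radius $\tau$, and the type-class volumes of $S_1^n$ at the given finite $n$. Simultaneously, the interleaving must be arranged so that the columnwise first-layer operations and the rowwise second-layer operations do not introduce spurious dependence: conditional on the first-layer output, the second-layer codebooks must remain independent, which is precisely what allows the Markov chain $U_1 - (X_1, W'_1) - (X_2, W_2) - U_2$ to be maintained in the induced distribution $P'$.
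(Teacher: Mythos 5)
Your overall architecture matches the paper's proof: a single shared finite-blocklength quantizer applied columnwise with the multi-letter ``type-proximity'' distortion $d^n_{S_1,W}$, Slepian--Wolf transmission of the quantizer outputs (yielding the $E_{n,\delta_n}$ penalty via $P(W_1^n\neq W_2^n)\leq\delta_n$), a BTSI second layer along rows with $(W'_1,W_2)$ as distributed side-information, and single-letterization at a uniformly chosen matrix cell (realized in the paper via explicit random column-permutations $\pi_1,\dots,\pi_m$). However, there is a genuine and central gap: you never construct $W'_1$ as a distinct object from $W_1$, and as a result property~(1) of the theorem --- $P'_{W'_1,U_1|X_1,X_2}=P_{W|S_1}P_{U_1|X_1,W}$ \emph{exactly} --- cannot follow from your construction. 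The finite-blocklength quantizer only guarantees that the joint type of $(S_1^n,W_1^n)$ lies within $\phi$ of $P_{S_1,W}$ with probability $\geq 1-\phi'$, so the single-letter marginal $P'_{S_1(I),W_1(I)}$ is merely an $(\phi+\phi')$-approximation of $P_{S_1,W}$, not equal to it. The paper fixes this by introducing an auxiliary i.i.d.\ variable $T^n$ with a carefully designed $P'_{T|S_1}$ (Eqs.~\eqref{eq:pp1}--\eqref{eq:pp2}), and defining $W'_1(i)=W_1(i)$ if $T(i)=0$ and $W'_1(i)=T(i)$ otherwise, thereby \emph{deterministically correcting} the marginal so that $P'_{S_1(I),W'_1(I)}=P_{S_1,W}$ exactly; the encoder must transmit $T^{n,m}$ losslessly, which is what the $\Lambda_n(\tau)=H(T)$ rate term pays for.

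This omission also propagates into two misattributions in your argument. First, you interpret $\Lambda_n(\tau)$ as ``covering the empirical-type information needed when the two quantizer outputs disagree''; in fact it is precisely the entropy $H(T)$ of the correction variable, bounded via the grouping axiom in Eq.~\eqref{Eq:Comment}. Second, you interpret $p(\tau)$ as the covering-success probability $\Pr[q(S_1^n)\neq\text{fail}]$; the quantizer's success probability is $1-\phi'$ with $\phi'=4|\mathcal{S}|e^{-2n\tau^2/|\mathcal{S}|^2}$, whereas $p(\tau)$ is the lower bound on $P'_{T|S_1}(0|a)$, i.e., the probability that no correction is applied at a given cell. The path to property~(3) is then $P'(W'_1=W_2,S_1=S_2)\geq P'(T=0,S_1^n=S_2^n)\geq p(\tau)(1-\delta_n)$, using that $T=0$ forces $W'_1=W_1$ and $S_1^n=S_2^n$ forces $W_1^n=W_2^n$ (same deterministic quantizer on identical inputs). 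Without the $T$-construction your proposal neither establishes the exact marginal conditions (1)--(2) nor gives the correct operational meaning to $p(\tau)$ and $\Lambda_n(\tau)$, so as written the proof does not go through.
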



\begin{proof} The proof is given in 6 steps.
\\\textbf{Step 0.}
\textcolor{black}{ Let $(S_1,S_2)$ be a pair of $\epsilon$-correlated components of the distributed source $(X_1,X_2)$. Fix a tuple
$(P_{W|S_1},P_{U_1|X_1,W},$ $P_{U_2|X_2,W}, g_1, g_2) 
\in \mathcal{P}_{MCML}$, an $n\in \mathcal{B}(\epsilon)$, and a $\tau \in (\frac{1}{\sqrt{n}}\sigma,\sigma')$.  The distribution $P'_{W'_1,W_2,U_1,U_2|X_1,X_2}$, whose existence will be shown in the following, depends on the choice of the finite blocklength quantizer in the first quantization step. This quantizer is used to quantize the correlated components $(S_1,S_2)$. 
The finite blocklength quantizer is designed with respect to the multi-letter distortion criterion described below.} 
\begin{Definition}
\textcolor{black}{
 Let $\mathcal{W}$ be an arbitrary finite set. For a given conditional distribution $P_{W|S_1}$ defined on alphabet $\mathcal{W}$, define the following n-letter distortion function
 \[d^n_{S_1,W}(s_1^n,w_1^n)=
\max_{a,b\in \mathcal{S}\times \mathcal{W}} |\mathbb{P}_{s_1^n,w_1^n}(a,b)-  P_{S_1,W}(a,b)|,  
\quad  (s_1^n,w_1^n) \in \mathcal{S}^n\times\mathcal{W}^n\] where $P_{S_1,W}=P_{S_1}P_{W|S_1}$, and $\mathbb{P}_{s_1^n,w_1^n}$ is the joint type of $(s_1^n,w_1^n)$:}
 \begin{align*}
\textcolor{black}{     \mathbb{P}_{s_1^n,w_1^n}(a,b)= \frac{1}{n}
N(a,b|s_1^n,w_1^n), \ (a,b)\in \mathcal{S}\times \mathcal{W},}
 \end{align*}
 \textcolor{black}
{ where $N(a,b|s_1^n,w_1^n)\triangleq \sum_{j=1}^n\mathbbm{1}(s_{1}(j)=a, w_{1}(j)=b), (a,b)\in \mathcal{S}\times \mathcal{W}$.}
 \label{def:dist}
\end{Definition}
\textcolor{black}{The motivation behind defining the distortion $d^n_{S_1,W}(\cdot,\cdot)$ as above is as follows. We are interested in constructing quantizers for which the empirical distribution of the input $S^n_1$ and output $W_1^n$ is close to $P_{S_1,W}$. Consequently, we want to find quantizers which yield small $d^n_{S_1,W}(\cdot,\cdot)$.}
The following defines an n-length fixed blocklength quantizer for the distortion function $d^n_{S_1,W}$, and joint distribution $P_{S_1,W}$ as described above:
\begin{Definition}\textcolor{black}{
For $\phi, \phi'\in [0,1]$, an $(n,M,\phi,\phi')$ quantizer for the source $(\mathcal{S},\mathcal{W},P_{S_1}, d^n_{S_1,W})$ is a function $Q_n:\mathcal{S}^n_1\to \mathcal{W}^n$ such that $|Im(Q_n)|=M$, and $P(d^n_{S_1,W}(S^n,Q_n(S^n))>\phi)\leq \phi'$. }  
\label{def:Q}
\end{Definition}
An optimal n-length quantizer is formally defined below:
\begin{Definition}
\textcolor{black}{
 Let $M^*_{n,\phi,\phi'}$ be the minimum $M$ such that an $(n,M,\phi,\phi')$ quantizer exists for the source $(\mathcal{S},\mathcal{W},P_{S_1}, d^n_{S_1,W})$.
 The corresponding quantizer is denoted by $Q^*_{n,\phi,\phi'}$.
 The rate of the quantizer is defined as $R^*_{n,\phi,\phi'}= \frac{1}{n}\log{M^*_{n,\phi,\phi'}}$.}
 \label{def:optQ}
\end{Definition}

\begin{Proposition}
\textcolor{black}{ For a source $(\mathcal{S},\mathcal{W},P_{S_1}, d^n_{S_1,W})$, an $n> \left(\frac{\sigma}{{\sigma'}}\right)^2$, and
$\tau\in (\frac{1}{\sqrt{n}}\sigma,\sigma')$, the following holds: 
\begin{align*}
    &R^*_{n,\phi,\phi'}\leq I(W;S_1)+\theta_n(\tau),
    \\
     &\phi= {\tau}\left(\frac{1}{|\mathcal{S}|}+\frac{1}{\mathcal{|W|}}\right),\quad 
     \phi'=
     4|\mathcal{S}|e^{-2n\frac{{\tau}^2}{|\mathcal{S}|^2}},
\end{align*}
and $\theta_n$ is defined in \eqref{eq:theta}.
}
\label{prop:fin}
\end{Proposition}

\begin{proof}
\textcolor{black}{Please see Appendix \ref{App:prop:fin}.}
\end{proof}
\textcolor{black}{
The first encoder uses the optimal quantizer $Q^*_{n,\phi,\phi'}$ to quantize $S_1^n$ to $W_1^n$, where $\phi$ and $\phi'$ are chosen as described in the Proposition \ref{prop:fin}.  
Let $W_2^n=Q_{n,\phi,\phi'}(S_2^n)$. Note that $W_2^n$ is the second encoder\rq{}s \lq\lq{}estimate\rq\rq{} of $W_1^n$. Define:
\begin{align*}
P'_{X_1^n,X_2^n,W_1^n,W_2^n}(x_1^n,x_2^n,w_1^n,w_2^n)\triangleq&
{P_{X_1^n,X_2^n}(x_1^n,x_2^n)P'_{W_1^n,W_2^n|S_1^n,S_2^n}}(w_1^n,w_2^n|s_1^n,s_2^n),\end{align*}
for all $x_1^n,x_2^n, w_1^n,w_2^n\in \mathcal{X}_1^n\times \mathcal{X}_2^n\times \mathcal{W}^n\times \mathcal{W}^n$, $s_i^n=\big(f_i(x_i(1)), f_i(x_i(2)),\cdots, f_i(x_i(n))\big), i\in \{1,2\}$,  where 
\[
P'_{W_1^n,W_2^n|S_1^n,S_2^n}(w_1^n,w_2^n|s_1^n,s_2^n)= 
\mathbbm{1}(w_1^n=Q_{n,\phi,\phi'}^*(s_1^n),
w_2^n=Q_{n,\phi,\phi'}^*(s_2^n)),
\]
i.e., the distribution induced by applying the quantizer on $S_1^n$ and $S_2^n$.}

Let $m \in \mathbb{N}$.
Note that $n$ is the length of the finite blocklength quantizers used in the first layer of coding, and $m$ is the length of the quantizers in the second layer which is taken to be asymptotically large. \textcolor{black}{ Encoder $i$ receives the source vector $X_i^{n,m}$, where $i\in \{1,2\}$.} Let $\mathcal{S}_n$ be the set of all permutations $\pi:[1,n]\to[1,n] $. \textcolor{black}{Prior to the start of communication, select $m$ permutations $\pi_1,\pi_2,\cdots,\pi_m\in \mathcal{S}_n$ randomly, independently and uniformly.} These permutations are given to both encoders as well as the decoder.
\\\textbf{Step 1.} Transmitting $\left(W_1^{n,m},W_2^{n,m}\right)$:
\\In this step, the encoders transmit the quantizations of the $\epsilon$-correlated components of the sources. \textcolor{black}{These components are quantized using the optimal quantizer whose blocklength is $n$. 
Let ${W_i}([1,n],j)\triangleq Q^*_{n,\phi,\phi'}({S_i}([1,n],j))$, $i\in\{1,2\}, \ j\in [1,m]$.  The encoders transmit $(W_1^{n,m},W_2^{n,m})$ losslessly.} \textcolor{black}{More precisely, let the `\textit{super variable}' $\overline{W}_{i}$ be defined on the alphabet $\mathcal{W}^n$, $i\in \{1,2\}$. Define the sequence of \textit{super variables $\overline{W}_i^m=(W_{i}([1,n],1),W_{i}([1,n],2),$ $\cdots,W_{i}([1,n],m))$.} Consider the problem of compressing the distributed sources $\overline{W}_1$ and $\overline{W}_2$, with distribution  $P'_{\overline{W}_1,\overline{W}_2}$ shown in Figure \ref{fig:lossless}, where  $P'_{\overline{W}_1,\overline{W}_2}$ is the joint distribution of $\overline{W}_1,\overline{W}_2$ resulting from applying $Q^*_{n,\phi,\phi'}$.
From the Slepian-Wolf theorem for lossless distributed source coding \cite{SW}, the rate pair $(nr_1,nr_2)$ is achievable with block-error probability approaching zero asymptotically as $m\to \infty$, where }
\begin{align*}
 &nr_1\geq H(\overline{W}_1|\overline{W}_2),\qquad nr_2\geq  H(\overline{W}_2|\overline{W}_1),\qquad n(r_1+r_2)\geq H( \overline{W}_1,\overline{W}_2).
%
\end{align*}
\begin{figure}
\includegraphics*[draft=false,scale=.8]{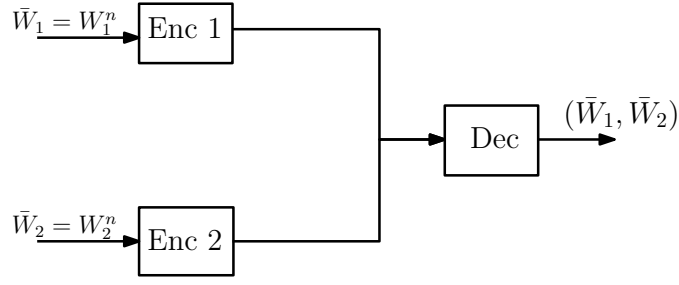}
\centering 
\caption{The encoders transmit $(\overline{W}_1,\overline{W}_2)$ losslessly.}
\label{fig:lossless}
\end{figure}
Let $\mathcal{E}_i$ be the event that $S_1([1,n],i)\neq S_2([1,n],i), i\in[1,m]$. Also, let $\mathcal{F}_i$ be the event that $W_1([1,n],i)\neq W_2([1,n],i), i\in[1,m]$. Then:
\begin{align*}
 P'_{W_1^n,W_2^n}(\mathcal{F}_i)\leq P_{S_1^n,S_2^n}(\mathcal{E}_i)=1-P_{S_1^n,S_2^n}(\mathcal{E}_i^c)=1-P_{S_1^n,S_2^n}(S_1^n=S_2^n)\leq 1-(1-\epsilon)^n=\delta_n.
\end{align*}
Define $G=\mathbbm{1}(\overline{W}_1=\overline{W}_2)$, then,
\begin{align*}
H(\overline{W}_2|\overline{W}_1)&=H(\overline{W}_2,G|\overline{W}_1)=H(G|\overline{W}_1)+H(\overline{W}_2|\overline{W}_1,G)
\\&\leq H(G)+P(G=0)H(\overline{W}_2|\overline{W}_1,G=0)+P(G=1)H(\overline{W}_2|\overline{W}_1,G=1)\\
&\leq h_b(1-(1-\epsilon)^n)+(1-(1-\epsilon)^n)H(\overline{W}_2|\overline{W}_1,G=0)\\
&\leq \textcolor{black}{h_b(1-(1-\epsilon)^n)+(1-(1-\epsilon)^n)\log{|\mathcal{W}^n|}=nE_{n,\delta_n},}
\end{align*}
\textcolor{black}{
Also, from Proposition \ref{prop:fin}: }
\begin{align*}
\textcolor{black}{ H(\overline{W}_1)= H(W_1^n){\leq}  nI(W;S_1)+n\theta_n(\tau).}
\end{align*}
   \textcolor{black}{ So, there exists a sequence of encoding functions $e_{i,m}:{\mathcal{W}}^{n, m}\to \{0,1\}^{l_{1,i}}, i\in \{1,2\},m\in \mathbb{N}$, and decoding functions $f_{m}:\{0,1\}^{l_{1,1}}\times\{0,1\}^{l_{1,2}}\to {\mathcal{W}}^{n, m}\times {\mathcal{W}}^{n, m}$ such that $\frac{l_{1,i}}{m}\leq nr_i+\epsilon_m, i\in \{1,2\}$, where $\epsilon_m\to 0$ as $m\to \infty$, and $P'\left(f_m\left(e_{1,m}(\overline{W}_1^m),e_{2,m}(\overline{W}_2^m)\right)\neq (\overline{W}_1^m,\overline{W}_2^m)\right)\to 0$ as $m\to \infty$ for any pair $(r_1,r_2)$ satisfying the following:}
\begin{align*}
 r_1\geq E_{n,\delta_n},\qquad r_2\geq E_{n,\delta_n}, \qquad \textcolor{black}{r_1+r_2\geq I(W;S_1)+E_{n,\delta_n}+\theta_n(\tau).}
\end{align*}

\textcolor{black}{In this step, Encoder $i$ transmits $e_{i,m}(\overline{W}_i^m)$. The decoder reconstructs $f_m(e_{1,m}(\overline{W}_1^m),e_{2,m}(\overline{W}_2^m))$. The decoder recovers $(\overline{W}_1^m,\overline{W}_2^m)=(W_1^{n,m},W_2^{n,m})$ with probability of error vanishing in $m$. Encoder $i$ transmits $l_{1,i}$ bits, where $i\in \{1,2\}$.}
\\\textbf{Step 2.} \textcolor{black}{Constructing the distributed source with side-information $(\mathcal{X}_1,\mathcal{X}_2,{\mathcal{W}},{\mathcal{W}},P'_{X_1,X_2,W'_1,W_2},d_1,d_2)$:}
\\ At the end of Step 1, Encoder $i$ has access to $W_i^{n,m}, i\in\{1,2\}$ while the decoder has $(W_1^{n,m},W_2^{n,m})$. 
\textcolor{black}{It should be noted that the empirical distribution of $(S_1^{n,m},W_1^{n,m})$ is close, but not equal to $P_{S_1,W}$ as desired. In this step, we produce a matrix ${W'_1}^{n,m}$ at the decoder such that the empirical distribution of $(S_1^{n,m},{W'_1}^{n,m})$ is exactly equal to  $P_{S_1,W}$.}
Without loss of generality, let
\[ \mathcal{S}= \{1,2,\cdots,|\mathcal{S}|\}, \qquad \mathcal{W}= \{1,2,\cdots,|\mathcal{W}|\}.\]
\textcolor{black}{Let $T^{n}$ be a sequence of random variables, each  defined on the alphabet $\mathcal{T}\triangleq \{0,1,\cdots, |\mathcal{W}|\}$ in the following way. 
Each element $T(i)$ is produced conditioned on the element $S_1(i)$, and with conditional distribution $P'_{T|S_1}$, where the conditional distribution will be defined in the sequel. Note that $T^{n}$ is an i.i.d. sequence, since $S^{n}$ is an i.i.d. sequence. Denote the joint distribution of $(T(i),S_1(i))$ by $P'_{T(i),S_1(i)}=P'_{T,S_1}$.  Assume the following Markov chain $T^n- S_1^n- (X_1^n,X_2^n,W_1^n)$. The joint distribution of the triple 
($T (i),S_1(i), W_1(i)$) is given by $P'_{T(i),S_1(i),W_1(i)}=P_{S_1}P'_{T|S_1}P'_{W_1(i)|S_1(i)}$ for any  $i\in [1,n]$. Define the random variable $W_1'$ on the alphabet $\mathcal{W}$ as follows for all $i \in [1,n]:$}
\begin{align}
\textcolor{black}{ W'_1(i)=
 \begin{cases}
 W_1(i),\qquad&\text{ if } T(i)=0,\\
 T(i),&\text{ if } T(i)\neq 0.
\end{cases}}
\label{eq:s20}
\end{align}
Note that $W'_1$ is a function of the pair $(W_1,T)$. Let $F$ denote this function, i.e., $W_1'=F(W_1,T)$. \textcolor{black}{
In the sequel, we will define $P'_{T|S_1}$ such that $P'_{S_1(I),W'_1(I)}=P_{S_1,W}$,  where $I$ is uniformly distributed on $[1,n]$, and is chosen independently of $S_i,W_i,i\in\{1,2\}$.} First, note from Definition 
\ref{def:Q} that  we have:
\begin{align}
\textcolor{black}{P'_{S_1^n,W_1^n}\left(d^n_{S_1,W}(S_1^n,Q^*_{n,\phi,\phi'}(S_1^n))>\phi \right)\leq \phi'  \Rightarrow P'_{S_1^n,W_1^n}\left(\frac{1}{n}N(a,b|S_1^n,W_1^n)>P_{S_1,W}(a,b)+\phi \right)\leq \phi', \forall a\in \mathcal{S},b\in \mathcal{W}.}\label{eq:s21}
\end{align}
 Also, note that 
\begin{align}
\nonumber\mathbb{E}\left(N(a,b|S_1^n,W_1^n)\right)&=\mathbb{E}\left(\sum_{i=1}^n \mathbbm{1}(S_1(i)=a,W_1(i)=b)\right)=\sum_{i=1}^nP'_{S_1(i),W_1(i)}(a,b)\\
&\nonumber =n\sum_{i=1}^n\frac{1}{n}P'_{S_1(i),W_1(i)}\left(a,b\right)
\\&=nP'_{S_1(I),W_1(I)}(a,b),\label{eq:s22}
\end{align}
\textcolor{black}{where $P'_{S_1(I),W_1(I)}(a,b)$ can be interpreted as the average empirical distribution of $(S_1^n,W_1^n)$.}
Define $\mathcal{E}_{a,b}, \ (a,b)\in \mathcal{S}\times \mathcal{W}$ as the event that $\frac{1}{n}N(a,b|S_1^n,W_1^n)>P_{S_1,W}(a,b)+
\phi$. 
 Then, 
\begin{align}
\nonumber 
\textcolor{black}{\mathbb{E}\left(N(a,b|S_1^n,W_1^n)\right)}&
\textcolor{black}{=P'_{S_1^n,W_1^n}(\mathcal{E}_{a,b})\mathbb{E}(N(a,b|S_1^n,W_1^n)|\mathcal{E}_{a,b})+P'_{S_1^n,W_1^n}(\mathcal{E}_{a,b}^c)\mathbb{E}(N(a,b|S_1^n,W_1^n)|\mathcal{E}_{a,b}^c)}\\
 &\stackrel{(a)}{\leq} 
 \textcolor{black}{\phi'\cdot n+n( P_{S_1,W}(a,b)+\phi),}\label{eq:s23}
\end{align}
 where in (a) we have used Equation \eqref{eq:s21}. Similarly we can show the following:
\begin{align}
\label{eq:s24}
\textcolor{black}{\mathbb{E}(N(a,b|S_1^n,W_1^n))\geq (1-\phi')n( P_{S_1,W}(a,b)-\phi).}
\end{align}
\textcolor{black}{From Equations \eqref{eq:s22}, \eqref{eq:s23}, and \eqref{eq:s24}, we have for all $a,b\in \mathcal{S}\times \mathcal{W}$:}
\begin{align}
&\nonumber 
\textcolor{black}{(1-\phi')( P_{S_1,W}(a,b)-\phi)\leq P'_{S_1(I),W_1(I)}(a,b)\leq \phi+\phi' +P_{S_1,W}(a,b)}\\
\label{eq:s23.5}
&\textcolor{black}{\Rightarrow \forall (a,b)\in \mathcal{S}\times\mathcal{W}: \exists \gamma_{a,b}\in [-\phi-\phi'+\phi\phi',\phi+\phi']: P'_{S_1(I),W_1(I)}(a,b)=P_{S_1,W}(a,b)+\gamma_{a,b}.}
\end{align}
On the other hand from \eqref{eq:s20}:
\begin{align}
 P'_{S_1(I),W'_1(I)}(a,b) &= 
\sum_{(c,d):F(c,d)=b} P'_{S_1(I)}(a) P'_{W_1(I)|S_1(I)}(c|a) P'_{T(I)|S_1(I)}(d|a) \\
&=  P'_{S_1(I),T(I)}(a,b)+P_{S_1(I)}(a)P'_{T(I)|S_1(I)}(0|a)P'_{W_1(I)|S_1(I)}(b|a).
 \label{eq:s25}
\end{align}
So, in order for $P'_{S(I),W'(I)}(a,b)=P_{S_1,W}(a,b),\forall (a,b)\in \mathcal{S}\times\mathcal{W}$ to hold, we must have:
\begin{align}
 P_{S_1,W}(a,b)=P'_{S_1,T}(a,b)+P_{S_1}(a)P'_{T|S_1}(0|a)P'_{W_1(I)|S_1(I)}(b|a).
 \label{eq:s26}
\end{align}
From \eqref{eq:s23.5} and \eqref{eq:s26}, we have:
\begin{align*}
  P_{S_1,W}(a,b)= P'_{S_1,T}(a,b)+P'_{T|S_1}(0|a)(P_{S_1,W_1}(a,b)+\gamma_{a,b}).
\end{align*}
Define the following:
\begin{align}
 &P'_{T|S_1}(0|a)\triangleq\min_{b\in\mathcal{W}} \frac{P_{S_1,W}(a,b)}{P_{S_1,W}(a,b)+\gamma_{a,b}}
\label{eq:pp1}
 \\
 &P'_{S_1,T}(a,b)\triangleq P_{S_1,W}(a,b)-(P_{S_1,W}(a,b)+\gamma_{a,b})P'_{T|S_1}(0|a), \forall a,b\in \mathcal{S}\times\mathcal{W}.
 \label{eq:pp2}
\end{align}
Note that $P'_{S_1,T}$ defined above is a valid probability distribution because of the following arguments. Fix an arbitrary $a \in \mathcal{S}$.  Since $\gamma_{a,b} +P_{S_1,W}(a,b) \geq 0$, for all $(a,b) \in \mathcal{S} \times \mathcal{W}$, we have $P'_{T_1|S_1}(0|a)\geq0$. \textcolor{black}{
Suppose that $P'_{T|S_1}(0|a)>1$, then we have 
$P_{S_1,W}(a,b)>P_{S_1,W}(a,b)+\gamma_{a,b}$ 
for all $b \in \mathcal{W}$. Noting that $\sum_b \gamma_{a,b}=0$, we see that $P_{S_1}(a)>P_{S_1}(a)$ which is a contradiction.  Hence, we have 
$0 \leq P'_{T_1|S_1}(0|a) \leq 1$.
Since $\gamma_{a,b}+P_{S_1,W}(a,b)\geq0$ for all $b \in \mathcal{W}$, and from the definition of $P'_{T|S_1}(0|a)$, we have 
$P'_{S_1,T}(a,b) \geq 0$. 
Moreover, $\sum_{b \in \mathcal{W}} P'_{S_1,T}(a,b)=P_{S_1}(a)-P_{S_1}(a)P'_{T_1|S_1}(0|a)$, so that $\sum_{b \in \mathcal{W}\cup\{0\}}P'_{S_1,T}(a,b) =P_{S_1}(a)$. }

\textcolor{black}{Thus we have ensured that $P'_{S(I),W'_1(I)}=P_{S_1,W}.$}
\textcolor{black}{
Next, note that}
\begin{align*}
\textcolor{black}{P'_{T|S_1}(0|a)}&=\textcolor{black}{\min_{b\in\mathcal{W} }\frac{P_{S_1,W}(a,b)}{P_{S_1,W}(a,b)+\gamma_{a,b}}}\\
 &\geq \textcolor{black}{\frac{\min_{a,b}P_{S_1,W}(a,b)}{\min_{a,b}P_{S_1,W}(a,b)+\max_{a,b}\gamma_{a,b}}}\\
 &\stackrel{(a)}{\geq}   \textcolor{black}{\frac{\min_{a,b}P_{S_1,W}(a,b)}{\min_{a,b}P_{S_1,W}(a,b)+\phi+\phi'}}
\end{align*}
\textcolor{black}{
where in (a) we have used \eqref{eq:s23.5}.
Let $\mathcal{F}$ be the event that $\{T=0\}$. Then, from the grouping axiom of entropy, we have:}
\begin{align}
 \textcolor{black}{H(T)= h_b(P(\mathcal{F}))+H(T|\mathcal{F}^c)P(\mathcal{F}^c)\leq h_b \left(p(\tau)\right)+(1-p(\tau))\log{|\mathcal{W}|}=\Lambda_n(\tau),}
\label{Eq:Comment}
\end{align}
\textcolor{black}{where $p(\tau)
\triangleq
\frac{\min_{a,b} P_{S_1,W}(a,b)}{\min_{a,b} P_{S_1,W}(a,b)+\phi+\phi'}$, and
the last inequality follows from the relation $P(\mathcal{F})> \frac{1}{2}$ which is true if $\phi+\phi' < \min_{a,b} P_{S_1,W}(a,b)$. The conditions $n \in \mathcal{B}(\epsilon)$ and $\tau \in (\frac{1}{\sqrt{n}}\sigma,\sigma')$ are chosen such that $\phi < \frac{1}{2} \min_{a,b} P_{S_1,W}(a,b)$
and $\phi' < \frac{1}{2} \min_{a,b} P_{S_1,W}(a,b)$. 
This can be seen as follows: 
\begin{align*}
    &     |\mathcal{S}|\sqrt{\frac{1}{2n}\log_e{\frac{8|\mathcal{S}|}{\min_{a,b\in\mathcal{S}\times\mathcal{W}}P_{S,W}(a,b)}}}< \tau 
    \Rightarrow \phi' <  \frac{1}{2} \min_{a,b} P_{S_1,W}(a,b),\\
    &
    \tau < \frac{|\mathcal{S}||\mathcal{W}|}{2(|\mathcal{S}|+|\mathcal{W}|)}\min_{a,b\in\mathcal{S}\times\mathcal{W}}P_{S,W}(a,b)
    \Rightarrow 
    \phi <  \frac{1}{2} \min_{a,b} P_{S_1,W}(a,b).
\end{align*}
}

Encoder 1 produces an i.i.d. vector of random variables $T^{n,m}$. At this step, the encoder transmits $T^{n,m}$ losslessly. This can be done with rate $r'_1=\frac{1}{nm}H(T^{n,m})=H(T)$. So, there exists a sequence of encoding functions $e'_{1,n,m}:{\mathcal{T}}^{n,m}\to \{0,1\}^{l'_{2,1}}, n,m\in \mathbb{N}$, and decoding functions $f'_{n,m}:\{0,1\}^{l'_{2,1}}\to \mathcal{T}^{n,m}$ with rate $\frac{l'_{2,1}}{nm}=H(T)+\epsilon_{nm}$, where $\epsilon_{nm}\to 0$ as $nm\to \infty$, and $P'\left(f'_{n,m}\left(e'_{1,n,m}(T^{n,m})\right)\neq T^{n,m}\right)\to 0$ as $nm\to \infty$. Encoder $1$ transmits $e'_{1,n,m}(T^{n,m})$. The decoder reconstructs $f'_{n,m}(e'_{1,n,m}(T^{n,m}))$. At the end of this step, the decoder recovers $T^{n,m},W_1^{'n,m},W_2^{n,m}$ with probability of error vanishing in $nm$, where $W_1^{'n,m}$ is defined in \eqref{eq:s20}. \textcolor{black}{Encoder $1$ transmits $l'_{2,1}$ bits in this step.
So, $l'_{2,1}\leq mn\Lambda_n(\tau)$.
Define the following single-letter distribution: for all 
$(x_1,x_2,w'_1,w_2) \in \mathcal{X}_1 \times \mathcal{X}_2 \times \mathcal{W} \times 
\mathcal{W}$,
\begin{align}
    &P'_{X_1,X_2,W'_1,W_2}(x_1,x_2,w'_1,w_2) \triangleq \frac{1}{n} \sum_{i \in [1,n]} P'_{X_1(i),X_2(i),W'_1(i),W_2(i)}(x_1,x_2,w'_1,w_2),
       \label{eq:9}
    \end{align}
    where
    \begin{align}
P'_{X_1^n,X_2^n,W'^n_1,W^n_2}(x_1^n,x_2^n,{w'}^n_1,w_2^n)\triangleq  P^n_{X_1,X_2}(x^n_1,x^n_2) \sum_{t^n,w^n_1\in \mathcal{T}^n\times \mathcal{W}^n} 
\mathbbm{1}({w}^n_1=Q(s_1^n)) \mathbbm{1}(w^n_2=Q(s_2^n)) {P'}^n_{T|S_1}(t^n|s^n_1) \mathbbm{1}({w'}^n_1= F^n(w^n_1,t^n)),
\end{align}
for all $x_1^n,x_2^n,{w'}^n_1,w_2^n\in \mathcal{X}_1^n\times \mathcal{X}_2^n\times \mathcal{W}^n\times \mathcal{W}^n$, and $s_i^n=f_i(x_i^n), i\in \{1,2\}$, and $F^n(\cdot,\cdot)$ denotes the $n$-letter function obtained by applying the function $F(\cdot,\cdot)$ on the components of its arguments.}
Our objective is to construct i.i.d. sequences at the distributed transmitter and receiver terminals which follow this single-letter distribution. \textcolor{black}{Note that the output of the finite blocklength quantizer is a vector of variables which are not independent and identically distributed.} In order to make the elements of the sequence identically distributed, we perform a random permutation of the elements as follows. We produce i.i.d. outputs because in the next step, we plan to use $(W'_1,W_2)$ as side-information for DSC-SI, and the i.i.d. property facilitates the analysis.
\\\textbf{Step 3:} Define the following:
\begin{align}
& \textcolor{black}{ \widetilde{W}'_1([1,n],i)=W_1'(\pi_i([1,n]),i), \quad \widetilde{W}_2([1,n],i)=W_2(\pi_i([1,n]),i),} 
\\& \textcolor{black}{\widetilde{X}_{j}([1,n],i)=X_j(\pi_i([1,n]),i), \ \ i\in [1,m],j\in\{1,2\}.}
\end{align}
\begin{Claim}
 The $m$-length vector of random tuples $(\widetilde{X}_1(j,[1,m]),\widetilde{X}_2(j,[1,m]),\widetilde{W}'_1(j,[1,m]),\widetilde{W}_2(j,[1,m]))$ is a vector of i.i.d. random variables for any fixed $j\in[1,n]$. Furthermore, the probability distribution does not depend on $j$.
\textcolor{black}{ The distribution of $(\widetilde{X}_1(j,l),\widetilde{X}_2(j,l),\widetilde{W}'(j,l),\widetilde{W}_2(j,l))$ is equal to $P'_{X_1,X_2,W'_1,W_2}$ for all $j \in [1,n],l \in [1,m]$.} 
 \label{Claim:1}
\end{Claim}
\textit{Proof.}
\textcolor{black}{Please see Appendix \ref{App:Claim1}.}

 \textcolor{black}{Define }
\begin{align}
 \textcolor{black}{ P'_{X_1,X_2,W'_1,W_2,U_1,U_2}(x_1,x_2,w_1,w_2,u_1,u_2)= P'_{X_1,X_2,W'_1,W_2}(x_1,x_2,w_1,w_2)P_{U_1|X_1,W}(u_1|x_1,w_1)P_{U_2|X_2,W}(u_2|x_2,w_2).}
\label{eq:distpp}
\end{align}
 \textcolor{black}{Fixing the distributions $P_{W|S_1}$, $P_{U_1|X_1,W}$, and $P_{U_2|X_2,W}$  completely determines the induced joint probability distribution $P'_{X_1,X_2,W'_1,W_2,U_1,U_2}$ via the choice of 
 $Q^*_{n,\phi,\phi'}$,
 $P'_{T|S_1}$, the function $F(\cdot,\cdot)$, and the random permutations $\pi_i, i\in [1,m]$. 
 Hence, determining the RD region given in Theorem \ref{thm: MCDSC22} involves taking the union of RD tuples satisfying \eqref{eq:prembounds1} to \eqref{eq:prembounds4} over all $f_i$, $g_i$, $P_{W|S_1}$, $P_{U_1|X_1,W}$ and $P_{U_2|X_2,W}$, where $g_i, i \in \{1,2\}$ are the reconstruction functions used at the decoder.} 

\textcolor{black}{Optimizing the RD region described in Theorem \ref{thm: MCDSC22} over all $f_i$, $g_i$, $P_{W|S_1}$, $P_{U_1|X_1,W}$ and $P_{U_2|X_2,W}$, involves an optimization over multi-letter functions whose input alphabets are not bounded.} Consequently, this region is not computable. For this reason we provide a computable single-letter inner bound to this region which is given in Theorem \ref{thm:MC}.

\begin{Claim}
 $P'_{X_1,X_2,W'_1,W_2,U_1,U_2}$ satisfies the following conditions:
\\\textcolor{black}{1)  $P'_{W'_1,U_1|X_1,X_2}(w_1,u_1|x_1,x_2)=P_{W|S_1}(w_1|s_1)P_{U_1|X_1,W}(u_1|x_1,w_1), \forall x_1,x_2,w_1,u_1\in \mathcal{X}_1\times \mathcal{X}_2\times \mathcal{W}\times \mathcal{U}_1,$ where $s_1=f_1(x_1)$,
\\2)  $P'_{U_2|X_2,W_2}(u_2|x_2,w_2)=P_{U_2|X_2,W}(u_2|x_2,w_2), \forall u_2,x_2,w_2\in \mathcal{U}_2\times \mathcal{X}_2\times \mathcal{W}$,
\\3) $P'(W'_1=W_2,S_1=S_2)\geq p(\tau)(1-\delta_n)$,
\\4) $U_1 - (X_1,W'_1) - (X_2,W_2) - U_2$,
\\ 5)  $(W'_1,W_2) - (S_1,S_2) - (X_1,X_2).$}
\label{Claim:2}
\end{Claim}
\textit{Proof.}
\textcolor{black}{Please see Appendix \ref{App:Claim2}.}

\textcolor{black}{
Consider the $m$-length vector of random tuples $(\widetilde{X}_1(j,[1,m]),\widetilde{X}_2(j,[1,m]),\widetilde{W}'_1(j,[1,m]),\widetilde{W}_2(j,[1,m]))$, $j \in [1,n]$.
Define $P_{\widehat{X}_i|W'_1,W_2,U_1,U_2}(\hat{x}_i|w_1,w_2,u_1,u_2)=\mathbbm{1}(\hat{x}_i=g_i(w_1,w_2,u_1,u_2))$ for all $(w_1,w_2,u_1,u_2,\hat{x}_i) \in 
\mathcal{W} \times \mathcal{W} \times \mathcal{U}_1 \times \mathcal{U}_2 \times 
\mathcal{X}$, and $i=1,2$.
Consider the distributed source with side information $({\mathcal{X}}_1,{\mathcal{X}}_2,{\mathcal{W}},{\mathcal{W}},P'_{X_1,X_2,W'_1,W_2},d_1,d_2)$.  From Theorem \ref{thm:BTSI}, since $P'_{U_1,U_2|X_1,X_2,W'_1,W_2} P_{\widehat{X}_1,\widehat{X}_2|W'_1,W_2,U_1,U_2} \in  \mathcal{P}_{BTSI}$, the set of rate-distortion tuples $(r''_1,r''_2,D_1,D_2) \in [0,\infty)^4$ that  satisfy the following bounds are achievable:}
\begin{align}
 &r''_1\geq I(X_1;U_1|W'_1,W_2,U_2),\qquad r''_2\geq I(X_2;U_2|W'_1,W_2,U_1),
 \label{thm:41}\\
 &r''_1+r''_2\geq I(X_1,X_2;U_1,U_2|W'_1,W_2),\label{thm:42}\\
&D_i \geq \mathbb{E} (d_i(X_i,\widehat{X}_i)), \ 
\ i \in \{1,2\}.
\label{thm:43}
\end{align}
\textcolor{black}{So, there exists a sequence of encoding functions $e''_{1,m}:{\mathcal{X}}_1^m\times{\mathcal{W}}^{ m}\to \{0,1\}^{l''_{1,i}}$, and $e''_{2,m}:{\mathcal{X}}_2^m\times{\mathcal{W}}^{m}\to \{0,1\}^{l''_{2,i}}, m\in \mathbb{N}$, and reconstruction functions $f''_{i,m}:\{0,1\}^{l''_{1,i}}\times\{0,1\}^{l''_{2,i}}\times{\mathcal{W}}^m \times {\mathcal{W}}^{m}\to {\mathcal{X}}_i^m, i\in\{1,2\}$ such that $\frac{l''_{1,i}}{m}\leq r''_i+\epsilon_m$, 
$\frac{1}{m} \sum_{k=1}^m \mathbb{E}d_i\left(\widetilde{X}_i(j,k),\widehat{\widetilde{X}}_i(j,k) \right) \leq D_i+\epsilon_m$, $i\in \{1,2\}$, $j \in[1,n]$, where $\epsilon_m\to 0$ as $m\to \infty$, for any tuple $(r''_1,r''_2,D_1,D_2)$ satisfying \eqref{thm:41}, \eqref{thm:42}, and \eqref{thm:43}, where }
\[
\textcolor{black}{\widehat{\widetilde{X}}_i(j,[1:m])\triangleq f''_{i,m}\left(e''_{1,m}(\widetilde{X}_1(j,[1,m]),\widetilde{W}_1^{\prime} (j,[1,m])),e''_{2,m}(\widetilde{X}_2(j,[1,m]),\widetilde{W}_2(j,[1,m])), \widetilde{W}_1^{\prime}(j,[1,m]),\widetilde{W}_2(j,[1,m])\right).}
\]

\textcolor{black}{In this step, Encoder $1$ transmits $e''_{1,m}(\widetilde{X}_1(k,[1,m]), \widetilde{W}_1^{\prime}(k,[1,m])), k\in [1,n]$ in $n$ blocks each of length $m$}. Similarly, encoder $2$ transmits $e''_{2,m}(\widetilde{X}_2(k,[1,m]), \widetilde{W}_2(k,[1,m])), k\in [1,n]$. The decoder reconstructs the sources using the appropriate reconstruction functions. At the end of this step, the decoder recovers $\widehat{\widetilde{X}}_i^{n,m}, i\in {1,2}$. \textcolor{black}{Encoder $i$ transmits $nl''_{1,i}$ bits, where $i\in \{1,2\}$.}

\noindent  \textbf{Step 5.} Reconstructing the source at the decoder:
\\ The decoder reverses the permutation to recover the reconstruction of the source. \textcolor{black}{Define $\widehat{X}_i([1,n],j)=\widehat{\widetilde{X}}_i(\pi_j^{-1}([1,n]),j)$. Then from the previous steps, we conclude that for $i=1,2,$ }
\[
\textcolor{black}{\frac{1}{nm} \sum_{j=1}^n \sum_{k=1}^m \mathbb{E}(d_i(X_i(j,k), \widehat{X}_i(j,k)))\leq D_i+\epsilon_m.}
\]
Also, we have $R_1=r_1+r'_1+r''_1$, and $R_2=r_2+r''_2$. This gives the bounds in Theorem \ref{thm: MCDSC22}.

\end{proof}
\section{Proof of Theorem \ref{thm:MC}: Achievability of the FLMC Region}
\label{Sec:Ach}
In this section, we prove that the $\mathcal{RD}_{FLMC}$ region in Theorem \ref{thm:MC} is achievable. The proof uses the following lemmas. 

\begin{Lemma}[\cite{zhang2007estimating}]
 Let $P$ and $Q$ be probability measures defined on a common finite set $\mathcal{A}$, such that $V(P,Q)\leq v$, then:
 \begin{align*}
 \big|H(Q)-H(P)\big|\leq h_b(v)+v\log{(|\mathcal{A}|-1)}.
\end{align*}
\label{prop:RY}
\end{Lemma}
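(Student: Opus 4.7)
My plan is to prove the bound via a coupling argument, combined with a Fano-type decomposition of the conditional entropy across a correctly/incorrectly coupled indicator.

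First, I would invoke the maximal coupling theorem for total variation distance: there exists a joint distribution $\pi$ on $\mathcal{A}\times\mathcal{A}$ with marginals $P$ and $Q$ such that $\pi(X\neq Y)=V(P,Q)\leq v$. Writing $H(P)=H(X)$ and $H(Q)=H(Y)$, the chain rule gives $H(X)-H(Y)=H(X|Y)-H(Y|X)$, so it suffices to show that both $H(X|Y)$ and $H(Y|X)$ are at most $h_b(v)+v\log(|\mathcal{A}|-1)$. By symmetry of the coupling, I only need to bound $H(X|Y)$.

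Next, I would introduce the indicator $Z=\mathbbm{1}(X\neq Y)$, so that $p\triangleq\Pr(Z=1)\leq v$. The standard expansion yields
\begin{align*}
H(X|Y)\;\leq\;H(X,Z|Y)\;=\;H(Z|Y)+H(X|Y,Z)\;\leq\;H(Z)+\Pr(Z=0)\,H(X|Y,Z=0)+\Pr(Z=1)\,H(X|Y,Z=1).
\end{align*}
When $Z=0$ we have $X=Y$, so $H(X|Y,Z=0)=0$. When $Z=1$, $X$ takes values in $\mathcal{A}\setminus\{Y\}$, hence $H(X|Y,Z=1)\leq\log(|\mathcal{A}|-1)$. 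Therefore $H(X|Y)\leq h_b(p)+p\log(|\mathcal{A}|-1)$. Since $h_b$ is monotone increasing on $[0,1/2]$ and the lemma is trivial once $v\geq 1/2$ (as $h_b(v)+v\log(|\mathcal{A}|-1)$ then exceeds $\log|\mathcal{A}|$), I may assume $v\leq 1/2$, and monotonicity of $h_b$ on $[0,v]$ gives $h_b(p)\leq h_b(v)$. Combining everything yields the claimed inequality.

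The argument is essentially routine; the only point that needs a little care is justifying the existence of the coupling attaining $\Pr(X\neq Y)=V(P,Q)$, and handling the degenerate regime $v>1/2$ so that $h_b$ monotonicity can be invoked. Neither is a serious obstacle, so this is really a two-line proof once the coupling viewpoint is adopted.
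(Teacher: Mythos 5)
The paper does not give its own proof of this lemma; it cites Zhang (2007). Your coupling-plus-Fano route is the standard argument, and the core steps are sound: a maximal coupling achieving $P(X\neq Y)=V(P,Q)$, the chain-rule identity $H(X)-H(Y)=H(X|Y)-H(Y|X)$, and the Fano-style expansion via the indicator $Z=\mathbbm{1}(X\neq Y)$ giving $H(X|Y)\le h_b(p)+p\log(|\mathcal{A}|-1)$ with $p=V(P,Q)$.

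The gap is in your dismissal of the case $v\ge 1/2$. You assert that $h_b(v)+v\log(|\mathcal{A}|-1)\ge \log|\mathcal{A}|$ there, making the bound trivial, but this is false: for $|\mathcal{A}|=2$ and $v=0.9$ the right-hand side equals $h_b(0.9)\approx 0.47<1=\log 2$, and for $|\mathcal{A}|=3$ and $v=1/2$ it equals $1.5<\log 3$. In fact $g(v)\triangleq h_b(v)+v\log(|\mathcal{A}|-1)$ attains its maximum value $\log|\mathcal{A}|$ only at $v=1-1/|\mathcal{A}|$ and decreases thereafter, so the lemma \emph{as written} (hypothesis $V(P,Q)\le v$, conclusion bounded by $g(v)$) is actually false for $v>1-1/|\mathcal{A}|$: with $|\mathcal{A}|=2$, $P$ uniform, $Q$ a point mass, and $v=0.9$, one has $V(P,Q)=1/2\le v$ but $|H(P)-H(Q)|=1>g(0.9)$. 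The repair is to observe that $g'(v)=\log\frac{(1-v)(|\mathcal{A}|-1)}{v}\ge 0$ precisely for $v\le 1-1/|\mathcal{A}|$, so $g(p)\le g(v)$ whenever $p\le v\le 1-1/|\mathcal{A}|$; this strengthens your termwise $h_b$ argument (which only covers $v\le 1/2$) and makes clear that the lemma needs the implicit restriction $v\le 1-1/|\mathcal{A}|$. Alternatively, state and apply the lemma with $v=V(P,Q)$ exactly, as the paper effectively does downstream, in which case the coupling gives the bound directly and no monotonicity argument is needed at all.
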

\begin{Lemma}
 Let $P_{XYZ}$ and $Q_{XYZ}$ be two probability measures defined on the finite alphabet $\mathcal{A}\times\mathcal{A}\times\mathcal{A}$, such that $V(P_{X,Y,Z},Q_{X,Y,Z})\leq v$, then:
 \\1) $V(P_{X,Y},Q_{X,Y})\leq v$,
\textcolor{black}{ \\2) $\big|I_P(X;Y)-I_Q(X;Y)\big|\leq 4\left(h_b(v)+v\log{|\mathcal{A}|}\right)$,
  \\3) 
  $\big|I_P(X;Y|Z)-I_Q(X;Y|Z)\big|\leq 8\left(h_b(v)+v\log{|\mathcal{A}|}\right)$.}
  \label{prop:mutI}
\end{Lemma}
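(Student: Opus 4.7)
The plan is to prove the three statements in order, since parts (2) and (3) both rely on part (1) together with the entropy continuity estimate in Lemma \ref{prop:RY}. Taken together the three facts form an elementary continuity package for mutual information in total variation, so the proof is really a matter of (a) a standard marginalization bound for total variation, and (b) careful bookkeeping of which entropies live on which alphabets.

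For part (1), I would simply observe that the total variation of a marginal is upper bounded by the total variation of the joint. Writing $V(P_{X,Y},Q_{X,Y}) = \frac{1}{2}\sum_{x,y}|P_{X,Y}(x,y) - Q_{X,Y}(x,y)|$, the triangle inequality gives $|P_{X,Y}(x,y) - Q_{X,Y}(x,y)| = |\sum_z (P_{X,Y,Z}(x,y,z) - Q_{X,Y,Z}(x,y,z))| \leq \sum_z |P_{X,Y,Z}(x,y,z) - Q_{X,Y,Z}(x,y,z)|$, and summing over $(x,y)$ yields $V(P_{X,Y},Q_{X,Y}) \leq V(P_{X,Y,Z},Q_{X,Y,Z}) \leq v$. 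The same one-line argument bounds the TV distance of every marginal of interest (namely $X$, $Y$, $Z$, $(X,Z)$, $(Y,Z)$) by $v$ as well; I would record this once and reuse it below.

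For part (2), I would use the identity $I(X;Y) = H(X) + H(Y) - H(X,Y)$, apply the triangle inequality to obtain $|I_P(X;Y) - I_Q(X;Y)| \leq |H_P(X) - H_Q(X)| + |H_P(Y) - H_Q(Y)| + |H_P(X,Y) - H_Q(X,Y)|$, and then invoke Lemma \ref{prop:RY} on the three alphabets $\mathcal{A}$, $\mathcal{A}$, and $\mathcal{A}\times\mathcal{A}$, each time with TV bound at most $v$ by part (1). Using the crude but uniform estimate $\log(|\mathcal{A}|^k - 1) \leq k \log|\mathcal{A}|$ produces $3h_b(v) + 4v\log|\mathcal{A}| \leq 4(h_b(v) + v\log|\mathcal{A}|)$, which is the stated bound. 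Part (3) is analogous: I would decompose $I(X;Y|Z) = H(X,Z) + H(Y,Z) - H(Z) - H(X,Y,Z)$, apply the triangle inequality to four terms, and invoke Lemma \ref{prop:RY} on alphabets of sizes $|\mathcal{A}|^2$, $|\mathcal{A}|^2$, $|\mathcal{A}|$, and $|\mathcal{A}|^3$ respectively. Summing gives $4h_b(v) + 8v\log|\mathcal{A}|$, which is at most $8(h_b(v) + v\log|\mathcal{A}|)$.

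No step presents a genuine obstacle; the only care required is the alphabet-size bookkeeping so that the constants $4$ and $8$ come out exactly as stated. The worst term in each decomposition lives on the largest joint alphabet ($\mathcal{A}\times\mathcal{A}$ in part (2), $\mathcal{A}\times\mathcal{A}\times\mathcal{A}$ in part (3)), and it is that term, together with the total number of summands, that fixes the overall multiplicative constant after one invokes $\log(|\mathcal{A}|^k-1) \leq k\log|\mathcal{A}|$ to absorb the various $k$'s into a single factor multiplying $v\log|\mathcal{A}|$.
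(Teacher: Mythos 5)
Your proofs of parts (1) and (2) coincide exactly with the paper's: (1) is the standard one-line argument that total variation decreases under marginalization, and (2) expands $I(X;Y)=H(X)+H(Y)-H(X,Y)$, applies the triangle inequality, and bounds each entropy term via Lemma~\ref{prop:RY}, yielding $3h_b(v)+4v\log|\mathcal{A}|\leq 4(h_b(v)+v\log|\mathcal{A}|)$. For part (3), however, you take a genuinely different route. The paper's proof is a one-line citation of the chain-rule identity $I(X;Y|Z)=I(X,Z;Y)-I(Z;Y)$ followed by an appeal to parts (1) and (2); if one applies part (2) literally to the enlarged alphabet $\mathcal{A}^2$ for the pair $(X,Z)$, the resulting bound is $8h_b(v)+12v\log|\mathcal{A}|$, and even with tighter entropy-by-entropy bookkeeping one gets $6h_b(v)+10v\log|\mathcal{A}|$, neither of which is obviously dominated by $8(h_b(v)+v\log|\mathcal{A}|)$ when $v\log|\mathcal{A}|>h_b(v)$. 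Your decomposition $I(X;Y|Z)=H(X,Z)+H(Y,Z)-H(Z)-H(X,Y,Z)$, with Lemma~\ref{prop:RY} applied once to each of the four joint entropies on alphabets of sizes $|\mathcal{A}|^2,|\mathcal{A}|^2,|\mathcal{A}|,|\mathcal{A}|^3$, gives $4h_b(v)+8v\log|\mathcal{A}|$, which is transparently at most $8(h_b(v)+v\log|\mathcal{A}|)$. So your argument is not only correct but actually closes a gap that the paper's terse sketch leaves open: it is the decomposition that makes the stated constant come out cleanly, and it avoids having to rebalance the $h_b(v)$ and $v\log|\mathcal{A}|$ terms against each other.
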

\textit{Proof.}
\textcolor{black}{Please see Appendix \ref{App:prop:mutI}.}

The main idea in this proof is as follows. \textcolor{black}{
For every distribution $Q\in \mathcal{P}_{FLMC}$, 
we show that there exists a distribution $P'$ corresponding to $Q$, satisfying properties (1)-(5) in Theorem \ref{thm: MCDSC22}.
Using these properties,  we show that $V(P',Q)$ is no greater than 
$(1-p(\tau)+ p(\tau)\delta_n +\epsilon)$. Then, using the continuity of mutual information on finite alphabets as shown in Lemma \ref{prop:mutI}, Theorem \ref{thm:MC} follows from Theorem \ref{thm: MCDSC22}.} 

\textcolor{black}{Consider a distributed source $(P_{X_1,X_2},d_1,d_2)$, 
a pair of $\epsilon$-correlated components $(S_1,S_2)$ of the sources $X_1,X_2$,
 a distribution $Q_{W,U_1,U_2,\widehat{X}_1,\widehat{X}_2|X_1,X_2}\in \mathcal{P}_{FLMC}$, where $\widehat{X}_{i}=g_{Q,i}(W,U_1,U_2), i\in\{1,2\}$, a parameter $n \in \mathcal{B}(\epsilon)$, and  $\tau\in (\frac{1}{\sqrt{n}}\sigma, \sigma')$. Define the reconstruction functions  $g_{P',i}(W_1,W_2,U_1,U_2)\triangleq g_{Q,i}(W_1,U_1,U_2),$ $ i\in\{1,2\}$.
We claim that $\alpha_{FLMC}((S_1,S_2),Q,\tau,n)\subset\alpha_{MCML}(Q_{W|S_1},Q_{U_1|X_1,W},Q_{U_2|X_2,W},g_{P',1},g_{P',2},\tau,n)$. This along with Theorem \ref{thm: MCDSC22} proves Theorem \ref{thm:MC}. Recall from  Theorem \ref{thm: MCDSC22} that the distribution $P'_{W'_1,W_2,U_1,U_2,X_1,X_2}$
corresponding to the tuple $(Q_{W|S_1},Q_{U_1|X_1,W},Q_{U_2|X_2,W},$ $g_{P',1},g_{P',2})$  can be constructed such that 
\[
P'_{W'_1,W_2,U_1,U_2,X_1,X_2}=P_{X_1,X_2} P'_{W'_1,W_2,U_1,U_2|X_1,X_2}=
P_{X_1,X_2}P'_{W_1,W_2|X_1,X_2}Q_{U_1|X_1,W}Q_{U_2|X_2,W},
\]
and $P'(W'_1=W_2,S_1=S_2) \geq p(\tau) (1-\delta_n)$.
Define the distribution $Q_{W,W,X_1,X_2,U_1,U_2}(w_1,w_2,x_1,x_2,u_1,u_2)$  as $\mathbbm{1}(w_1=w_2)P_{X_1,X_2}(x_1,x_2)Q_{W,U_1,U_2|X_1,X_2}(w_1,u_1,u_2|x_1,x_2)$ for all $w_1,u_1,u_2,x_1,x_2\in \mathcal{W}\times \mathcal{U}_1\times \mathcal{U}_2 \times \mathcal{X}_1 \times \mathcal{X}_2$. }
%

\begin{Claim}
The variational distance between $P'$ and $Q$ is bounded as follows:
\[
\textcolor{black}{V(P'_{W'_1,W_2,X_1,X_2,U_1,U_2},Q_{W,W,X_1,X_2,U_1,U_2})\leq  1-p(\tau)+p(\tau)\delta_n+\epsilon.} \]
\label{claim:RY}
\end{Claim}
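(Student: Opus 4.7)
The plan is to reduce the six-variable TV distance to a four-variable one by a data-processing argument, then exploit the structural properties (1), (3), and (5) from Theorem~\ref{thm: MCDSC22} to bound what remains.

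First, I will observe that both $P'_{W'_1,W_2,X_1,X_2,U_1,U_2}$ and $Q_{W,W,X_1,X_2,U_1,U_2}$ factor the $(U_1,U_2)$ part through the \emph{same} Markov kernel $Q_{U_1|X_1,W}(u_1|x_1,w_1)\,Q_{U_2|X_2,W}(u_2|x_2,w_2)$ (using that $w_1=w_2$ under $Q$, so $Q_{U_2|X_2,W}(u_2|x_2,w_1)=Q_{U_2|X_2,W}(u_2|x_2,w_2)$). By the standard data-processing inequality for total variation, it therefore suffices to establish
\[
V\bigl(P'_{X_1,X_2,W'_1,W_2},\,Q_{X_1,X_2,W_1,W_2}\bigr)\;\leq\;1-p(\tau)+p(\tau)\delta_n+\epsilon.
\]

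Next I will rewrite both distributions using the structural properties. By property (5), $P'_{W'_1,W_2|X_1,X_2}(w_1,w_2|x_1,x_2) = P'_{W'_1,W_2|S_1,S_2}(w_1,w_2|s_1,s_2)$, and by property (1), $P'_{W'_1|S_1,S_2}(w|s_1,s_2)=Q_{W|S_1}(w|s_1)$. Thus
\[
V=\tfrac{1}{2}\sum_{s_1,s_2}P_{S_1,S_2}(s_1,s_2)\sum_{w_1,w_2}\Bigl|P'_{W'_1,W_2|S_1,S_2}(w_1,w_2|s_1,s_2)-\mathbbm{1}(w_1=w_2)\,Q_{W|S_1}(w_1|s_1)\Bigr|.
\]
I split the outer sum over $s_1=s_2$ versus $s_1\neq s_2$. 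On $\{s_1\neq s_2\}$ the inner sum is at most $2$, so this portion is bounded by $P(S_1\neq S_2)\leq \epsilon$, using the $\epsilon$-correlated component property.

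For the $s_1=s_2=s$ portion, I split the inner sum over $w_1=w_2$ and $w_1\neq w_2$. The $w_1\neq w_2$ part yields $\tfrac{1}{2}P'(W'_1\neq W_2\mid S_1=S_2=s)$ directly (since $Q$ vanishes there). For the $w_1=w_2=w$ part, property (1) gives $\sum_{w_2}P'_{W'_1,W_2|S_1,S_2}(w,w_2|s,s)=Q_{W|S_1}(w|s)$, so
\[
\bigl|P'_{W'_1,W_2|S_1,S_2}(w,w|s,s)-Q_{W|S_1}(w|s)\bigr|=\sum_{w_2\neq w}P'_{W'_1,W_2|S_1,S_2}(w,w_2|s,s),
\]
which again sums over $w$ to $P'(W'_1\neq W_2\mid S_1=S_2=s)$. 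Adding the two halves, the $s_1=s_2$ contribution is exactly $P'(W'_1\neq W_2,\,S_1=S_2)$.

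Combining the pieces gives $V\leq P'(W'_1\neq W_2,\,S_1=S_2)+\epsilon$, and by property (3), $P'(W'_1\neq W_2,\,S_1=S_2)\leq P'(S_1=S_2)-P'(W'_1=W_2,S_1=S_2)\leq 1-p(\tau)(1-\delta_n)=1-p(\tau)+p(\tau)\delta_n$, yielding the claim. The only mildly delicate step is the $w_1=w_2$ computation, where one must recognize that the two sums coming from the $w_1=w_2$ and $w_1\neq w_2$ splits coincide after using property (1); the rest is bookkeeping.
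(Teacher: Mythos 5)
Your proof is correct, and it takes a genuinely different decomposition than the paper's. Both begin identically: you factor out the shared kernel $Q_{U_1|X_1,W}Q_{U_2|X_2,W}$ (the paper performs the same reduction, using property~(2) and the indicator in $Q$ to justify that the kernel agrees on both sides) and then push through $P_{X_1,X_2|S_1,S_2}$ via property~(5) to work at the level of $(W'_1,W_2,S_1,S_2)$. From there the routes diverge. The paper peels off $Q_{W,S_1}=P'_{W'_1,S_1}$ (a consequence of property~(1)), conditions on $(W'_1,S_1)$, and then inserts the intermediate indicator $\mathbbm{1}(w_1=w_2,s_1=s_2)$ to split the remaining absolute difference via the triangle inequality into two terms, which sum to $2\bigl(1-P'(W'_1=W_2,S_1=S_2)\bigr)$ and $2\bigl(1-P(S_1=S_2)\bigr)\leq 2\epsilon$ respectively. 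You instead partition the four-variable sum by cases — diagonal versus off-diagonal $(s_1,s_2)$, then diagonal versus off-diagonal $(w_1,w_2)$ — and use property~(1) to identify the diagonal-$w$ residual with the off-diagonal mass, avoiding the triangle inequality entirely. The two decompositions are complementary: the triangle-inequality route is a single uniform manipulation, whereas the case analysis makes visible exactly which joint events contribute. In fact your intermediate bound $V\leq P'(W'_1\neq W_2,S_1=S_2)+P(S_1\neq S_2)=1-P'(W'_1=W_2,S_1=S_2)$ is marginally tighter than the paper's pre-final bound (which carries an extra additive $\epsilon$), though both are dominated by the claimed $1-p(\tau)+p(\tau)\delta_n+\epsilon$ once property~(3) is applied. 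No gaps; the argument is sound.
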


\textit{Proof.} \textcolor{black}{Please see Appendix \ref{App:Claim3}.}

\textcolor{black}{Assume that $(R_1,R_2,D_1,D_2)\in \alpha_{FLMC}((S_1,S_2),Q,\tau,n)$. Then,
\begin{align*}
R_1&\geq I_Q(X_1;U_1|U_2,W) + E_{n,\delta_n} +\Gamma_{n,\delta_n}(\tau)+\Lambda_n(\tau),\\
R_2&\geq I_Q(X_2;U_2|U_1,W) +E_{n,\delta_n}+\Gamma_{n,\delta_n}(\tau),\\
R_1+R_2&\geq I_Q(X_1,X_2;U_1,U_2,W)+E_{n,\delta_n}+\Gamma_{n,\delta_n}(\tau)+\theta_n(\tau)
+\Lambda_n(\tau).
\end{align*}
As a result, from Lemma \ref{prop:mutI}, and Claim \ref{claim:RY}, we have 
\begin{align*}
R_1 &\geq I_{P'}(X_1;U_1|U_2,W'_1,W_2) + E_{n,\delta_n}+\Lambda_n(\tau),\\
R_2 &\geq I_{P'}(X_2;U_2|U_1,W'_1,W_2) +E_{n,\delta_n},\\
R_1+R_2 &\geq I_{P'}(W'_1;S_1)+I_{P'}(X_1;U_1|W'_1,W_2)+ I_{P'}(X_2;U_2|W'_1,W_2)-I_{P'}(U_1;U_2|W'_1,W_2)+\theta_n(\tau)+E_{n,\delta_n}+\Lambda_n(\tau),
\end{align*}
where the last inequality follows from \[
\textcolor{black}{I_{P'}(X_1,X_2;U_1,U_2,W'_1,W_2)\geq I_{P'}(W'_1;S_1)+I_{P'}(X_1;U_1|W'_1,W_2)+ I_{P'}(X_2;U_2|W'_1,W_2)-I_{P'}(U_1;U_2|W'_1,W_2).}
\]} 
\textcolor{black}{For the distortion vector, we have for $i=1,2$:}
\textcolor{black}{\begin{align*}
 D_i\geq \mathbb{E}_{Q}\{d_i\big(X_i,g_{Q,i}(U_1,U_2,W)\big)\}+ 2\left(1-p(\tau)+p(\tau)\delta_n+\epsilon\right) d_{i,max}\Rightarrow D_i\geq \mathbb{E}_{P'}\{d_i\big(X_i,g_{P',i}(U_1,U_2,W'_1,W_2)\big)\},
\end{align*}}
\textcolor{black}{which is shown below:}
 \textcolor{black}{
\begin{align*}
&\mathbb{E}_{Q}\{d_i\big(X_i,g_{Q,i}(U_1,U_2,W)\big)\}= \sum_{u_1,u_2,w,x_i}Q_{W,X_i,U_1,U_2}(w,x_i,u_1,u_2)d_i\big(x_i,g_{Q,i}(u_1,u_2,w)\big)
\\&
=\sum_{u_1,u_2,w_1,w_2,x_i}
Q_{W,W,X_i,U_1,U_2}(w_1,w_2,x_i,u_1,u_2)d_i\big(x_i,g_{Q,i}(u_1,u_2,w_1)\big)
\\
 &\geq \sum_{u_1,u_2,w_1,w_2,x_i}P'_{W_1,W_2,X_i,U_1,U_2}(w_1,w_2,x_i,u_1,u_2)d_i\big(x_i,g_{P',i}(u_1,u_2,w_1,w_2)\big)
 \\&-d_{i,\max} \sum_{u_1,u_2,w_1,w_2,x_i}|Q_{W,W,X_i,U_1,U_2}(w_1,w_2,x_i,u_1,u_2)-P'_{W'_1,W_2,X_i,U_1,U_2}(w_1,w_2,x_i,u_1,u_2)|
 \\
 &\geq \mathbb{E}_{P'}\{d_i\big(X_i,g_{P',i}(U_1,U_2,W'_1,W_2)\big)\}-
 2\left(1-p(\tau)+p(\tau)\delta_n+\epsilon\right)d_{i,max}.
\end{align*}}
So, $(R_1,R_2,D_1,D_2)\in \alpha_{MCML}(Q_{W|S_1},Q_{U_1|X_1W},Q_{U_2|X_2W},g_{P',1},g_{P',2},\tau,n)$. This completes the proof.

\section{Binary One Help One Example}
\label{Sec:BOHO}
In this section, we investigate an example where the FLMC region is strictly larger than the CC region.
\begin{Example}
 \textcolor{black}{Consider the distributed source depicted in Figure \ref{fig:BOHO}.} \textcolor{black}{Let $p,\epsilon\in (0,\frac{1}{2})$, and let X, Z and E be independent Bernoulli random variables with $P(X=1)=0.5, P(Z=1)=p, P(E=1)=\epsilon$. The distributed source consists of two source variables  $X_1=X+E$ and $X_2=(X,Z)$, and the distortion functions are given by $d_1(x_1,\hat{x}_1)=0$, for all $x_1,\hat{x}_1\in \{0,1\}$,  and $d_2(x_2,\hat{x}_2)=d_H(\hat{x}+\hat{z},x+z)$ where $\hat{x}_2=(\hat{x},\hat{z}),x_2=(x,z)$, and $d_H$ is the binary Hamming distortion. This is called the binary-one-help-one (BOHO) problem.}
 \label{ex:BOHO}
\end{Example}
\textcolor{black}{The setup was investigated in \cite{wagner} for case when the two sources have non-trivial common components (i.e. $E=0$).} The first encoder is called the helper. The reason is that the distortion constraint is only a function of the second encoder's source and its reconstruction. \textcolor{black}{Hence, Encoder 1 is `helping' the second encoder by sending a quantized version of its source.} \textcolor{black}{The following gives an inner bound to the achievable RD region for this problem for the case when $\epsilon=0$.}
\begin{Lemma}[\cite{wagner}]
For $\epsilon=0$, the following RD quadruples are achievable using the CC scheme. 
\begin{equation}
\textcolor{black}{(r_1,r_2,d_1,d_2)=(1-h_b(\delta),h_b(p\ast\delta)-h_b(\delta_1),0,\delta_1), \qquad 0\leq \delta_1\leq \delta \leq 1.}
\label{bound}
\end{equation}
 \end{Lemma}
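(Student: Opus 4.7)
\textit{Proof proposal.} The plan is to specialize the CC scheme of Theorem \ref{thm:CC} to a carefully chosen set of auxiliaries. When $\epsilon=0$ we have $E\equiv 0$, so $X_1=X$ and $S\triangleq X$ is a common component of $X_1$ and $X_2=(X,Z)$ via $f_1(x_1)=x_1$ and $f_2(x,z)=x$. Fix $0\le\delta_1\le\delta\le\tfrac{1}{2}$ and introduce independent Bernoulli auxiliaries $N\sim\mathrm{Bern}(\delta)$ and $M\sim\mathrm{Bern}(\delta_1)$, also independent of $(X,Z)$. Set $W = X\oplus N$, so that $W-S-(X_1,X_2)$ holds trivially. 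Let $U_1$ be constant (permissible because $d_1\equiv 0$), and choose $U_2$ to realize the optimal binary backward test channel for the source $B\triangleq N\oplus Z$ at Hamming distortion $\delta_1$: take $U_2\sim\mathrm{Bern}(r)$ with $r=(p\ast\delta-\delta_1)/(1-2\delta_1)$, coupled so that $B=U_2\oplus M$ with $U_2\perp M$. Define the reconstruction $g_2(W,U_1,U_2)=(0,\,W\oplus U_2)$ so that $\hat X\oplus\hat Z=W\oplus U_2$; the function $g_1$ can be chosen arbitrarily.

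I would then verify the Markov chain $U_1-(X_1,W)-(X_2,W)-U_2$, which is immediate since $U_1$ is constant and $U_2$ is a function of $(Z,N,M)$, determined by $X_2$, $W$ and external randomness that is independent of $X_1$ given $(X_2,W)$. The crucial step is the identity $B\perp W$: conditioning on $X$ gives $P(B=1,W=1)=\tfrac{1}{2}(p\ast\delta)=P(B=1)P(W=1)$, hence $B\sim\mathrm{Bern}(p\ast\delta)$ independently of $W$. Because $X\oplus Z=W\oplus B$ and $U_2$ depends on $(X,Z)$ only through $B$ and through randomness independent of $W$, the conditional mutual information collapses:
\[
I\bigl((X,Z);U_2\mid W\bigr)=I(B;U_2)=h_b(p\ast\delta)-h_b(\delta_1),
\]
by the standard binary rate-distortion calculation with the chosen backward channel. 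Also $I(X;W)=1-h_b(\delta)$.

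Plugging into the CC bounds \eqref{eq:CC1}--\eqref{eq:CC3} yields $R_1\ge 0$, $R_2\ge h_b(p\ast\delta)-h_b(\delta_1)$, and $R_1+R_2\ge (1-h_b(\delta))+h_b(p\ast\delta)-h_b(\delta_1)$. The claimed pair $(r_1,r_2)=(1-h_b(\delta),\,h_b(p\ast\delta)-h_b(\delta_1))$ saturates the sum-rate bound and satisfies the two marginal bounds, so it lies in the CC region. The distortions are $\mathbb{E}\,d_1=0$ trivially and $\mathbb{E}\,d_2 = P(\hat X\oplus\hat Z\ne X\oplus Z) = P(M=1) = \delta_1$. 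The only nontrivial step is the independence $B\perp W$ together with the mutual-information collapse $I((X,Z);U_2\mid W)=I(B;U_2)$; everything else reduces to routine Bernoulli entropy computations and Markov-chain checks.
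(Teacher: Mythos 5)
Your proposal is correct, and since the paper cites \cite{wagner} for this lemma without reproducing a proof, there is no in-paper argument to compare against; what you give is the natural reconstruction, specializing Theorem \ref{thm:CC} with $S=X$, $W=X\oplus N$ a BSC-$\delta$ quantization of the common part, $U_1$ trivial, and $U_2$ the optimal backward test channel for $B\triangleq N\oplus Z = W\oplus X\oplus Z$. The load-bearing steps are exactly the ones you flag: the independence $B\perp W$ (which holds because $X$ is uniform, so $W=X\oplus N$ is uniform and independent of $(N,Z)$, and one off-diagonal cell suffices for binary variables), and the Markov chain $U_2 - B - (X,Z,W)$, which together give $I((X,Z);U_2\mid W)=H(U_2)-H(U_2\mid B)=I(U_2;B)=h_b(p\ast\delta)-h_b(\delta_1)$. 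The Markov chain $U_1-(X_1,W)-(X_2,W)-U_2$ is immediate once one notes that $X_1$ is a deterministic function of $X_2$ here, so $(U_1,X_1)$ is degenerate given $(X_2,W)$. The distortion accounting $\hat X\oplus\hat Z\oplus X\oplus Z = U_2\oplus B = M$ is also right. One cosmetic point: you restrict to $\delta\le\tfrac12$ while the lemma states $\delta\le 1$; by the symmetry $h_b(\delta)=h_b(1-\delta)$ and $p\ast(1-\delta)=1-(p\ast\delta)$ the RD formulas are unchanged under $\delta\mapsto 1-\delta$, so this covers the full stated range modulo the (then vacuous) ordering constraint $\delta_1\le\delta$; the operative constraint is really $\delta_1\le p\ast\delta$, which is implied by $\delta_1\le\delta\le\tfrac12$ since $p\ast\delta\ge\delta$.
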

When $\epsilon\neq 0$, it was shown that \eqref{bound} is not achievable, i.e. $(r_1,r_2,d_1,d_2)\notin \mathcal{RD}^*(P_{X_1,X_2},d_1,d_2)$. 

\begin{figure}
\includegraphics*[draft=false,scale=.8]{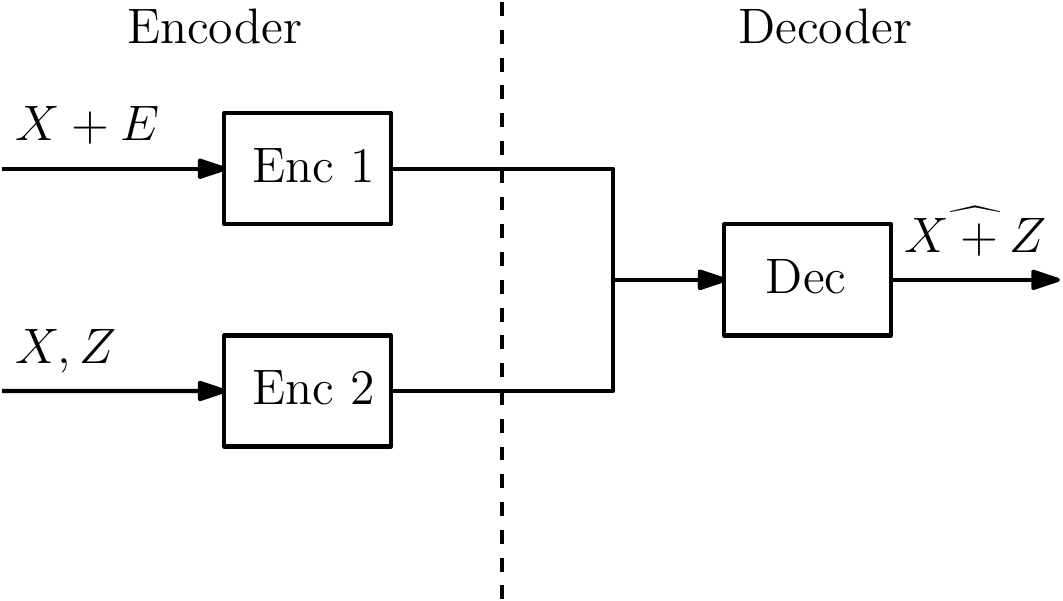}
\centering 
\caption{The binary one help one example}
\label{fig:BOHO}
\end{figure}

\subsection{Finite Length Quantizer Scheme} 
\textcolor{black}{In the previous sections, we introduced the FLMC coding scheme and proved an inner bound to the achievable RD region for this scheme. Note that when $\epsilon\neq 0$, then if $(S_1,S_2)$ is chosen to be trivial (i.e. constant variables), then the FLMC region is the same as the CC region. 
In this section, we investigate the FLMC scheme  when the choice of correlated components is  $(S_1,S_2)= (X+E,X)$, and the distribution $P_{S_1,W}$ is chosen such that $W$ is a quantized version of $S_1$ using a binary symmetric test channel. We show that the scheme achieves RD tuples outside of the $\mathcal{RD}_{CC}$ region.}


\begin{Theorem} 
\textcolor{black}{
For any $\delta \in (0,\frac{1}{2})$,
$n \in \left(\frac{64}{\delta^2} \ln \frac{32}{\delta}, \frac{\log(2 \epsilon)}{\log(1-\epsilon)} \right)$, $\tau \in \left(2 \sqrt{\frac{1}{n} \ln \frac{32}{\delta}},\frac{\delta}{4} \right)$,
$\delta_1\in (0,p\ast \delta')$, let $\alpha_{BOHO\mbox{-}FLMC}(\delta,\delta_1,n,\tau)$ denote the set of RD vectors $(R_1,R_2,0,D_2)$ satisfying the following:
\begin{align}
&R_1\geq 1-h_b(\delta)+\theta'_n,\\
& R_2\geq h_b(p\ast \delta')-h_b(\delta_1),\\
&D_2\geq \delta_1+\delta_n \left(\delta'+\frac{\epsilon}{\delta_n}\ast\delta'
\right),
\label{rate}
\end{align}
 where $a\ast b=a(1-b)+b(1-a)$, $\delta_n=(1-(1-\epsilon)^n)$ and 
 $\theta'_n=\frac{1}{n} \log \left( \frac{n \tau^2}{4}-\ln 4 \right)+\tau \left( 4+h(\delta)-\log (\delta(1-\delta)) \right)+\frac{3}{n}$,
 and $\delta'=\min\{1,\delta+\tau+8 e^{-2n \tau^2/4}\}$.}
\textcolor{black}{The following RD region is achievable:
\begin{align*}
    \mathcal{RD}_{BOHO\mbox{-}FLMC}\triangleq 
    \bigcup_{\delta\in (0,\frac{1}{2})}
    \bigcup_{n\in \left(\frac{64}{\delta^2} \ln \frac{32}{\delta}, \frac{\log(2 \epsilon)}{\log(1-\epsilon)} \right)}
\bigcup_{\tau \in \left(2 \sqrt{\frac{1}{n} \ln \frac{32}{\delta}},\frac{\delta}{4} \right) }
   \bigcup_{\delta_1 \in (0,p\ast \delta')}
    \alpha_{BOHO\mbox{-}FLMC}(\delta,\delta_1,n,\tau).
\end{align*}}
\label{thm:RDDSC}
\end{Theorem}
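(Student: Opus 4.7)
The plan is to specialize Theorem \ref{thm:MC} to the BOHO example with the $\epsilon$-correlated pair $(S_1, S_2) = (X_1, X)$ (which satisfies $P(S_1 = S_2) = P(E = 0) = 1 - \epsilon$), first-layer test channel $P_{W \mid S_1} = \text{BSC}(\delta)$, trivial second-layer auxiliary $U_1 = $ constant (reflecting the fact that encoder 1 is a pure helper with $d_1 \equiv 0$), and $U_2 = (X \oplus Z) \oplus V$ with $V \sim \text{Bern}(\delta_1)$ independent of everything else (the standard binary Wyner--Ziv auxiliary for a binary source with binary side information). The reconstruction $g_2(W, U_1, U_2)$ outputs any $(\hat X, \hat Z)$ with $\hat X \oplus \hat Z$ equal to the MAP estimate of $X \oplus Z$ from $(W, U_2)$. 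All Markov and cardinality requirements of Theorem \ref{thm:MC} are immediate.

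The mutual-information terms collapse: $I(X_1; U_1 \mid U_2, W) = 0$ by construction, and conditioned on $W$ the variable $X$ is $W \oplus \text{Bern}(\delta)$, so $X \oplus Z \mid W \sim \text{Bern}(p \ast \delta)$, and the standard Wyner--Ziv calculation gives $I(X_2; U_2 \mid U_1, W) = h_b(p \ast \delta) - h_b(\delta_1)$. The sum-rate bound of Theorem \ref{thm:MC} then splits as $R_1 = 1 - h_b(\delta) + \theta'_n$ on the helper side and $R_2 = h_b(p \ast \delta') - h_b(\delta_1)$ on encoder 2's side, where the inflation $\delta \mapsto \delta' = \delta + \tau + 8 e^{-2n\tau^2/4}$ absorbs the $E_{n,\delta_n}$, $\Gamma_{n,\delta_n}(\tau)$, and $\Lambda_n(\tau)$ penalty terms via the continuity of mutual information (Lemma \ref{prop:mutI}), using the fact from Claim \ref{claim:RY} that the total-variation distance between the induced and ``ideal'' joint laws is at most $1 - p(\tau) + p(\tau)\delta_n + \epsilon$, and a BSC$(\delta')$ is the worst-case conditional consistent with this variation. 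Parameter specialization is then routine: with $|\mathcal{S}| = |\mathcal{W}| = 2$, uniform $P_{S_1}$, and $\min_{a,b} P_{S_1,W}(a,b) = \delta/2$, the definitions in Theorem \ref{thm:MC} give $\sigma = 2\sqrt{\ln(32/\delta)}$ and $\sigma' = \delta/4$, so $(\sigma/\sigma')^2 = (64/\delta^2)\ln(32/\delta)$ and the admissible range of $n$ matches; $\phi = \tau$ and $\phi' = 8 e^{-n\tau^2/2}$ yield the stated $\delta'$; and substituting these constants into \eqref{eq:theta}, using $H(W \mid S_1 = s) = h_b(\delta)$ and $P_{S_1}(s) = 1/2$, simplifies to $\theta'_n$. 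The per-encoder rate bounds \eqref{eq:01}-\eqref{eq:02} are verified by a monotonicity check, since $1 - h_b(\delta)$ dominates the helper-side penalty for $\delta$ small and $n$ in the admissible range.

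The main obstacle is the distortion bound $D_2 \geq \delta_1 + \delta_n\left(\delta' + \frac{\epsilon}{\delta_n} \ast \delta'\right)$, which is significantly sharper than \eqref{eq:04} of Theorem \ref{thm:MC} (the generic bound would replace the $\delta_n$-factors by $O(1)$ via the loose $2(1 - p(\tau) + p(\tau)\delta_n + \epsilon) d_{2,\max}$ slack, destroying the region). Rather than invoking \eqref{eq:04} directly, I would go one level deeper to the MCML construction of Theorem \ref{thm: MCDSC22} and compute the expected Hamming distortion explicitly under the joint law $P'_{X_1, X_2, W'_1, W_2, U_1, U_2}$. The computation decomposes by conditioning on two nested events: (i) at the sub-block level, whether $S_1^n = S_2^n$ within the current length-$n$ sub-block (probability $1 - \delta_n$), and (ii) within a sub-block where $S_1^n \neq S_2^n$, the per-letter event $\{S_1(i) \neq S_2(i)\}$, which has conditional probability $\epsilon/\delta_n$ by Bayes. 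On ``good'' sub-blocks the Wyner--Ziv decoder recovers $X \oplus Z$ with per-letter Hamming error $\delta_1$; on ``bad'' sub-blocks the effective channel from $X \oplus Z$ to the decoder is the concatenation of the BSC$(\delta')$-quantization with a per-letter BSC$(\epsilon/\delta_n)$-mismatch between $W_2$ and $W'_1$, so the reconstruction has per-letter Hamming error at most $\delta' + \frac{\epsilon}{\delta_n} \ast \delta'$. Summing the contributions weighted by $1 - \delta_n$ and $\delta_n$ gives the stated bound. Verifying that the MAP reconstruction truly attains these per-letter errors on the bad sub-blocks under the permutation/interleaving of the MCML scheme, while tracking the induced conditional distributions carefully, is the most technical step of the proof.
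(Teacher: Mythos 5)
The central difficulty is that you are trying to obtain Theorem \ref{thm:RDDSC} by specializing Theorem \ref{thm:MC} (or, in the distortion step, Theorem \ref{thm: MCDSC22}), but the rate and distortion constraints in Theorem \ref{thm:RDDSC} are genuinely tighter than any instance of those general bounds. In Theorem \ref{thm:MC}, with $U_1$ constant, inequality \eqref{eq:01} reads $R_1 \geq E_{n,\delta_n}+\Gamma_{n,\delta_n}(\tau)+\Lambda_n(\tau)$ and \eqref{eq:02} reads $R_2 \geq h_b(p\ast\delta)-h_b(\delta_1)+E_{n,\delta_n}+\Gamma_{n,\delta_n}(\tau)$; the additive terms $E_{n,\delta_n}$, $\Gamma_{n,\delta_n}(\tau)$, $\Lambda_n(\tau)$ simply do not appear in $R_1 \geq 1-h_b(\delta)+\theta'_n$ and $R_2\geq h_b(p\ast\delta')-h_b(\delta_1)$, and replacing $\delta$ by $\delta'$ in $h_b(p\ast\cdot)$ does not ``absorb'' them --- Lemma \ref{prop:mutI} (continuity of mutual information under total-variation perturbation) is not a mechanism that trades an $O(\delta_n\log(1/\delta_n))$ additive rate penalty for a change of test-channel parameter. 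Even dropping to the MCML bounds \eqref{eq:prembounds1}--\eqref{eq:prembounds4} still leaves $E_{n,\delta_n}$ and $\Lambda_n(\tau)$ in the rates. The missing idea is that the paper's BOHO proof is a self-contained construction that deliberately omits the steps of the general scheme that incur those penalties: encoder 2 forms $\widehat{V}^{n,m}=Q^*_{n,\phi,\phi'}(S_2^{n,m})$ but never transmits it (so there is no Slepian--Wolf step and no $E_{n,\delta_n}$), the auxiliary sequence $T^{n,m}$ used in Step~2 of the MCML proof to exactly match the one-letter law is never sent (so no $\Lambda_n(\tau)$), and the distortion analysis is an exact computation under the induced law rather than a variational-distance continuity argument (so no $\Gamma_{n,\delta_n}(\tau)$ and no $2(1-p(\tau)+p(\tau)\delta_n+\epsilon)d_{i,\max}$ slack). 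Encoder~1 just pays the quantizer rate $1-h_b(\delta)+\theta_n(\tau)$ from Proposition \ref{prop:fin}, and encoder~2 pays the Shannon rate $h_b(p\ast\delta')-h_b(\delta_1)$ for quantizing the i.i.d.\ Bernoulli$(p\ast\delta')$ sequence $\widetilde{S}$ it constructs after permutation.

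Your distortion decomposition idea points in the right direction but the bookkeeping is off: the $\delta_1$ term is the \emph{unconditional} second-layer quantization noise $\mathbb{E}[w_H(T)]/(nm)$, not a ``good sub-block'' contribution, and the extra $\delta_n(\delta'+\tfrac{\epsilon}{\delta_n}\ast\delta')$ term is the expected per-letter weight of $\widehat{V}\oplus V$, computed by first conditioning on $\{E([1,n])\neq 0\}$ (probability $\delta_n$), then splitting $\widehat{V}\oplus V=(X\oplus\widehat{V})\oplus(X\oplus V)$ and exploiting that $X\oplus\widehat{V}$ and $X\oplus E\oplus V$ are each independent of $E^n$ (because $X$ is uniform), so that the per-letter conditional mismatch probability of $E(i)$ given $E([1,n])\neq 0$ is $\epsilon/\delta_n$ by Bayes. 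Your parameter specializations ($\sigma$, $\sigma'$, $\phi$, $\phi'$, $\theta'_n$) are correct, and Lemma \ref{Lem: Fin1} (that $\widetilde{S}(i,[1,m])$ is i.i.d.\ Bernoulli$(p\ast\delta')$) is the one additional ingredient your sketch does not mention but will need to justify the rate $h_b(p\ast\delta')-h_b(\delta_1)$ for the second codebook.
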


\begin{proof}
The proof follows steps which are similar to the proof of Theorem \ref{thm:MC}. However, it is useful to investigate the details of these steps for this particular example to clarify the scheme.

\textbf{Codebook Generation:}  Fix $n,m\in \mathbb{N}$. The coding scheme uses two codebooks $\mathcal{C}_f^n$ and $\mathcal{C}_r^m$. \textcolor{black}{The first codebook  $\mathcal{C}_f^n$ corresponds to the quantizer $Q^*_{n,\phi, \phi'}$, where $P_{S_1,W}$ corresponds to the source and quantized output in the point-to-point quantization of a binary symmetric source with Hamming distortion $\delta$ (i.e. $W=S_1\oplus_2 N_{\delta}$, where $N_{\delta}$ is Bernoulli with parameter $\delta$ and is independent of $S_1$). Let $W^n\triangleq Q^*_{n,\phi, \phi'}(S_1^n)$. The average Hamming distortion achieved is given by:}
\begin{align*}
 \textcolor{black}{   \frac{1}{n} \mathbb{E}(d_H(S_1^n, W^n))     =
    \frac{1}{n}\left(P(\mathcal{F}) \mathbb{E}(d_H(S_1^n, W^n)|\mathcal{F})+
        P(\mathcal{F}^c) \mathbb{E}(d_H(S_1^n, W^n)|\mathcal{F}^c)\right)
        \leq \min\{1,\delta+\tau+ 8 e^{-2n \tau^2/4}\}=\delta',}
\end{align*}
\textcolor{black}{where $\mathcal{F}$ is the event that $l_{\infty}(P_{S_1,W}, \mathbb{P}_{S_1^n, W^n})>\tau$.
 The second codebook $\mathcal{C}_r^{(m)}$, corresponds to a standard randomly generated $m$-length quantizer operating on a Bernoulli source with parameter $p\ast \delta'$, and expected Hamming distortion of $\delta_1$, and $m$ is taken to be asymptotically large. The rate of the codebook is $h_b(p\ast \delta')-h_b(\delta_1)+\lambda_m$, where $\lambda_m\to 0$ as $m\to \infty$.}
\begin{figure}
\includegraphics*[draft=false,width= \columnwidth]{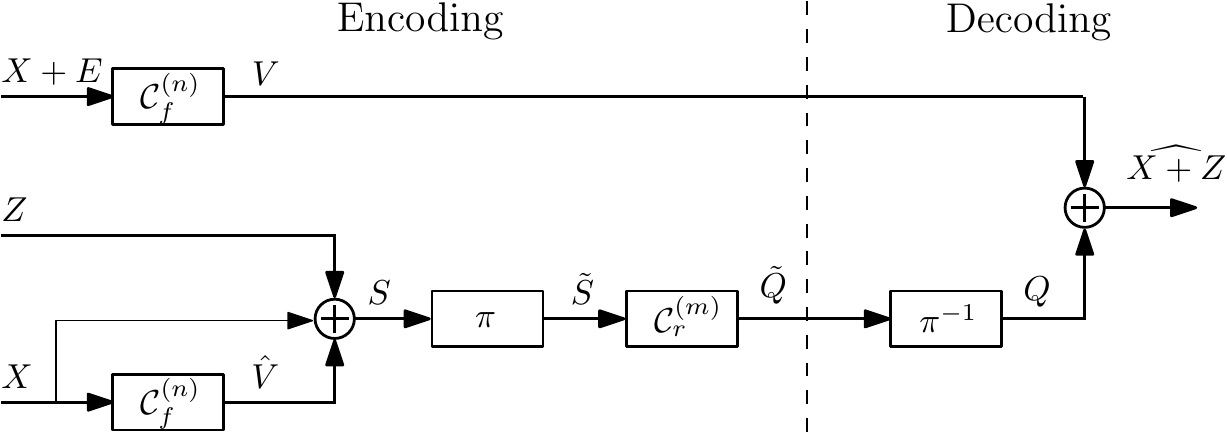} 
\caption{A block-diagram of the scheme}
\label{block}
\end{figure}
Let $\mathcal{S}_n$ be the set of permutations on the set $[1,n]$. The permutations $\pi_i ,i\in [1,m]$ are chosen randomly, independently and uniformly from $\mathcal{S}_n$. 

\textcolor{black}{First, we provide an outline of the encoding and decoding steps. Figure \ref{block} provides a block diagram of the coding scheme.} Assume that the encoders receive an $n\times m$ block of the source vectors $X_i([1,n],[1,m])$, $i \in{1,2}$. \textcolor{black}{ The first encoder uses $\mathcal{C}_f^{(n)}$ to quantize $m$ blocks of length $n$ of its source sequence $(X+E)([1,n],j),j \in [1,m]$ into the vectors $V([1,n],j)$. It sends the vectors $V([1,n],j)), j\in [1,m]$ to the decoder.} The second encoder `guesses' the quantization sent by the first encoder. More precisely, it uses $C_{f}^n$ to quantize the blocks of $n$-length vectors of the source $X([1,n],j), j\in [1,m]$ into $\widehat{V}([1,n],j)$. \textcolor{black}{Note that the two vectors ${V}([1,n],j)$ and $\widehat{V}([1,n],j)$ are equal with high probability if  $\epsilon\ll\frac{1}{n}$. The reason is that $P\left({V}([1,n],j)\neq \widehat{V}([1,n],j)\right)=(1-(1-\epsilon)^n)\approx n\epsilon$.} Next, the second encoder finds the quantization noise by calculating the binary addition of $X([1,n],j)$ and $\widehat{V}([1,n],j)$ for $j\in[1,m]$. \textcolor{black}{This represents the second encoder's estimate of the quantization noise of the quantizer used by the first encoder. In the next step, the second encoder calculates the binary addition $S([1,n],[1,m])\triangleq(X+\widehat{V}+Z)([1,n],[1,m])$. Then, the encoder permutes the result to get $\widetilde{S}([1,n],j)\triangleq\pi_j(S([1,n],j),j\in [1,m]$. We show in the sequel that $\widetilde{S}(i,[1,m]), i\in[1,n]$, is an i.i.d. vector of Bernoulli variables with parameter $p\ast\delta'$.} This i.i.d. source is quantized using $C_{r}^m$, and sent to the decoder. Let the quantized version of $\widetilde{S}$ be denoted by $\widetilde{U}$. The decoder applies the inverse of the permutations to get $U$, which is a lossy compression of $X+E+Z$. The decoder declares $U+V$ as the lossy reconstruction of $X+Z$.   

We proceed to formally present the scheme. The first encoder receives a string of $n\times m$ bits $X_1([1,n],[1,m])=(X+E)([1,n],[1,m])$. The encoder breaks this vector into $m$ blocks of length $n$. We denote each bit in this string by $X(i,j)+E(i,j), i\in[1,n],j\in[1,m]$, where $j$ indicates the block containing the bit, and $i$ indicates the index of the bit in the block. 
 The encoder uses $C_f^{(n)}$ to quantize each of these blocks. More precisely, it finds $V([1,n],j)$ such that:
\begin{equation*}
V([1,n],j)=\argmin_{v^n\in\mathcal{C}_f^{(n)}}d_H\left(X_1([1,n],j),v([1,n])\right),
\end{equation*}
for each $j\in [1,m]$.
\textcolor{black}{ The index of $V([1,n],j)$ in $C_f^{(n)}$ is sent to the decoder.} 
 \textcolor{black}{The rate of transmission for this encoder is $R_1=R_f^{(n)}=1-h_b(\delta)+\theta'_n$.}

 \textcolor{black}{The second encoder receives $n\times m$ pairs of bits $(X,Z)([1,n],[1,m])$. It divides each string into $m$ blocks of length $n$. It quantizes each block of $X([1,n],j), j\in [1,m]$, using $C_f^n$ similar to the first encoder.} Let $\widehat{V}([1,n],j)$ be the quantized codeword corresponding to $X([1,n],j)$. The encoder computes $S([1,n],[1,m])=X([1,n],[1,m])+\widehat{V}([1,n],[1,m])+Z([1,n],[1,m])$. Let $\widetilde{S}(i,j)=S(\pi_j(i),j),i\in[1,n],j\in[1,m]$.\textcolor{black}{ The next lemma proves that $\widetilde{S}(i,[1,m])$ is an i.i.d. vector of random variables with parameter $p\ast\delta'$
for any $i\in [1,n]$. 
\begin{Lemma} $\widetilde{S}(i,[1,m])$ is a string of i.i.d. Bernoulli random variables with parameter $p\ast\delta'$.
\label{Lem: Fin1}
\end{Lemma}
\textcolor{black}{\textit{Proof. Please see Appendix \ref{App:Lem: Fin1}}.}}

\textcolor{black}{ The encoder quantizes each $\widetilde{S}(i,[1,m]), i\in [1,n]$, using the code $C_r^{(m)}$. Let $\widetilde{U}(i,[1,m])$ be the quantization of $\widetilde{S}(i,[1,m])$. The encoder transmits the index of $\widetilde{U}(i,[1,m]), i\in [1,n]$, in $C_r^{(m)}$ to the decoder.} Define $\widetilde{T}([1,n],[1,m])=\widetilde{U}([1,n],[1,m])+\widetilde{S}([1,n],[1,m])$ as the quantization noise resulting from quantizing $\widetilde{S}([1,n],[1,m])$ to $\widetilde{U}([1,n],[1,m])$. 

 The decoder computes $U(i,j)=\widetilde{U}(\pi^{-1}_j(i),j), i\in[1,n],j\in[1,m]$. That is, the decoder undoes the permutation. Note that $\mathbb{E}(d_H(U(i,j),S(i,j)))=\mathbb{E}(d_H(\widetilde{U}(i,j),\widetilde{S}(i,j)))=\mathbb{E}(w_H(T(i,j)))=\delta_1$, by construction. The decoder declares $U([1,n],[1,m])+V([1,n],[1,m])$ as the reconstruction of the source sequence $(X+Z)([1,n],[1,m])$. The resulting average distortion is:
\begin{equation*}
D_2=\frac{1}{mn}\mathbb{E}\big\{d_H\left(\left(X+Z\right)\left([1,n],[1,m]\right),\left(U+V\right)([1,n],[1,m])\right)\big\},
\end{equation*}
\textcolor{black}{where $d_H(A([1,n],[1,m]), B([1,n],[1,m]))\triangleq \sum_{i=1}^n\sum_{j=1}^m d_H(A(i,j), B(i,j))$.}
This can be computed as follows:
\begin{align*}
&\mathbb{E}\{d_H((X+Z)([1,n],[1,m]),(U+V)([1,n],[1,m]))\}
=\mathbb{E}\{w_H((X+Z+S+T+V)([1,n],[1,m]))\}\\
&=\mathbb{E}\{w_H((\widehat{V}+V+T)([1,n],[1,m]))\}
\stackrel{(a)}{\leq} mn\left(\delta_1+ \frac{1}{mn}\mathbb{E}\{w_H((\widehat{V}+V)([1,n],[1,m]))\}\right),
\end{align*}
where (a) holds since for binary variables $A$ and  $B$, we have $w_H(A+B)\leq w_H(A)+w_H(B)$. We have:
\begin{align*}
\sum_{i=1}^m \mathbb{E}\{w_H((\widehat{V}+V)([1,n],i))\}
&\stackrel{(b)}{=} m(\mathbb{E}\{w_H((\widehat{V}\!+\!V\!)([1,n])|E([1,n])\!=\!0)P(E([1,n])\!=\!0)\}\\
&\hspace{1in} +\mathbb{E}\{w_H((\widehat{V}\!+\!V)([1,n])|E([1,n])\neq0)P(E([1,n])\!\neq\!0))\})\\
&\stackrel{(c)}{=} m(\mathbb{E}\{w_H((\widehat{V}\!+\!V)([1,n])|E([1,n])\!\neq\!0)P(E([1,n])\neq0))\\
&=m(1\!-\!(1\!-\!\epsilon)^n)\mathbb{E}\{w_H((\widehat{V}\!+\!V\!)([1,n]))|E([1,n])\neq0\}\\
&=m(1\!-\!(1\!-\!\epsilon)^n)\mathbb{E}\{w_H((X\!+\!\widehat{V}\!+\!X\!+\!V)([1,n]))|E([1,n])\!\neq\!0\}\\
&\leq m(1-(1-\epsilon)^n)\mathbb{E}\{w_H((X+\widehat{V})^n)|E([1,n])\neq0\}\\
&\hspace{1in} +\mathbb{E}\{w_H\left((X+E+V+E)([1,n])\right)|E([1,n])\neq0\}\\
&\stackrel{(d)}{=} m(1-(1-\epsilon)^n)\left(\mathbb{E}\{w_H((X+\widehat{V})^n)\}\right)\\
&\hspace{1in}+\frac{\epsilon}{(1-(1-\epsilon)^n)}\ast \mathbb{E}\{w_H((X+E+V)([1,n])\}\\
&\stackrel{(e)}{=}\textcolor{black}{mn\delta_n\left(\delta'+\frac{\epsilon}{\delta_n}\ast\delta'\right)}\\
&\textcolor{black}{\Rightarrow D_2\leq \delta_1+\delta_n\left(\delta'+\frac{\epsilon}{\delta_n}\ast\delta'\right).}
\end{align*}
 \textcolor{black}{ (b) holds since each block is quantized identically, and hence the expected value is equal for all blocks; (c) holds since if $E^n=0$ then $V^n=\widehat{V}^n$; (d) holds since $(X+E+V)^n$ is a function of $(X+E)^n$ which is independent of $E^n$, since $X^n$ is Bernoulli with parameter $\frac{1}{2}$ and $X$ and $E$ are independent, and finally (e) holds since $C_f^n$ corresponds to a $Q^*_{n}$ quantizer with Hamming distortion $\delta'$. This completes the proof of Theorem \ref{thm:RDDSC}.}
\end{proof}


\begin{figure}
\includegraphics*[draft=false,scale=.4]{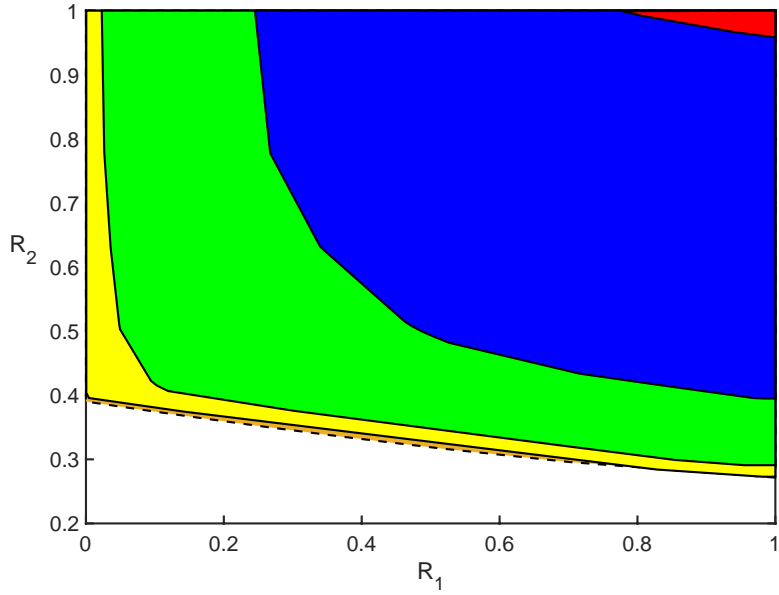}
\centering 
\captionsetup{font={stretch=1}}
\caption{
\textcolor{black}{
The achievable region using the scheme described in Theorem \ref{thm:RDDSC} for source parameters $p=0.3$ and $\epsilon= 10^{-4}, 10^{-5}, 10^{-7}, 10^{-14},0  $ in the red, blue, green, yellow, and orange (with dotted borderline) regions, respectively, and distortion less than or equal to $D_2=0.15$. Note that the achievable region becomes larger and converges to that of the CC scheme (orange region) as $\epsilon \to 0$.}} 
\label{fig:BOHO_RD}
\end{figure}
\textcolor{black}{
The achievable RD region of Theorem \ref{thm:RDDSC} is shown in Figure \ref{fig:BOHO_RD} for source parameters  $p=0.3$ and $\epsilon= 10^{-4}, 10^{-5}, 10^{-7}, 10^{-14},0$, and distortion $D_2=0.15$. It can be observed that the achievable region becomes larger and converges to that of the CC scheme (orange region) as $\epsilon \to 0$. The $\mathcal{RD}_{CC}$ region is not continuous in $\epsilon$ as shown in \cite{wagner}. Hence, the region is strictly contained in $\mathcal{RD}_{BOHO\mbox{-}FLMC}$. 
The next proposition formalizes this statement. }

\begin{Proposition}\textcolor{black}{ The RD region $\mathcal{RD}_{CC}$ is a strict subset of $\mathcal{RD}_{BOHO\mbox{-}FLMC}$, i.e. $\mathcal{RD}_{CC}\subsetneq \mathcal{RD}_{BOHO\mbox{-}FLMC}$.}
\end{Proposition}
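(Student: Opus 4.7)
The assertion $\mathcal{RD}_{CC} \subsetneq \mathcal{RD}_{BOHO\mbox{-}FLMC}$ has two parts, containment and strict separation. Containment is a straightforward instance of Proposition \ref{Prop:Subset} once one checks that the BOHO-FLMC description of Theorem \ref{thm:RDDSC} is general enough to capture the CC region for this particular source (the BOHO-FLMC construction specializes to a CC-type scheme by taking the first-layer quantizer to its common-component regime, and the convex closure absorbs the residual overhead). The interesting content is the strict separation: exhibit $\epsilon \in (0, 1/2)$ and a tuple in $\mathcal{RD}_{BOHO\mbox{-}FLMC} \setminus \mathcal{RD}_{CC}$. My plan is to exploit the known discontinuity of $\mathcal{RD}_{CC}$ in $\epsilon$ at $\epsilon = 0$ against the continuity of the explicit BOHO-FLMC description in $\epsilon$.

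For separation, I first observe that for every $\epsilon \in (0, 1/2)$ the BOHO sources $X_1 = X + E$ and $X_2 = (X, Z)$ have no non-trivial common component: conditioning on $X_2 = (x, z)$ still leaves $X_1 \in \{x, x+1\}$ with positive probability on each value, so any $V = f_1(X_1) = f_2(X_2)$ a.s.\ forces $f_1$ to be constant. Consequently $\mathcal{RD}_{CC}(\epsilon) = \mathcal{RD}_{BT}(\epsilon)$ for $\epsilon > 0$. Next I invoke the impossibility result of \cite{wagner} cited in the paragraph before Theorem \ref{thm:RDDSC}: for every $\epsilon > 0$ and every admissible $(\delta, \delta_1)$, the tuple $P_\delta \triangleq (1 - h_b(\delta),\, h_b(p \ast \delta) - h_b(\delta_1),\, 0,\, \delta_1)$ is not in $\mathcal{RD}^*(\epsilon)$, hence not in $\mathcal{RD}_{CC}(\epsilon)$. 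Since the BT region is continuous in the source distribution (its single-letter characterization is continuous in $P_{X_1, X_2}$, which is continuous in $\epsilon$) and $P_\delta \notin \mathcal{RD}_{BT}(0)$ because CC strictly improves on BT when a non-trivial common component is present, there exist $\epsilon^* > 0$ and $\rho > 0$ such that for all $\epsilon \in (0, \epsilon^*)$ the open ball of radius $\rho$ around $P_\delta$ is disjoint from $\mathcal{RD}_{BT}(\epsilon) = \mathcal{RD}_{CC}(\epsilon)$.

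Finally I instantiate Theorem \ref{thm:RDDSC} with parameters scaling with $\epsilon$, for instance $n(\epsilon) = \lceil \epsilon^{-1/2} \rceil$ and $\tau(\epsilon) = n(\epsilon)^{-1/4}$. For small $\epsilon$ these lie in the admissible ranges because the upper endpoint $\log(2\epsilon)/\log(1-\epsilon)$ grows like $\epsilon^{-1} \log(1/\epsilon)$ while the lower bound on $\tau$ is $\Theta(n^{-1/2})$. A direct computation then yields $\delta_n \le n \epsilon = \sqrt{\epsilon} \to 0$, $\delta' \to \delta$, $\theta'_n \to 0$ (its three summands scale like $n^{-1} \log n$, $\tau$, and $n^{-1}$), and the distortion overhead $\delta_n(\delta' + (\epsilon/\delta_n) \ast \delta') \le 2 \delta_n \to 0$. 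Hence the BOHO-FLMC tuple $(R_1(\epsilon), R_2(\epsilon), 0, D_2(\epsilon))$ from Theorem \ref{thm:RDDSC} converges to $P_\delta$ as $\epsilon \to 0$, and lies within $\rho/2$ of $P_\delta$ for all sufficiently small $\epsilon$, completing the strict separation. The main obstacle is the quantitative continuity of $\mathcal{RD}_{BT}$ in $\epsilon$ near $0$, i.e.\ upgrading $P_\delta \notin \mathcal{RD}_{BT}(0)$ to a uniform gap on a neighborhood $(0, \epsilon^*)$; this is an upper-semicontinuity property of the set-valued map $\epsilon \mapsto \mathcal{RD}_{BT}(\epsilon)$ at $0$, handled by the continuity of mutual information under total variation perturbations of the joint distribution.
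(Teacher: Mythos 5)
Your overall strategy matches the paper's: anchor on the point $P_\delta=(1-h_b(\delta),h_b(p\ast\delta)-h_b(\delta_1),0,\delta_1)$ which lies outside $\mathcal{RD}_{BT}(0)$, note that for $\epsilon>0$ the absence of a common component collapses $\mathcal{RD}_{CC}(\epsilon)$ to $\mathcal{RD}_{BT}(\epsilon)$, produce a fixed-radius ball around $P_\delta$ disjoint from $\mathcal{RD}_{CC}(\epsilon)$ for all small $\epsilon$, and then push the BOHO-FLMC point of Theorem \ref{thm:RDDSC} into that ball by sending $n\to\infty$, $\tau\to 0$ with $n\tau^2\to\infty$, and $\epsilon\to 0$. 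Your parameter scaling $n\approx\epsilon^{-1/2}$, $\tau\approx n^{-1/4}$, the range checks, and the limits $\theta'_n\to 0$, $\delta'\to\delta$, $\delta_n\to 0$ are all fine and achieve the same effect as the paper's ``fix $n$, then shrink $\epsilon$'' ordering. The genuine divergence is how you secure the uniform-in-$\epsilon$ gap. You invoke upper semicontinuity of the set-valued map $\epsilon\mapsto\mathcal{RD}_{BT}(\epsilon)$ at $\epsilon=0$ and, to your credit, flag it as the main obstacle; establishing it rigorously requires more than pointwise continuity of mutual information in total variation (one needs compactness of the cardinality-bounded auxiliary space plus a Berge-type argument for the resulting optimization). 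The paper avoids all of that with a one-line degradation argument: $X_1=X\oplus E$ is a stochastically degraded version of $X$, so composing any $\epsilon>0$ Berger-Tung test channel with the degradation kernel yields a valid test channel for $\epsilon=0$ with no larger rate terms and identical distortions, giving $\mathcal{RD}_{BT}(\epsilon)\subseteq\mathcal{RD}_{BT}(0)$ for every $\epsilon$ and hence the uniform gap $\gamma$ for free. Your route is valid in principle but costs more; the paper's monotonicity route is strictly simpler and is what you should cite. Two small imprecisions to fix: (i) $P_\delta\notin\mathcal{RD}_{BT}(0)$ is a specific computation from Wagner et al.\ for this binary source, not a consequence of a general ``CC strictly improves on BT whenever a common component exists'' principle; (ii) the containment $\mathcal{RD}_{CC}\subseteq\mathcal{RD}_{BOHO\mbox{-}FLMC}$ does not follow from Proposition \ref{Prop:Subset}, which concerns the full $\mathcal{RD}_{FLMC}$ rather than the restricted construction of Theorem \ref{thm:RDDSC} --- the paper's own proof is likewise silent on this and establishes only the separation, which is the operative content.
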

\begin{proof} We show the existence of $p$ and $\epsilon$ such that the rate-distortion region in Theorem \ref{thm:RDDSC} strictly contains the CC region. It was shown in \cite{wagner} that when $\epsilon=0$, the BT region does not include the set of quadruples $(r_1,r_2,d_1,d_2)=(1-h_b(\delta),h_b(p\ast\delta)-h_b(\delta_1),0,\delta_1)$ for $\delta\in(0,0.5)$ and $\delta_1<p\ast\delta$. Also, the CC region reduces to the BT region when $\epsilon\neq 0$, since there is no common component in the sources $X_1$ and $X_2$.  Since the BT region can only be smaller when $\epsilon\neq0$ as compared to the case when $\epsilon=0$, we conclude that it does not include $(1-h_b(\delta),h_b(p\ast\delta)-h_b(\delta_1),0,\delta_1)$ for $\epsilon\neq 0$. So, for a given $\delta$ and $\delta_1$ there exists a radius $\gamma>0$ for which no quadruple in the set $\mathcal{B}((r_1,r_2,d_1,d_2),\gamma)=\{(R_1,R_2,0,D_2)\!:\!d_{E}((R_1,R_2,D_2),(r_1,r_2,d_2))\leq\gamma\}$ is in $\mathcal{RD}_{CC}$,
where $d_{E}$ is the Euclidean distance in the three dimensional space.
\textcolor{black}{However, from Theorem \ref{thm:RDDSC},  $(r\rq{}_1,r\rq{}_2,0,d\rq{}_2)=(1-h_b(\delta)+\theta'_n,h_b(p\ast\delta')-h_b(\delta_1),0,
\delta_1+\delta_n(\delta'+\frac{\epsilon}{\delta_n}\ast\delta'))$ is achievable. We have:
\begin{align}\label{teta}
d_{E}((r\rq{}_1,r\rq{}_2,d\rq{}_2),(r_1,r_2,d_2))=\sqrt{{\theta'_n}^2+ \left(h_b(p\ast \delta')- h_b(p\ast \delta)\right)^2+
\delta^2_n\left(\delta'+
\frac{\epsilon}{\delta_n}\ast \delta' \right)^2}.
\end{align}
Note that $\theta'_n\to 0$ and $\delta'\to \delta$ as $n\to\infty$ for $\tau= n^{-\frac{1}{2}+\alpha}, 0<\alpha<\frac{1}{2}$.  
Hence there exists $n$ large enough that 
$\theta'^2_n+(h_b(p \ast \delta')-h_b(p \ast \delta))^2 \leq (\frac{\gamma}{2})^2$. 
Next note that for this value of $n$, the term 
$\delta_n^2\left(\delta'+ (\epsilon/\delta_n) \ast \delta')\right)^2 \searrow 0$
as $\epsilon \searrow 0$. So there exists $\epsilon>0$ sufficiently small for which $n<\frac{\log{2\epsilon}}{\log{1-\epsilon}}$ and  $(r\rq{}_1,r\rq{}_2,0,d\rq{}_2)\in \mathcal{B}((r_1,r_2,0,d_2),\gamma)$. This completes the proof.}
\end{proof}

\section{Conclusions}
\label{Sec:Conc}
We have considered the problem of lossy distributed data compression. We have proposed a new coding scheme which uses a two layered coding approach. The first coding layer consists of randomly generated codebooks with constant finite blocklengths. The second coding layer consists of random unstructured codes with asymptotically large blocklengths. A computable expression for an inner bound to the achievable RD region is derived which is expressed in terms of information measures which  are functions of single-letter probability distributions. We have shown that the resulting RD region strictly contains the best-known achievable regions in the literature. 
%
%

\noindent \textbf{Acknowledgements:} We thank the anonymous reviewers for their careful reading and critical comments that helped us in creating a better manuscript. 

\begin{appendices}
\section{Proof of Proposition \ref{Prop:Subset}}
\label{App:Subset}
\textcolor{black}{Let $(R^*_1,R^*_2,\Delta^*_1,\Delta^*_2)\in \mathcal{RD}_{CC}(P_{X_1,X_2},d_1,d_2)$. It is enough to show that for any $\gamma>0$, we have $(R_1^*+\gamma,R_2^*+\gamma,\Delta^*_1+\gamma,\Delta^*_2+\gamma)\in \mathcal{RD}_{FLMC}(P_{X_1,X_2},d_1,d_2)$. To show this, let $S$ be a common component of the sources $(X_1, X_2)$, and let $P_{W,U_1,U_2,\widehat{X}_1,\widehat{X}_2|X_1X_2}\in \mathcal{P}_{CC}$ as in Theorem \ref{thm:CC}, achieving $(R^*_1,R^*_2,\Delta^*_1,\Delta^*_2)$. We find a pair of $\epsilon$-correlated components $(S'_1,S'_2)$ for some $\epsilon\in [0,1]$, and construct a distribution $P'_{W,U_1,U_2,\widehat{X}_1,\widehat{X}_2|X_1,X_2}\in \mathcal{P}_{FLMC}$ as in Theorem \ref{thm:MC}, achieving $(R^*_1+\gamma,R^*_2+\gamma,\Delta^*_1+\gamma,\Delta^*_2+\gamma)$. Let $S,U_{1},U_{2},W, \widehat{X}_1$, and $\widehat{X}_2$ be the random variables corresponding to the distribution $P_{W,U_1,U_2,\widehat{X}_1,\widehat{X}_2|X_1,X_2}$.  We use these random variables to find suitable random variables $S'_1,S'_2, U'_1, U'_2, W', \widehat{X}'_1,\widehat{X}'_2$ corresponding to $P'_{W,U_1,U_2,\widehat{X}_1,\widehat{X}_2|X_1,X_2}$.  Let $S'_1=S'_2=S$ and $U'_1=U_{1}, U'_2=U_{2}$,$W'=W$, and $\widehat{X}_i=\widehat{X}'_i$,  for $i\in \{1,2\}$.
Then, $(S'_1,S'_2)$ is a pair of $0$-correlated components of $(X_1,X_2)$. Also, the Markov chain conditions  $U'_1 - (X_1,W') - (X_2,W') - U'_2$  and  $W'-S'_1-(X'_1,X'_2)\ $ are satisfied, since  $U_1 - (X_1,W) - (X_2,W) - U_2$  and  $W-S_1-(X_1,X_2)$. As a result, we have $P'_{W,U_1,U_2,\widehat{X}_1,\widehat{X}_2|X_1,X_2}\in \mathcal{P}_{FLMC}$. Similarly, it can be seen that the corresponding mutual information terms $I(X_1;U_1|U_2W)$, $I(X_2;U_2|U_1W)$, $I(X_1,X_2;U_1,U_2|W)$ and the expected distortions $\mathbb{E}\{d_i\big(g_i(U_1,U_2,W),X_i\big)\}, i\in \{1,2\}$
are the same for both distributions $P$ and $P'$. 
  Note that, $P(S'_1\neq S'_2)=\epsilon=0$. Hence, $\delta_n=1-(1-\epsilon)^n=0$. So, we get $E_{n,\delta_n}=0$. 
Note that all of the other terms on the right hand side of equations \eqref{eq:01}-\eqref{eq:04} which depend on $n$ approach $0$ as $n\to \infty$.
Consequently, for any $\gamma>0$, there exists an $n'$ such that $(R^*_1+\gamma,R^*_2+\gamma,\Delta^*_1+\gamma,\Delta^*_2+\gamma)$ satisfies the bounds in equations \eqref{eq:01}-\eqref{eq:04}. This completes the proof.}
\section{Proof of Proposition \ref{prop:fin}}
\label{App:prop:fin}
\textcolor{black}{The proof uses the following definitions of typical, joint typical, and conditional typical sets:}
\textcolor{black}{
\begin{Definition}[\textbf{Typical Sequence}]
Given a probability space $(\mathcal{X},P_X)$, a blocklength $n\in \mathbb{N}$, and a parameter $\zeta>0$, the $\zeta$-typical set $\mathcal{A}^n_{\zeta}(X)$ is defined as: 
\begin{align*}
    \mathcal{A}^n_{\zeta}(X)
    = \Big\{x^n: \left| \frac{1}{n} N(a|x^n) -P_X(a) \right| \leq \frac{\zeta}{|\mathcal{X}|} \mathbbm{1}(P_{X}(a) \neq 0), \quad  
\forall a \in \mathcal{X}\Big\}.
\end{align*}
\end{Definition}
\begin{Definition}[\textbf{Joint Typical Sequence}]
Given a probability space $(\mathcal{X}\times \mathcal{Y},P_{X,Y})$, a blocklength $n\in \mathbb{N}$, and a parameter $\zeta>0$, the joint $\zeta$-typical set $\mathcal{A}^n_{\zeta}(X,Y)$ is defined as: 
\begin{align*}
    \mathcal{A}^n_{\zeta}(X,Y)
    = \Big\{(x^n,y^n): \left| \frac{1}{n} N(a,b|x^n,y^n) -P_{X,Y}(a,b) \right| \leq \frac{\zeta}{|\mathcal{X}||\mathcal{Y}|}\mathbbm{1}(P_{X,Y}(a,b) \neq 0), \quad  
\forall a,b \in \mathcal{X}\times \mathcal{Y}\Big\}.
\end{align*}
\end{Definition}
\begin{Definition}[\textbf{Conditional Typical Sequence}]
Given a conditional distribution $P_{Y|X}: \mathcal{X} \rightarrow
\mathcal{Y}$, a blocklength $n\in \mathbb{N}$, a parameter $\zeta>0$, and a sequence $x^n \in \mathcal{X}^n$, the conditional $\zeta$-typical set of $\mathcal{A}^n_{\zeta}(Y|x^n)$  is defined as:
\begin{align*}
 \mathcal{A}^n_{\zeta}(Y|x^n)= \Big\{y^n:\left| \frac{1}{n} N(a,b|x^n,y^n)-\frac{1}{n} N(a|x^n) P_{Y|X}(b|a) \right|
\leq \frac{\zeta}{|\mathcal{Y}|}\mathbbm{1}(P_{Y|X}(b|a) \neq 0), \quad \forall a,b\in \mathcal{X}\times \mathcal{Y}\Big\}.
\end{align*}
\end{Definition}
The following lemma describes a relation between the typical, joint typical and conditional typical sets which will be used in the next steps:
\begin{Lemma}[\cite{csiszarbook}]
Given a probability space $(\mathcal{X}\times \mathcal{Y},P_{X,Y})$,  blocklength $n\in \mathbb{N}$, and parameter $\zeta>0$, let $x^n\in \mathcal{A}_{\zeta}^n(X)$ and $y^n \in  \mathcal{A}^n_{\zeta}(Y|x^n)$. Then, $(x^n,y^n) \in \mathcal{A}^n_{\hat{\zeta}}(X,Y)$, where $\hat{\zeta}= \zeta(|\mathcal{X}|+|\mathcal{Y}|)$. 
\label{Lem:5}
\end{Lemma}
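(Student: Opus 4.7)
The plan is a direct triangle-inequality argument operating on each coordinate pair $(a,b)\in \mathcal{X}\times\mathcal{Y}$. I would insert the intermediate quantity $\tfrac{1}{n}N(a|x^n)\,P_{Y|X}(b|a)$ into the joint-typicality expression and split the deviation as
\begin{align*}
\left| \tfrac{1}{n} N(a,b|x^n,y^n) - P_{X,Y}(a,b) \right|
&\leq \left| \tfrac{1}{n} N(a,b|x^n,y^n) - \tfrac{1}{n} N(a|x^n)\,P_{Y|X}(b|a) \right| \\
&\quad + P_{Y|X}(b|a)\, \left| \tfrac{1}{n} N(a|x^n) - P_X(a) \right|,
\end{align*}
using the chain-rule identity $P_{X,Y}(a,b) = P_X(a)\,P_{Y|X}(b|a)$.

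The first summand is bounded by $\zeta/|\mathcal{Y}|$ by the hypothesis $y^n \in \mathcal{A}^n_{\zeta}(Y|x^n)$; the second summand is bounded by $\zeta/|\mathcal{X}|$ via the hypothesis $x^n \in \mathcal{A}^n_{\zeta}(X)$ together with the trivial estimate $P_{Y|X}(b|a) \leq 1$. Adding the two contributions yields
\[
\left| \tfrac{1}{n} N(a,b|x^n,y^n) - P_{X,Y}(a,b) \right| \leq \frac{\zeta}{|\mathcal{Y}|} + \frac{\zeta}{|\mathcal{X}|} = \frac{\zeta(|\mathcal{X}|+|\mathcal{Y}|)}{|\mathcal{X}|\,|\mathcal{Y}|} = \frac{\hat{\zeta}}{|\mathcal{X}|\,|\mathcal{Y}|},
\]
which matches precisely the joint-typicality threshold for parameter $\hat{\zeta}=\zeta(|\mathcal{X}|+|\mathcal{Y}|)$.

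The one subtlety I anticipate concerns the indicator factors $\mathbbm{1}(P_X(a)\neq 0)$ and $\mathbbm{1}(P_{Y|X}(b|a)\neq 0)$ appearing in the marginal and conditional typicality definitions, which enforce \emph{exact} matching whenever the reference probability vanishes. Since the joint-typicality definition imposes the same exact-matching requirement when $P_{X,Y}(a,b)=0$, I would close this gap with a short case analysis. If $P_X(a)=0$, marginal typicality forces $N(a|x^n)=0$, hence $N(a,b|x^n,y^n)=0$; if instead $P_X(a)>0$ but $P_{Y|X}(b|a)=0$, conditional typicality forces $N(a,b|x^n,y^n)=0$ directly. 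In both degenerate cases, both sides of the joint-typicality inequality vanish, so the bound holds trivially, and the triangle-inequality estimate above handles the remaining coordinates uniformly. I do not foresee any real obstacle; the substantive content of the lemma is simply that the slacks from the marginal and conditional hypotheses add, which is exactly why the factor $|\mathcal{X}|+|\mathcal{Y}|$ surfaces in $\hat{\zeta}$.
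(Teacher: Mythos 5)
Your proof is correct and is precisely the triangle-inequality argument the paper gestures at (the paper omits the proof, stating only that it ``follows from the triangle inequality''). The insertion of $\frac{1}{n}N(a|x^n)P_{Y|X}(b|a)$ as the pivot term, the chain-rule factorization $P_{X,Y}=P_X P_{Y|X}$, and the separate handling of the zero-probability coordinates via the indicator structure are exactly the ingredients needed, and all steps check out.
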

The proof follows from the triangle inequality, and is omitted for brevity. It should be noted that a consequence of Lemma 5 is that if $x^n\in \mathcal{A}_{\zeta}^n(X)$ and $y^n \in  \mathcal{A}^n_{\zeta}(Y|x^n)$, then $y^n \in \mathcal{A}^n_{\hat{\zeta}}(Y)$.
Furthermore, we will use the following results on the cardinality and probability of the typical and conditional typical sets.} 
\textcolor{black}{
\begin{Lemma}[\cite{csiszarbook}]
\label{Lem:6}
Given a probability space $(\mathcal{X},P_X)$, a blocklength $n \in \mathbb{N}$, and  a parameter $\zeta>0$, the following holds:
\begin{align*}
&|\mathcal{A}_{\zeta}^n(X)| \leq 2^{n(H(X)+\zeta')}\quad \mbox{and} \quad P_X^n\left[ \left( \mathcal{A}_{\zeta}^n(X) \right)^c \right]
\leq 2|\mathcal{X}| \exp \left( -2n \left(\frac{\zeta}{|\mathcal{X}|} \right)^2 \right),
\end{align*}
where $\zeta' \triangleq -\frac{\zeta}{|\mathcal{X}|}\sum_{a \in \mathcal{X}} \log P_X(a)$. 
\end{Lemma}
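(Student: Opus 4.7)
The plan is to dispatch the two claims separately: the cardinality bound by a standard method-of-types style calculation, and the probability bound by Hoeffding's inequality applied symbol-by-symbol with a union bound.

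For the cardinality bound $|\mathcal{A}_{\zeta}^n(X)| \leq 2^{n(H(X)+\zeta')}$, I would first obtain a matching lower bound on the probability of any typical sequence. Pick $x^n \in \mathcal{A}_{\zeta}^n(X)$ and write $P_X^n(x^n) = \prod_{a \in \mathcal{X}} P_X(a)^{N(a|x^n)}$, using the convention $0 \log 0 = 0$. Then
\[
-\frac{1}{n} \log P_X^n(x^n) \;=\; -\sum_{a: P_X(a) \neq 0} \frac{N(a|x^n)}{n} \log P_X(a),
\]
and, using the defining inequality $\frac{N(a|x^n)}{n} \leq P_X(a) + \frac{\zeta}{|\mathcal{X}|}\mathbbm{1}(P_X(a) \neq 0)$ together with $-\log P_X(a) \geq 0$, this is upper bounded by $H(X) + \zeta'$, with $\zeta'$ exactly as defined in the statement. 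Hence $P_X^n(x^n) \geq 2^{-n(H(X)+\zeta')}$ for every typical $x^n$. Since $P_X^n(\mathcal{A}_{\zeta}^n(X)) \leq 1$, summing this lower bound over $\mathcal{A}_{\zeta}^n(X)$ immediately yields the desired cardinality bound.

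For the probability bound, I would proceed via a union bound over letters. For each $a \in \mathcal{X}$, the count $N(a|X^n) = \sum_{i=1}^n \mathbbm{1}(X_i = a)$ is a sum of $n$ i.i.d. Bernoulli$(P_X(a))$ random variables bounded in $[0,1]$, so Hoeffding's inequality gives
\[
\Pr\!\left\{\left|\frac{N(a|X^n)}{n} - P_X(a)\right| > \frac{\zeta}{|\mathcal{X}|}\right\} \;\leq\; 2 \exp\!\left(-2n\left(\frac{\zeta}{|\mathcal{X}|}\right)^{\!2}\right).
\]
Note that when $P_X(a) = 0$, the event is empty and contributes nothing, so the indicator in the definition of typicality costs nothing here. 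Taking a union bound over the at most $|\mathcal{X}|$ letters yields $P_X^n[(\mathcal{A}_{\zeta}^n(X))^c] \leq 2|\mathcal{X}|\exp(-2n(\zeta/|\mathcal{X}|)^2)$, which is the claim.

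These are classical method-of-types arguments, so there is no serious obstacle. The only bookkeeping care needed is in the cardinality bound, where one must track the indicator $\mathbbm{1}(P_X(a) \neq 0)$ in the typicality definition and invoke $0 \log 0 = 0$ so that letters with $P_X(a) = 0$ (which satisfy $N(a|x^n) = 0$ automatically for any typical $x^n$) do not produce a spurious $-\infty$ in the log-probability calculation; with this convention, the sum collapses cleanly to the claimed $H(X) + \zeta'$.
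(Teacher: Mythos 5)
Your proof is correct and takes essentially the same approach the paper implies: the paper omits the proof of Lemma \ref{Lem:6} with only the remark that it ``follows from the application of the Chernoff inequality and the union bound,'' which is exactly your Hoeffding-plus-union-bound argument for the probability tail, and your lower bound on the probability of each typical sequence for the cardinality count is the standard method-of-types derivation that the citation to Csisz\'ar--K\"orner is meant to invoke. Both steps are handled cleanly, including the correct treatment of letters with $P_X(a)=0$ via the indicator in the typicality definition.
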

The proof follows from the application of the Chernoff inequality and the union bound, and is omitted for brevity.
\begin{Lemma}
\label{Lem:8}
Given a probability space $(\mathcal{X} \times \mathcal{Y},P_{X,Y})$,  parameters $ \zeta \in [0,|\mathcal{X}| P_{\max}]$ and $\delta>0$, blocklength $n \in \mathbb{N}$, and a vector $x^n \in \mathcal{A}^{n}_{\zeta}(X)$, we have
\begin{align}
    |\mathcal{A}_{\delta}^{n}(Y|x^n)| \geq 2^{n(H(Y|X)- \delta_1-\zeta_2)}
\alpha(\delta), \end{align}
where $P_{\max}=\max_{a \in \mathcal{X}} P_X(a)$,
$\delta_1 \triangleq -\frac{\delta}{|\mathcal{Y}|} \sum_{a \in \mathcal{X}} \sum_{b \in \mathcal{Y}}  \log P_{Y|X}(b|a)$, 
$\zeta_2 \triangleq \frac{\zeta}{|\mathcal{X}|} \sum_{a \in \mathcal{X}} H(Y|X=a)$, 
and $\alpha(\delta) \triangleq  \left[1-2|\mathcal{Y}| \exp \left\{ -\frac{n}{P_{\max} } \left( \frac{\delta}{|\mathcal{Y}|} 
\right)^2 \right\} \right]^{|\mathcal{X}|}$.
\end{Lemma}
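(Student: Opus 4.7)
The plan is to apply the standard ``random coding'' argument for lower-bounding conditional typical sets: draw $Y^n$ conditionally i.i.d.\ given $x^n$ according to $\prod_i P_{Y|X}(\cdot\mid x_i)$, lower bound the probability that $Y^n$ lands in $\mathcal{A}_\delta^n(Y|x^n)$, upper bound the probability of any particular sequence in that set, and divide. This yields
$$
|\mathcal{A}_\delta^n(Y|x^n)| \ \ge\ \frac{\Pr\{Y^n\in\mathcal{A}_\delta^n(Y|x^n)\mid X^n=x^n\}}{\max_{y^n\in\mathcal{A}_\delta^n(Y|x^n)} P(y^n|x^n)},
$$
and the two ingredients are designed to produce $\alpha(\delta)$ in the numerator and $2^{-n(H(Y|X)-\delta_1-\zeta_2)}$ in the denominator.

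For the probability lower bound, I would partition the time indices by the value of $x_i$: let $\mathcal{I}_a=\{i: x_i=a\}$, so $|\mathcal{I}_a|=N_a$. Under the conditional product distribution, the restriction of $(Y_i)_{i\in\mathcal{I}_a}$ is a sequence of $N_a$ i.i.d.\ samples from $P_{Y|X}(\cdot\mid a)$, and the families indexed by different $a$ are independent. For each $(a,b)$, Hoeffding's inequality with deviation $n\delta/|\mathcal{Y}|$ gives $\Pr\{|N(a,b|x^n,Y^n)-N_a P_{Y|X}(b|a)|>n\delta/|\mathcal{Y}|\}\le 2\exp(-2n^2\delta^2/(N_a|\mathcal{Y}|^2))$. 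The hypothesis $x^n\in\mathcal{A}_\zeta^n(X)$ together with $\zeta\le|\mathcal{X}|P_{\max}$ forces $N_a\le n(P_X(a)+\zeta/|\mathcal{X}|)\le 2nP_{\max}$, simplifying the per-pair bound to $2\exp(-n\delta^2/(P_{\max}|\mathcal{Y}|^2))$. A union bound over $b\in\mathcal{Y}$ handles each $a$, and the independence across $a$ lets me multiply the $|\mathcal{X}|$ resulting factors, producing exactly $\alpha(\delta)$.

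For the per-sequence upper bound, I would write $-\log P(y^n|x^n)=\sum_{a,b} N(a,b|x^n,y^n)\log(1/P_{Y|X}(b|a))$ and use the conditional-typicality inequality $N(a,b|x^n,y^n)\ge N_a P_{Y|X}(b|a)-n\delta/|\mathcal{Y}|$ on the nonnegative factor $\log(1/P_{Y|X}(b|a))$. The first term collapses to $\sum_a N_a H(Y|X=a)$ and the second gives exactly $-n\delta_1$ by the definition of $\delta_1$. Next, applying $X$-typicality in the other direction, $N_a\ge n(P_X(a)-\zeta/|\mathcal{X}|)$, converts $\sum_a N_a H(Y|X=a)$ into $nH(Y|X)-n\zeta_2$. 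Concatenating yields $P(y^n|x^n)\le 2^{-n(H(Y|X)-\delta_1-\zeta_2)}$, and dividing the two bounds completes the argument.

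The main obstacle is largely bookkeeping rather than conceptual: one must track signs carefully (both $\delta_1$ and $\zeta_2$ are nonnegative constants that appear as penalties, not rewards), and one must handle the degenerate pairs $(a,b)$ with $P_X(a)=0$ or $P_{Y|X}(b|a)=0$ so that the indicator functions in the typicality definitions correctly restrict the sums in $\delta_1$ and $\zeta_2$. The only nontrivial analytic step is the Hoeffding application, and it is essentially forced by the role of the hypothesis $\zeta\le|\mathcal{X}|P_{\max}$, which guarantees $N_a\le 2nP_{\max}$ and hence the exponent $n\delta^2/(P_{\max}|\mathcal{Y}|^2)$ appearing inside $\alpha(\delta)$.
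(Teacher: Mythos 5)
Your proposal is correct and follows essentially the same route as the paper: a Hoeffding-plus-union-bound lower estimate on $P^n_{Y|X}(\mathcal{A}^n_\delta(Y|x^n)\mid x^n)$ factored over $a\in\mathcal{X}$ using $N(a|x^n)\le 2nP_{\max}$, paired with a typicality-driven per-sequence upper bound $P^n_{Y|X}(y^n|x^n)\le 2^{-n(H(Y|X)-\delta_1-\zeta_2)}$, and then division. Your sign bookkeeping for the $\delta_1$ term is in fact the correct one — the paper's intermediate display (step (a)) writes $2^{-n(\sum_a \frac{N(a|x^n)}{n}H(Y|X=a)+\delta_1)}$ where the $+\delta_1$ should read $-\delta_1$; the paper's step (b) and the lemma statement agree with your version.
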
}
\begin{proof}
\textcolor{black}{We first provide a lower bound on the probability of the conditional typical set as follows:
\begin{align*}
P_{Y|X}^n (\mathcal{A}_{\delta}^n(Y|x^n)|x^n) &= P_{Y|X}^n \left[ \left.
\bigcap_{a \in \mathcal{X}} \bigcap_{b \in \mathcal{Y}} \left| \frac{1}{n} N(a,b|x^n,Y^n)-\frac{1}{n} N(a|x^n) P_{Y|X}(b|a) \right| \leq \frac{\delta}{|\mathcal{Y}|} 
\right| x^n \right] \\
&=\prod_{a \in \mathcal{X}} P_{Y|X}^n \left[ \left.
\bigcap_{b \in \mathcal{Y}} \left| \frac{1}{n} N(a,b|x^n,Y^n)-\frac{1}{n} N(a|x^n) P_{Y|X}(b|a) \right| \leq \frac{\delta}{|\mathcal{Y}|} 
\right| x^n \right] \\
&\stackrel{(a)}{\geq} \prod_{a \in \mathcal{X}} \left[ 1-2|\mathcal{Y}| \exp \left\{ 
-\frac{2}{N(a|x^n)} \left( \frac{\delta n}{|\mathcal{Y}|} \right)^2
\right\} \right], \\
&\stackrel{(b)}{\geq}\prod_{a \in \mathcal{X}} \left[ 1-2|\mathcal{Y}| \exp \left\{ 
-\frac{n}{P_{\max} } \left( \frac{\delta }{|\mathcal{Y}|} \right)^2
\right\} \right],
\end{align*}
where (a) follows from the union bound and the Chernoff inequality: For any i.i.d. sequence $Z_1,Z_2,\ldots,$ of Bernoulli($p$) random variables we have  
\[
P \left[ \left|\frac{1}{n} \sum_{i=1}^n Z_i -p \right| \geq \delta  \right] \leq 
2 \exp(-2n \delta^2 ),
\]
and (b) follows from the inequality: $N(a|x^n) \leq n \left( P_X(a)+\frac{\zeta}{|\mathcal{X}|} \right) \leq 2n P_{\max}$, which follows using the condition on $\zeta$, and the fact that $x^n \in \mathcal{A}_{\zeta}^n(P_X)$. 
Next we provide an upper bound on the probability of conditional typical set. 
\begin{align*}
   P_{Y|X}^n (\mathcal{A}_{\delta}^n(Y|x^n)|x^n) &= \sum_{y^n \in \mathcal{A}_{\delta}^n(Y|x^n) }
   P_{Y|X}^n(y^n|x^n) \\
   &= \sum_{y^n \in \mathcal{A}_{\delta}^n(Y|x^n) }
   \exp_2 \left[\sum_{a \in \mathcal{X}} \sum_{b \in \mathcal{Y}} N(a,b|x^n,y^n) \log P_{Y|X}(b|a)  \right] \\
   &\stackrel{(a)}{\leq} | \mathcal{A}_{\delta}^n(Y|x^n)| \textcolor{black}{2^{-n(\sum_{a\in \mathcal{X}} \frac{N(a|x^n)}{n}  H(Y|X=a)+\delta_1)}} \\
   &\stackrel{(b)}{\leq} | \mathcal{A}_{\delta}^n(Y|x^n)| 2^{-n(H(Y|X)-\zeta_2-\delta_1)},
\end{align*}
where (a) follows from the definition of a conditional typical sequence, and (b) follows from the fact that  $x^n \in \mathcal{A}_{\zeta}^n(X)$. The statement of the lemma follows by combining the lower and upper bounds. } 
\end{proof}

\textcolor{black}{We proceed to describe the coding scheme achieving the bounds described in the statement of the proposition. Consider a source $(\mathcal{S},\mathcal{W},P_{S_1}, d^n_{S_1,W})$, a blocklength $n\in \mathcal{B}(\epsilon)$, and a parameter
$\tau\in (\frac{1}{\sqrt{n}}\sigma,\sigma')$. We use the following source coding scheme:
\\\noindent \textbf{Codebook Generation:} Let 
$\Theta= \ceil{2^{n(I(W;S_1)+\theta_n(\tau)-\frac{1}{n})}}$ and $\hat{\tau}= \tau(|\mathcal{S}|+|\mathcal{W}|)$. Pick $\Theta$ codewords from the set $\mathcal{A}_{\hat{\tau}}^n(W)$ randomly, uniformly, and with replacement to construct the codebook. Let the $i$th codeword be denoted by $W^n(i), i\in [1,\Theta]$.
\\\noindent\textbf{Encoding:} Given a source sequence $S_1^n$, the encoder produces the index $e(S_1^n)\in [1,\Theta]$ such that:
\begin{align*}
    e(S_1^n)= \min\{i: (S_1^n,W^n(i))\in \mathcal{A}_{\hat{\tau}}^n(S_1,W)\}.
\end{align*}
If such an index does not exist, the encoder produces $e(S_1^n)=1$. 
\\\noindent\textbf{Decoding:} Given the index $i$, the decoder reproduces $W^n(i)$.}

\textcolor{black}{The resulting rate is $\frac{1}{n}\log{\Theta}\leq  I(W;S_1)+\theta_n(\tau)$. It remains to be shown that the condition $P(d^n_{S_1,W}(S_1^n,Q_n(S_1^n))>\phi)\leq \phi'$ in Definition \ref{def:Q} is satisfied. Note that if there exists an index $i\in [1,\Theta]$ such that $(S_1^n,W^n(i))\in \mathcal{A}_{\hat{\tau}}^n(S_1,W)$ then, by definition we have $d^n_{S_1,W}(S_1^n,Q_n(S_1^n))\leq \phi$. So,  $P(d^n_{S_1,W}(S_1^n,Q_n(S_1^n))>\phi)\leq P(\nexists i\in [1,\Theta]: (S_1^n,W^n(i))\in \mathcal{A}_{\hat{\tau}}^n(S_1,W))$. Consequently, it suffices to show that $P(\nexists i\in [1,\Theta]: (S_1^n,W^n(i))\in \mathcal{A}_{\hat{\tau}}^n(S_1,W))\leq \phi'$.
We have:}
\begin{align*}
   \textcolor{black}{P(\nexists i\in [1,\Theta]: (S_1^n,W^n(i))\in \mathcal{A}_{\hat{\tau}}^n(S_1,W))}&
  \stackrel{(a)}
  {\leq}
  \textcolor{black}{
    P(\nexists i\in [1,\Theta]: (S_1^n,W^n(i))\in \mathcal{A}_{\hat{\tau}}^n(S_1,W), S_1^n\in \mathcal{A}_{\tau}^n(S_1))
    + P(S_1^n\notin \mathcal{A}_{\tau}^n(S_1))}
    \\
    &\stackrel{(b)}{\leq}
    \textcolor{black}{\sum_{s^n\in \mathcal{A}_\tau^n (S_1)}
    P^n_{S_1}(s^n)\Big(1- \frac{|\{w^n:(s^n,w^n)\in \mathcal{A}_{\hat{\tau}}(S_1,W)\}|}{|\mathcal{A}_{\hat{\tau}}^n(W)|} \Big)^{\Theta}+P(S_1^n\notin \mathcal{A}_{\tau}^n(S_1))}
    \\&
    \stackrel{(c)}{\leq} 
    \textcolor{black}{
    \sum_{s^n\in \mathcal{A}_\tau^n (S_1)}
    P^n_{S_1}(s^n)\Big(1- \frac{2^{n(H(W|S_1)- \tau_1-\tau_2)} 
\alpha(\tau)}{|\mathcal{A}_{\hat{\tau}}^n(W)|} \Big)^{\Theta}+P(S_1^n\notin \mathcal{A}_{\tau}^n(S_1))}
\\&
\stackrel{(d)}{\leq} 
\textcolor{black}{
    \sum_{s^n\in \mathcal{A}_\tau^n (S_1)}
    P^n_{S_1}(s^n)\Big(1- \frac{2^{n(H(W|S_1)- \tau_1-\tau_2)} 
\alpha(\tau)}{2^{n(H(S_1)+\tau')}} \Big)^{\Theta}+P(S_1^n\notin \mathcal{A}_{\tau}^n(S_1))}
\\& \stackrel{(e)}{\leq}
\textcolor{black}{
\sum_{s^n\in \mathcal{A}_\tau^n (S_1)}
    P^n_{S_1}(s^n)\Big(1- \frac{2^{n(H(W|S_1)- \tau_1-\tau_2)} 
\alpha(\tau)}{2^{n(H(S_1)+\tau')}} \Big)^{\Theta}+
 2|\mathcal{S}| \exp \left( -2n \left(\frac{\tau}{|\mathcal{S}|} \right)^2 \right)}
 \\& \stackrel{(f)}{\leq}
 \textcolor{black}{
 \sum_{s^n\in \mathcal{A}_\tau^n (S_1)}
    P^n_{S_1}(s^n)\exp({-\Theta\alpha(\tau)2^{-n(I(S_1;W)+\tau_1+\tau_2+\tau')}})+
 2|\mathcal{S}| \exp \left( -2n \left(\frac{\tau}{|\mathcal{S}|} \right)^2 \right)}
 \\&{\leq}
 \textcolor{black}{
 \exp({-2^{n(I(W;S_1)+\theta_n(\tau)-\frac{1}{n})}\alpha(\tau)2^{-n(I(S_1;W)+\tau_1+\tau_2+\tau')}})+
 2|\mathcal{S}| \exp \left( -2n \left(\frac{\tau}{|\mathcal{S}|} \right)^2 \right),}
\end{align*}
\textcolor{black}{
where $\tau_1 \triangleq \frac{\tau}{|\mathcal{W}|}\sum_{a,b\in \mathcal{S}\times \mathcal{W}}\log{\frac{1}{P_{W|S_1}(b|a)}}$, $\tau_2\triangleq \frac{\tau}{|\mathcal{S}|}\sum_{a\in \mathcal{S}}H(W|S_1=a)$, and $\tau'= \frac{\hat{\tau}}{|\mathcal{S}|}\sum_{a\in \mathcal{S}}\log{\frac{1}{P_{S_1}(a)}}$, and (a) follows from the law of total probability, (b) follows from the fact that the codewords are chosen independently and with equal probability from $\mathcal{A}_{\hat{\tau}}^n(W)$, (c) follows from Lemma \ref{Lem:8}, (d) and (e) follow from Lemma \ref{Lem:6}, and (f) follows from $(1-y)^n\leq e^{-yn}$.} 
\textcolor{black}{Note that using the limits on $\tau  $ and $n$, it follows that $\alpha(\tau)\geq  \frac{1}{2}^{|\mathcal{S}|}$. So, 
\begin{align*}
    P(\nexists i\in [1,\Theta]: (S_1^n,W^n(i))\in \mathcal{A}_{\hat{\tau}}^n(S_1,W))
    \leq 
    \exp({-2^{n(I(W;S_1)+\theta_n(\tau)-\frac{1}{n}-\frac{1}{n}|\mathcal{S}|-I(S_1;W)-\tau_1-\tau_2-\tau')}})+
 2|\mathcal{S}| \exp \left( -2n \left(\frac{\tau}{|\mathcal{S}|} \right)^2 \right)
\end{align*}
Note that $\theta_n(\tau)$ defined in the proposition statement can be rewritten as follows:
\begin{align*}
\theta_n(\tau)=
    \frac{1}{n}\log\left(2n\frac{\tau^2}{|\mathcal{S}|^2}- \ln{2|\mathcal{S}|}\right)+\tau_1+\tau_2+\tau'+\frac{1}{n}|\mathcal{S}|+\frac{1}{n}.
\end{align*}
Hence,
\begin{align*}
    P(\nexists i\in [1,\Theta]: (S_1^n,W^n(i))\in \mathcal{A}_{\hat{\tau}}^n(S_1,W))
    &\leq 
    \exp({-2^{\log\left(2n\frac{\tau^2}{|\mathcal{S}|^2}- \ln{2|\mathcal{S}|}\right)}})+
 2|\mathcal{S}| \exp \left( -2n \left(\frac{\tau}{|\mathcal{S}|} \right)^2 \right)
 \\& 
 =
    \exp\left({-\left(2n\frac{\tau^2}{|\mathcal{S}|^2}- \ln{2|\mathcal{S}|}\right)}\right)+
 2|\mathcal{S}| \exp \left( -2n \left(\frac{\tau}{|\mathcal{S}|} \right)^2 \right)
 \\&=  4|\mathcal{S}| \exp \left( -2n \left(\frac{\tau}{|\mathcal{S}|} \right)^2 \right)=\phi'.
\end{align*}}

\section{Proof of Claim \ref{Claim:1}}
\label{App:Claim1}
\textcolor{black}{First, we argue that for any $i\in [1,m]$, the tuple $(\widetilde{X}_1(j,i),\widetilde{X}_2(j,i),\widetilde{W}'_1(j,i),\widetilde{W}_2(j,i))$ is a function of \\$\left(X_1([1,n],i), X_2([1,n],i), T([1,n],i)\right)$.  This is true since $\widetilde{W}'_1([1,n],i)=W'_1(\pi_i([1,n]),i)$ which from \eqref{eq:s20} is a function of $T([1,n],i)$ and $Q(X_1([1,n],i))$, similarly, $\widetilde{W}_2([1,n],i)=W_2(\pi_i([1,n]),i)$ is a function of $W_2([1,n],i)=Q(X_2([1,n],i))$. Define $\overline{A}_i=\left(X_1([1,n],i), X_2([1,n],i), T([1,n],i)\right)$. $A^{m}$ is a vector of independent random tuples since $(X_1^{n,m}, X_2^{n,m},T^{n,m})$ is a vector of i.i.d. triples. So,  $(\widetilde{X}_1(j,[1,m]),\widetilde{X}_2(j,[1,m]),\widetilde{W}'_1(j,[1,m]),\widetilde{W}_2(j,[1,m]))$ is a vector of independent random variables. Next, we show that these random variables are identically distributed. For arbitrary $i,l\in[1,m]$, $j \in [1,n]$, and $a_1,a_2,b_1,b_2\in \mathcal{X}_1\times\mathcal{X}_2\times\mathcal{W}\times\mathcal{W}$, we have:
 \begin{align*}
 P'&_{\widetilde{X}_1(j,i),\widetilde{X}_2(j,i),\widetilde{W}'_1(j,i),\widetilde{W}_2(j,i)}(a_1,a_2,b_1,b_2)
 =\sum_{k=1}^nP(\pi_{i}^{-1}(j)=k)P'_{X_1(k,i),X_2(k,i),{W}'_1(k,i),{W}_2(k,i)}(a_1,a_2,b_1,b_2)\\
 &=\frac{1}{n}\sum_{k=1}^n
 P'_{X_1(k,i),X_2(k,i),{W}'_1(k,i),{W}_2(k,i)}(a_1,a_2,b_1,b_2) \\
&=\frac{1}{n}\sum_{k=1}^n\sum_{x_1^{n},x_2^{n}:x_i(k)=a_i}P_{X_1([1,n],i),X_2([1,n],i)}(x_1^n,x^n_2)P'_{{W}'_1(k,i)|X_1([1,n],i)}(b_1|x_1^n)P'_{W_2(k,i)|X_2([1,n],i)}(b_2|x_2^n)\\
&=\frac{1}{n} \sum_{k=1}^n \sum_{x_1^{n},x_2^{n}:x_i(k)=a_i}P_{X_1^n,X_2^n}(x_1^n,x^n_2)P(W'_1(k,i)=b_1,W_2(k,i)=b_2|X_1([1,n],i)=x_1^n,X_2([1,n],i)=x_2^n)\\
&=\frac{1}{n} \sum_{k=1}^n \sum_{x_1^{n},x_2^{n}:x_i(k)=a_i}P'_{X_1^n,X_2^n}(x_1^n,x^n_2)P(W'_1(k,l)=b_1,W_2(k,l)=b_2|X_1([1,n],l)=x_1^n,X_2([1,n],l)=x_2^n)
\\
&=P'_{\widetilde{X}_1(j,l),\widetilde{X}_2(j,l),\widetilde{W}'_1(j,l),\widetilde{W}_2(j,l)}(a_1,a_2,b_1,b_2).
\end{align*}
Also, for any $j,l\in [1,n]$, $i\in [1,m]$, and $a_1,a_2,b_1,b_2\in \mathcal{X}_1\times\mathcal{X}_2\times\mathcal{W}\times\mathcal{W}$ we have:
\begin{align*}
  P'_{\widetilde{X}_1(j,i),\widetilde{X}_2(j,i),\widetilde{W}'_1(j,i),\widetilde{W}_2(j,i)}(a_1,a_2,b_1,b_2)&
 =\sum_{k=1}^nP(\pi_{j}^{-1}(i)=k)P'_{X_1(k,i),X_2(k,i),{W}'_1(k,i),{W}_2(k,i)}(a_1,a_2,b_1,b_2)\\
 &=
 \sum_{k=1}^nP(\pi_{l}^{-1}(i)=k)P'_{X_1(k,i),X_2(k,i),{W}'_1(k,i),{W}_2(k,i)}(a_1,a_2,b_1,b_2)
 \\
 &=  P'_{\widetilde{X}_1(l,i),\widetilde{X}_2(l,i),\widetilde{W}'_1(l,i),\widetilde{W}_2(l,i)}(a_1,a_2,b_1,b_2).
\end{align*}
Note that $P'_{\widetilde{X}_1(j,l),\widetilde{X}_2(j,l),\widetilde{W}'_1(j,l),\widetilde{W}_2(j,l)}$ is equal to  $P'_{{X}_1,{X}_2,{W}'_1,{W}_2}$ as defined in Equation \eqref{eq:9}. }

\section{Proof of Claim \ref{Claim:2}}
\label{App:Claim2}
\textcolor{black}{
1) Note that from \eqref{eq:distpp}, we have $P'_{X_1,X_2,W'_1,U_1}= P'_{X_1,X_2,W'_1}P_{U_1|X_1,W}$. So, it suffices to show that $P'_{X_1,X_2,W'_1}=P_{X_1,X_2}P_{W|S_1}$. This is true since
\begin{align*}
P'_{X_1,X_2,W'_1}&\stackrel{(a)}{=}\frac{1}{n}\sum_{i=1}^n P'_{X_1(i),X_2(i),W'_1(i)}=\frac{1}{n}\sum_{i=1}^n P'_{X_2(i)|X_{1}(i)}P'_{X_1(i)W'_1(i)}\\
&\stackrel{(b)}{=}P_{X_2|X_{1}}\frac{1}{n}\sum_{i=1}^n P'_{X_1(i)W'_1(i)}
\stackrel{(c)}{=} P_{X_2|X_1}\frac{1}{n}\sum_{i=1}^nP'_{X_1(i)}P'_{W'_1(i)|S_1(i)}
\stackrel{(d)}{=}P_{X_1,X_2}P_{W_1|S_1},
\end{align*}
where in (a) we have used the definition given in \eqref{eq:9}, in (b) and (c)  we have used the Markov chain $W'_1(i) - S_1(i) - X_1(i) - X_2(i)$, which holds because $P_{{W'_1}^n,S_1^n,X_1^n,X_2^n}= P_{{W'_1}^n|S_1^n}\prod_{i=1}^n P_{X_1(i),S_1(i),X_2(i)}$,
and (d) follows from our choice of $P'_{T|S_1}$ in Step 2, where the objective was to make $P_{S_1,W}= P'_{S_1(I),W'_1(I)}$. Condition 2) is true by construction of $P'$ as given in \eqref{eq:distpp}.  
\\3) We have:
\begin{align*}
 P'(W'_1=W_2,S_1=S_2)
 &\geq\frac{1}{n}\sum_{i=1}^n\sum_{s^n\in \mathcal{S}^n}P(S_1^n=S_2^n=s^n)P'(W'_1(i)=W_2(i)|S^n_1=S^n_2=s^n)\\\
 &\stackrel{(a)}{\geq} 
 \frac{1}{n}\sum_{i=1}^n\sum_{s^n\in \mathcal{S}^n} P(S^n_1=S^n_2=s^n)P'(T(i)=0|S_1(i)=s)P'(W_1(i)=W_2(i)|S_1^n=S_2^n=s^n)
 \\&= \sum_{s^n\in \mathcal{S}^n}P(S^n_1=S^n_2=s^n)P'(T(i)=0|S_1(i)=s_i)
  \\
 &\geq p(\tau) P(S_1^n=S_2^n)
= p(\tau) (1-\delta_n),
\end{align*}
where in (a) we have used the Markov chain $T(i) - S_1(i) -  X_1^n,X_2^n$.
Condition  4) is true by construction of $P'$ as given in \eqref{eq:distpp}.
\\5) Note that:
\begin{align*}
    &P'_{X_1,X_2,W'_1,W_2}(x_1,x_2,w_1,w_2)= \frac{1}{n} \sum_{i=1}^n P'_{X_1(i), X_2(i), W'_1(i), W_2(i)}(x_1,x_2,w_1,w_2)
    \\&= 
    \frac{1}{n} \sum_{i=1}^n\sum_{x_1(j), x_2(j), j\neq i} P'_{X_1^n, X_2^n, W'_1(i), W_2(i)}(x^n_1,x^n_2,w_1,w_2)
    \\&= 
    \frac{1}{n} \sum_{i=1}^n\sum_{x_1(j), x_2(j), j\neq i}
    P_{S_1^n,S_2^n}(s_1^n,s_2^n)
    P_{X^n_1,X_2^n|S_1^n,S_2^n}(x_1^n,x_2^n|s_1^n,s_2^n)
    P'_{ W'_1(i), W_2(i)|S_1^n,S_2^n}(w_1,w_2|s_1^n,s_2^n)
    \\&  = 
    \frac{1}{n} \sum_{i=1}^n\sum_{x_1(j), x_2(j), j\neq i}
\left[    \prod_{k=1}^n
    P_{S_1(k),S_2(k)}(s_1(k),s_2(k))P_{X_1(k),X_2(k)|S_1(k),S_2(k)}(x_1(k),x_2(k)|s_1(k),s_2(k)) \right] \\
    & \hspace{1in} 
    \times P'_{ W'_1(i), W_2(i)|S_1^n,S_2^n}(w_1,w_2|s_1^n,s_2^n)
    \\&= 
    \frac{1}{n}
    \sum_{i=1}^n
    P_{X_1(i),X_2(i)|S_1(i),S_2(i)}(x_1,x_2|s_1,s_2)
    \sum_{s_1(j), s_2(j), j\neq i}
    \left[ \prod_{k=1}^n
    P_{S_1(k),S_2(k)}(s_1(k),s_2(k)) \right]
    P'_{ W'_1(i), W_2(i)|S_1^n,S_2^n}(w_1,w_2|s_1^n,s_2^n)
    \\&=\frac{1}{n}
    \sum_{i=1}^n
    P_{X_1(i),X_2(i)|S_1(i),S_2(i)}(x_1,x_2|s_1,s_2)
    P'_{ W'_1(i), W_2(i),S_1(i),S_2(i)}(w_1,w_2,s_1,s_2),
\\&=
    P_{X_1,X_2|S_1,S_2}(x_1,x_2|s_1,s_2)
    \frac{1}{n} \sum_{i=1}^n
    P'_{ W'_1(i), W_2(i),S_1(i),S_2(i)}(w_1,w_2,s_1,s_2),
\end{align*}
where we have defined $x_1(i)=x_1, x_2(i)=x_2, s_1(i)=s_1$, $s_2(i)=s_2$, and $s_j(i)=f_j(x_j(i)), i\in [n], j\in \{1,2\}$.}



\section{Proof of Lemma \ref{prop:mutI}}
\label{App:prop:mutI}

\textcolor{black}{\\1) 
\begin{align*}
 &v\geq V(P_{X,Y,Z},Q_{X,Y,Z})=\frac{1}{2}\sum_{x,y,z}\big|P_{X,Y,Z}(x,y,z)-Q_{X,Y,Z}(x,y,z)\big|
 \\&\geq
 \frac{1}{2}\sum_{x,y}\big|\sum_{z}(P_{X,Y,Z}(x,y,z)-Q_{X,Y,Z}(x,y,z))\big|=V(P_{X,Y},Q_{X,Y}).
\end{align*}
2) Note that $I_P(X;Y)= H_P(X)+H_P(Y)-H_P(X,Y)$. So:
\begin{align*}
 &|I_P(X;Y)-I_Q(X;Y)\big|= \big|H_P(X)+H_P(Y)-H_P(X,Y)-H_Q(X)-H_Q(Y)+H_P(X,Y)\big|\\
 &\leq \big|H_P(X)-H_Q(X)\big|+ \big|H_P(Y)-H_Q(Y)\big|+\big|H_P(X,Y)-H_Q(X,Y)\big|\\
 &\leq 4\left(h_b(v)+v\log {|\mathcal{A}|}\right),
\end{align*}
where the last step uses Lemma \ref{prop:RY} and the fact that $v\geq V(P_X,Q_X)$ and $v\geq V(P_Y,Q_Y)$ from 1).
\\3) Follows from 1) and 2) and the fact that $I(X;Y|Z)\leq I(X,Z;Y)-I(Z;Y)$.
}

\section{Proof of Claim \ref{claim:RY}}
\label{App:Claim3}
\textcolor{black}{
 We have:
 \begin{align*}
 2V(P',Q)&=\sum_{w_1,w_2,x_1,x_2,u_1,u_2}\big|P'_{W'_1,W_2,X_1,X_2,U_1,U_2}(w_1,w_2,x_1,x_2,u_1,u_2)-Q_{W,W,X_1,X_2,U_1,U_2}(w_1,w_2,x_1,x_2,u_1,u_2)\big|\\
 &= \sum_{w_1,w_2,x_1,x_2,u_1,u_2}Q_{U_1|W,X_1}(u_1|w_1,x_1)Q_{U_2|W,X_2}(u_2|w_2,x_2)\times
 \\&
 \big|P'_{W'_1,W_2,X_1,X_2}(w_1,w_2,x_1,x_2)-Q_{W,W,X_1,X_2}(w_1,w_2,x_1,x_2)\big|
\end{align*}
Also, we have:
\begin{align*}
&\big|P'_{W'_1,W_2,X_1,X_2}(w_1,w_2,x_1,x_2)-Q_{W,W,X_1,X_2}(w_1,w_2,x_1,x_2)\big|
\\&=
 P_{X_1,X_2|S_1,S_2}(x_1,x_2|s_1,s_2)\big|P'_{W'_1,W_2,S_1,S_2}(w_1,w_2,s_1,s_2)-Q_{W,W,S_1,S_2}(w_1,w_2,s_1,s_2)\big|,
\end{align*}
for all $x_1,x_2,w_1,w_2\in \mathcal{X}_1\times \mathcal{X}_2\times \mathcal{W}\times \mathcal{W}$, where we have used the Markov chain
$(W'_1, W_2) - (S_1,S_2) - (X_1,X_2)$ for $P'$ from Theorem \ref{thm: MCDSC22}, and the Markov chain $W- S_1 - (X_1,X_2)$ for $Q$ from Theorem \ref{thm:MC}. Furthermore,
\begin{align*}
 &\big|P'_{W'_1,W_2,S_1,S_2}(w_1,w_2,s_1,s_2)-Q_{W,W,S_1,S_2}(w_1,w_2,s_1,s_2)\big|
 \\&=\big|P'_{W'_1,S_1}(w_1,s_1)P'_{W_2,S_2|W'_1,S_1}(w_2,s_2|w_1,s_1)-\mathbbm{1}(w_1=w_2)Q_{W,S_1}(w_1,s_1)Q_{S_2|W,S_1}(s_2|w_1,s_1)\big|
\\&=Q_{W,S_1}(w_1,s_1)\big|P'_{W_2,S_2|W'_1,S_1}(w_2,s_2|w_1,s_1)-\mathbbm{1}(w_1=w_2)Q_{S_2|W,S_1}(s_2|w_1,s_1)\big|,
\end{align*}
for all $s_1,s_2,w_1,w_2\in \mathcal{S}_1\times \mathcal{S}_2\times \mathcal{W}\times \mathcal{W}$, and,
\begin{align*}
&\big|P'_{W_2,S_2|W'_1,S_1}(w_2,s_2|w_1,s_1)-\mathbbm{1}(w_1=w_2)Q_{S_2|W,S_1}(s_2|w_1,s_1)\big|
\\&\leq \big|P'_{W_2,S_2|W'_1,S_1}(w_2,s_2|w_1,s_1)-\mathbbm{1}(w_1=w_2, s_1=s_2)\big|
+\big|\mathbbm{1}(w_1=w_2, s_1=s_2)-\mathbbm{1}(w_1=w_2)Q_{S_2|W,S_1}(s_2|w_1,s_1)\big|.
\end{align*}
So:
\begin{align*}
 &2V(P'_{W'_1,W_2,X_1,X_2,U_1,U_2},Q_{W,W,X_1,X_2,U_1,U_2})\leq
 \\& \sum_{w_1,w_2,x_1,x_2,u_1,u_2}Q_{U_1|W,X_1}(u_1|w_1,x_1)Q_{U_2|W,X_2}(u_2|w_2,x_2)P_{X_1,X_2|S_1,S_2}(x_1,x_2|s_1,s_2)Q_{W,S_1}(w_1,s_1) \times
 \\&\left( \big|P'_{W_2,S_2|W'_1,S_1}(w_2,s_2|w_1,s_1)-\mathbbm{1}(w_1=w_2, s_1=s_2)\big|
+\big|\mathbbm{1}(w_1=w_2, s_1=s_2)-\mathbbm{1}(w_1=w_2)Q_{S_2|W,S_1}(s_2|w_1,s_1)\big|\right)
\\&= \sum_{w_1,w_2,x_1,x_2}P_{X_1,X_2|S_1,S_2}(x_1,x_2|s_1,s_2)Q_{W,S_1}(w_1,s_1)\cdot \big|P'_{W_2,S_2|W'_1,S_1}(w_2,s_2|w_1,s_1)-\mathbbm{1}(w_1=w_2, s_1=s_2)\big|
 \\&
+ \sum_{w,x_1,x_2}P_{X_1,X_2|S_1,S_2}(x_1,x_2|s_1,s_2)Q_{W,S_1}(w,s_1)\cdot \big|\mathbbm{1}( s_1=s_2)-Q_{S_2|W,S_1}(s_2|w,s_1)\big|
\\&=\sum_{w_1,w_2,s_1,s_2} Q_{W,S_1}(w_1,s_1)\big|P'_{W_2,S_2|W'_1,S_1}(w_2,s_2|w_1,s_1)-\mathbbm{1}(w_1=w_2, s_1=s_2)\big|
\\&+ \sum_{w,s_1,s_2}Q_{W,S_1}(w,s_1)\big|\mathbbm{1}( s_1=s_2)-Q_{S_2|W,S_1}(s_2|w,s_1)\big|.
\end{align*}
Consider the first term:
\begin{align}
&\nonumber \sum_{w_1,w_2,s_1,s_2} Q_{W,S_1}(w_1,s_1)\big|P'_{W_2,S_2|W'_1,S_1}(w_2,s_2|w_1,s_1)-\mathbbm{1}(w_1=w_2, s_1=s_2)\big|\\
&\nonumber=\sum_{\substack{w_1,w_2,s_1,s_2\\ (w_1,s_1)=(w_2,s_2)}} Q_{W,S_1}(w_1,s_1)\big|P'_{W_2,S_2|W'_1,S_1}(w_2,s_2|w_1,s_1)-1\big|
+\sum_{\substack{w_1,w_2,s_1,s_2\\ (w_1,s_1)\neq(w_2,s_2)}} Q_{W,S_1}(w_1,s_1)\big|P'_{W_2,S_2|W'_1,S_1}(w_2,s_2|w_1,s_1)\big|\\
&\nonumber=\sum_{w_1,s_1} Q_{W,S_1}(w_1,s_1)(1-P'_{W_2,S_2|W'_1,S_1}(w_1,s_1|w_1,s_1))
+\sum_{\substack{w_1,w_2,s_1,s_2\\ (w_1,s_1)\neq(w_2,s_2)}} P'_{W'_1,W_2,S_1,S_2}(w_1,w_2,s_1,s_2)\\
&\nonumber=1-P'(W'_1=W_2, S_1=S_2)+P'( (W'_1,S_1)\neq (W_2,S_2))\\
&= 2(1-P'(W'_1=W_2, S_1=S_2))\leq  2\left(1-p(\tau)(1-\delta_n)\right). \label{eq:41}
\end{align}
The second term can be bounded from above  similarly:
\begin{align}
 \sum_{w,s_1,s_2}Q_{W,S_1}(w,s_1)\big|\mathbbm{1}( s_1=s_2)-Q_{S_2|W,S_1}(s_2|w,s_1)\big|= 2(1-P(S_1=S_2))
 \leq 2\epsilon.
 \label{eq:42}
\end{align}
Combining \eqref{eq:41} and \eqref{eq:42}, we get $V(P,Q)\leq 1-p(\tau)+p(\tau)\delta_n+\epsilon$.}

\section{Proof of Lemma \ref{Lem: Fin1}}
\label{App:Lem: Fin1}
\textcolor{black}{
 First, note that $\widetilde{S}([1,n],j)$ is a function of $(X([1,n],j),Z([1,n],j))$ for any $j\in [1,m]$. Since the sources are discrete memoryless, $\widetilde{S}([1,n],j)$ and $\widetilde{S}([1,n],j')$ are independent of each other for any $j,j'\in [1,m]$, where $j\neq j'$. 
It remains to show that $\widetilde{S}(i,j), (i,j)\in [1,n]\times [1,m]$ are identically distributed. We have:
\begin{align*}
P(\widetilde{S}(i,j)=1)&=P(X(\pi_j(i),j)+\widehat{V}(\pi_j(i),j))+Z(\pi_j(i),j)=1)\\
&\stackrel{(a)}{=}p\ast P(X(\pi_j(i),j)+\widehat{V}(\pi_j(i),j)=1)\\
&\stackrel{(b)}{=}p\ast \frac{1}{n}\sum_{i\rq{}=1}^n \mathbb{E}(w_H(X(i',j)+\widehat{V}(i',j)))\\
&\stackrel{(c)}{=}p\ast \delta'
\end{align*}
(a) holds since $Z$ is independent of $X$, and $X(\pi_j(i),j)+\widehat{V}(\pi_j(i),j)$ is a function of $X([1,n],[1,m])$, (b) holds since the choice of $\pi_i$ is independent of the source sequences, and (c) holds since  $C_f^{(n)}$ corresponds to the quantizer $Q_{n}^*$ with distortion $\delta'$. 
}
\end{appendices}


%

\begin{thebibliography}{10}

\bibitem{puri2007prism}
R.~Puri, A.~Majumdar, and K.~Ramchandran.
\newblock Prism: A video coding paradigm with motion estimation at the decoder.
\newblock {\em IEEE transactions on image processing}, 16(10):2436--2448, 2007.

\bibitem{girod2005distributed}
B.~Girod, A.~M. Aaron, S.~Rane, and D.~Rebollo-Monedero.
\newblock Distributed video coding.
\newblock {\em Proceedings of the IEEE}, 93(1):71--83, 2005.

\bibitem{xiong2004distributed}
Z.~Xiong, A.~D Liveris, and S.~Cheng.
\newblock Distributed source coding for sensor networks.
\newblock {\em IEEE signal processing magazine}, 21(5):80--94, 2004.

\bibitem{akyildiz2007wireless}
I.~F. Akyildiz, T.~Melodia, and K.~R. Chowdury.
\newblock Wireless multimedia sensor networks: A survey.
\newblock {\em IEEE Wireless Communications}, 14(6):32--39, 2007.

\bibitem{aljohani2016distributed}
A.~J. Aljohani, S.~X. Ng, and L.~Hanzo.
\newblock Distributed source coding and its applications in relaying-based
  transmission.
\newblock {\em IEEE Access}, 4:1940--1970, 2016.

\bibitem{ElGamalLec}
A.~E. Gamal and Y.~H. Kim.
\newblock {\em Network Information Theory}.
\newblock Cambridge University Press, 2011.

\bibitem{SW}
D.~Slepian and J.~Wolf.
\newblock Noiseless coding of correlated information sources.
\newblock {\em IEEE Transactions on Information Theory}, 19(4):471--480, Jul
  1973.

\bibitem{Markov}
S.~Y. Tung.
\newblock {\em Multiterminal Source Coding}.
\newblock PhD thesis, Cornell University, Ithaca, NY, 1978.

\bibitem{berger1978multiterminal}
T~Berger.
\newblock Multiterminal source coding.
\newblock {\em Lecture notes presented at the 1977 CISM Summer School}.

\bibitem{wagner}
A.~B. Wagner, B.~G. Kelly, and Y.~Altug.
\newblock Distributed rate-distortion with common components.
\newblock {\em IEEE Transactions on Information Theory}, 57(7):4035--4057, July
  2011.

\bibitem{ComInf1}
P.~Gacs and J.~K\"{o}rner.
\newblock Common information is far less than mutual information.
\newblock {\em Problems of Control and Information Theory}, 2(2):119--162,
  1972.

\bibitem{ComInf2}
H.~S. Witsenhausen.
\newblock On sequences of pair of dependent random variables.
\newblock {\em SIAM Journal of Applied Mathematics}, 28(1):100--113, 1975.

\bibitem{FinLen}
F.~S. Chaharsooghi, A.~G. Sahebi, and S.~S. Pradhan.
\newblock Distributed source coding in absence of common components.
\newblock In {\em Information Theory Proceedings (ISIT), 2013 IEEE
  International Symposium on}, pages 1362--1366, July 2013.

\bibitem{padakandla2016communicating}
A.~Padakandla.
\newblock Communicating correlated sources over a mac in the absence of a
  g{\'a}cs-k{\"o}rner common part.
\newblock In {\em 2016 IEEE International Symposium on Information Theory
  (ISIT)}, pages 2014--2018. IEEE, 2016.

\bibitem{padakandla2017communicating}
A.~Padakandla.
\newblock Communicating correlated sources over an interference channel.
\newblock In {\em 2017 IEEE International Symposium on Information Theory
  (ISIT)}, pages 1441--1445. IEEE, 2017.

\bibitem{chaharsooghi2017sub}
F.~Shirani and S~S. Pradhan.
\newblock On the sub-optimality of single-letter coding in multi-terminal
  communications.
\newblock In {\em 2017 IEEE International Symposium on Information Theory
  (ISIT)}, pages 1823--1827. IEEE, 2017.

\bibitem{chaharsooghi2017correlation}
F.~Shirani and S~S. Pradhan.
\newblock On the correlation between boolean functions of sequences of random
  variables.
\newblock In {\em 2017 IEEE International Symposium on Information Theory
  (ISIT)}, pages 1301--1305. IEEE, 2017.

\bibitem{shirani2019sub}
F.~Shirani and S~S. Pradhan.
\newblock On the sub-optimality of single-letter coding over networks.
\newblock {\em IEEE Transactions on Information Theory}, 65(10):6115--6135,
  2019.

\bibitem{dueck1981note}
G.~Dueck.
\newblock A note on the multiple access channel with correlated sources
  (corresp.).
\newblock {\em IEEE Transactions on Information Theory}, 27(2):232--235, 1981.

\bibitem{cover1980multiple}
T.~Cover, A.~El~Gamal, and M.~Salehi.
\newblock Multiple access channels with arbitrarily correlated sources.
\newblock {\em IEEE Transactions on Information theory}, 26(6):648--657, 1980.

\bibitem{varsheneya2006lossy}
V.~K. Varsheneya and V.~Sharma.
\newblock Lossy distributed source coding with side information.
\newblock In {\em Proc. National Conference on Communication (NCC), New Delhi},
  2006.

\bibitem{rockafellar1970convex}
R~Tyrrell Rockafellar.
\newblock {\em Convex analysis}.
\newblock Number~28. Princeton university press, 1970.

\bibitem{zhang2007estimating}
Z.~Zhang.
\newblock Estimating mutual information via {K}olmogorov distance.
\newblock {\em IEEE Transactions on Information Theory}, 53(9):3280--3282,
  2007.

\bibitem{csiszarbook}
I.~Csisz\'{a}r and J.~Korner.
\newblock {\em Information Theory: Coding Theorems for Discrete Memoryless
  Systems}.
\newblock Academic Press Inc. Ltd., 1981.

\end{thebibliography}
 \end{document}